\documentclass[twocolumn,aps,english,pra,nofootinbib,superscriptaddress,%
    longbibliography,floatfix]{revtex4-2}

\usepackage{graphicx,graphics,epsfig,subfigure,times,bm,bbm,amssymb,amsmath}
\usepackage{amsfonts,amsthm,mathrsfs,MnSymbol,accents}
\usepackage[matrix,frame,arrow]{xy}
\usepackage[pdftex]{color}
\usepackage{hypernat}
\usepackage{babel}
\usepackage{mathtools}
\usepackage[pdfstartview=FitH]{hyperref}
\usepackage[utf8]{inputenc}
\hypersetup{
    colorlinks=true,   
    linkcolor=cyan,         
    citecolor=magenta,      
    filecolor=magenta,      
    urlcolor=cyan,          
    runcolor=cyan,
    pdfencoding=unicode
}
\usepackage{braket}
\usepackage{enumerate}
\usepackage[capitalise]{cleveref}
\usepackage{tikz}

\newtheorem{mytheorem}{Theorem}
\newtheorem{mylemma}{Lemma}
\newtheorem{mycorollary}{Corollary}
\newtheorem{myproposition}{Proposition}
\newtheorem{myclaim}{Claim}
\newtheorem{mydefinition}{Definition}
\Crefname{myproposition}{Proposition}{Propositions}
\Crefname{mylemma}{Lemma}{Lemmas}

\newcommand{\mc}{\mathcal}

\newcommand{\bes} {\begin{subequations}}
\newcommand{\ees} {\end{subequations}}
\newcommand{\bea} {\begin{eqnarray}}
\newcommand{\eea} {\end{eqnarray}}
\newcommand{\beq}{\begin{equation}}
\newcommand{\eeq}{\end{equation}}


\newcommand{\abs}[1]{\ensuremath{\left|#1\right|}} 
\newcommand{\norm}[1]{\ensuremath{\left\|#1\right\|}} 


\newcommand{\ignore}[1]{}
\newcommand{\ignoreforclass}[1]{}

\newcommand{\mL}{\mathcal{L}}
\newcommand{\mLPT}{\mathcal{L}^{\textrm{PT}}}

\DeclareMathOperator{\rank}{rank}

\def\a{\alpha}
\def\b{\beta}
\def\g{\gamma}
\def\d{\delta}

\def\m{\mu}

\def\r{\rho}

\def\o{\omega}

\DeclareMathOperator{\Tr}{Tr}
\DeclareMathOperator{\id}{id}
\DeclareMathOperator{\real}{Re}
\DeclareMathOperator{\imag}{Im}


\def\>{\rangle}
\def\<{\langle}

\newcommand{\brak}[2]{\<{#1}|{#2}\>}
\newcommand{\ketb}[2]{|{#1}\>\!\<#2|}


\def\dgr{\dagger}

\usepackage{xcolor}

\usepackage[normalem]{ulem}

\begin{document}

\newcommand{\mycite}[1]{[\ref{bib-#1}]}

\title{Which differential equations correspond to the Lindblad equation?}

\author{Victor Kasatkin}
\email{kasatkin@usc.edu}
\affiliation{Center for Quantum Information Science \& Technology,
University of Southern California, Los Angeles, CA 90089, USA}

\author{Larry Gu}
\email{ljgu@usc.edu}
\affiliation{Department of Mathematics,
University of Southern California, Los Angeles, CA 90089, USA}

\author{Daniel A. Lidar}
\email{lidar@usc.edu}
\affiliation{Departments of Electrical Engineering, Chemistry, Physics and Astronomy, and Center for Quantum Information Science \& Technology,
University of Southern California, Los Angeles, CA 90089, USA}

\date{\today}

\begin{abstract}
The Lindblad master equation can always be transformed into a first-order linear ordinary differential equation (1ODE) for the coherence vector. We pose the inverse problem: given a finite-dimensional, non-homogeneous 1ODE, does a corresponding Lindblad equation exist? If so, what are the corresponding Hamiltonian and Lindblad operators? We provide a general solution to this problem, including a complete positivity test in terms of the parameters of the 1ODE. We also derive a host of properties relating the two representations (master equation and 1ODE) of independent interest.
\end{abstract}

\maketitle

\section{Introduction}
\label{sec:intro}

The Gorini-Kossakowski-Lindblad-Sudarshan (GKLS) master equation~\cite{Gorini:1976uq,Lindblad:76} is widely used in modeling the evolution of open quantum systems subject to Markovian dynamics~\cite{Alicki:87,Weiss:book,Breuer:book,rivas-open-2012,Lidar:2019aa}. It can be written in the following form:
\bes
\label{eq:LE}
\begin{align}
\label{eq:LEa}
\dot{\rho} &= \mc{L}\rho\ ,\quad 
\mc{L} = \mc{L}_H + \mc{L}_a\\
\label{eq:L_H}
\mc{L}_H[\cdot] &= -i[H,\cdot] \\
\mc{L}_a[\cdot] &= \sum_{i,j=1}^{d^2-1} a_{ij}\big(F_i \cdot F_j -\frac{1}{2}\{F_j F_i,\cdot\}\big)\ ,
\label{eq:L_a}
\end{align}
\ees 
commonly known simply as the Lindblad equation.\footnote{Sadly, Prof. Lindblad passed away while this work was in preparation. We dedicate this work to his memory.} Here, the dot denotes a time-derivative, $\rho$ is the state (density matrix) of the open system whose Hilbert space is $d$-dimensional, $\mc{L}$ is the Lindbladian, $H=H^\dagger$ is the system Hamiltonian, $\{F_i\}$ is a Hermitian operator basis, and the constants $a_{ij}\in\mathbb{C}$ are the elements of
a positive semidefinite matrix of rates $a$ (we provide complete definitions below). The positivity property of $a$ is crucial since it is necessary and sufficient for the superoperator $\mc{L}$ to generate a completely positive (CP) map. This is known as the GKLS theorem.\footnote{This is actually a pair of theorems that were discovered independently and nearly simultaneously for the finite-dimensional case by Gorini, Kossakowski, and Sudarshan~\cite{Gorini:1976uq}, and the infinite-dimensional case by Lindblad~\cite{Lindblad:76}; for a detailed account of this history see~\cite{Chruscinski:2017ab}.}

When $a\not\ge 0$, \cref{eq:LE} still describes a time-independent Markovian quantum
master equation but no longer generates a CP map. This situation arises, e.g.,
when the initial system-bath state is sufficiently
correlated~\cite{Pechukas:94,Alicki:1995,Pechukas:1995tb,Rodriguez:08,Buscemi:2013,Dominy:2016xy}.

The Lindblad equation is a first-order linear differential equation, and it can easily
be transformed to make this explicit. By expanding the density matrix $\rho$ in
the operator basis $\{F_i\}$ one straightforwardly derives the equivalent
nonhomogeneous form
\beq
\dot{\vec{v}} = (Q + R)\vec{v} + \vec{c}\ ,
\label{eq:2}
\eeq
where the ``coherence vector'' $\vec{v}$ collects the real-valued expansion coefficients of $\rho$, $Q=-Q^T$ is an antisymmetric matrix determined by $H$, while $R$ and $\vec{c}$ are determined by $a$~\cite{Alicki:87}.

The problem we address in this work is the inverse question: 

\emph{When does a general, nonhomogeneous linear first-order differential equation} 
\beq
\dot{\vec{v}} = G\vec{v} + \vec{c}
\label{eq:3}
\eeq 
\emph{describe a Markovian quantum master equation, and in particular a Lindblad equation?}

An elementary necessary condition is that $G$ must have eigenvalues whose real parts are non-positive (otherwise $\lim_{t\to\infty}\|\vec{v}(t)\|$ is unbounded). 
Here we go well beyond this observation and give a complete solution to the problem of
inverting both $H$ and $a$ from $G$ and $\vec{c}$. We also reformulate the condition
for complete positivity in terms of $G$ and $\vec{c}$.

Our results apply in the finite-dimensional setting, and we leave the inverse problem for the infinite-dimensional or unbounded settings open.

We start by stating our main result -- the solution of the inverse problem along with the complete positivity condition -- in \cref{sec:main-result}. We devote the rest of this work to providing a general background, a derivation and proof of the solution, and examples. In more detail, the background is given in \cref{sec:background}, where we introduce the representation of general superoperators over a ``nice'' operator basis such as the Gell-Mann matrices, use it to express the Markovian quantum master equation over real vector spaces, and derive a variety of general properties of $Q$, $R$, and $\vec{c}$. In \cref{sec:sol}, we derive the solution of the inverse problem and provide its proof. 
In \cref{sec:examples}, we illustrate a few aspects of the general theory in terms of examples. Given the question we pose in the title of this work, it is natural to ask about the probability that a randomly selected pair $(G,\vec{c})$ will give rise to a valid Lindblad equation. We provide a partial answer in \cref{sec:rarity}, where we point out that Lindbladians are extremely rare in the space of random matrices. In \cref{sec:prior} we discuss the relationship between our work and previous results.
We conclude in \cref{sec:conc} and provide additional supporting material in the appendices, including the analytical solution of \cref{eq:3} in \cref{app}.

\section{Main result}
\label{sec:main-result}
Consider a finite-dimensional Hilbert space $\mc{H}$, with $d=\dim(\mc{H})<\infty$, and the Banach space (a complete, normed vector space) $\mc{B}(\mc{H})$ of operators acting on $\mc{H}$ equipped with the Hilbert-Schmidt inner product $\langle A,B\rangle \equiv \Tr(A^\dagger B)$. Throughout this work, we use the following definitions.

\begin{mydefinition}
  \label{def:nice-operator-basis}
    A ``nice operator basis'' for $\mc{H}$ is a set $\{F_j\}_{j=0}^J\in\mc{B}(\mc{H})$, $J \equiv d^2-1$, such that 
    $F_0 = \frac{1}{\sqrt{d}}I$ ($I$ denotes the identity operator in $\mc{B}(\mc{H})$), $F_j = F_j^\dag$ and $\Tr F_j=0$ for $1\le j \le J$, and~\cite{Fano:1957wf}:
\beq
\label{eq:F-reqs}
 \<F_j, F_k\>=\Tr(F_j F_k)=\d_{jk} \ , \quad 0\le j \le J \ .
\eeq 
\end{mydefinition}

\begin{mydefinition}
Any equation of the form 
\cref{eq:LE}, where $H=H^\dag$ is a Hermitian operator in $B(\mathcal{H})$ and 
$a=a^\dag$ is a Hermitian $J\times J$ matrix, is a ``Markovian quantum master equation''. In this case, the superoperator $\mathcal{L}$ in \cref{eq:LE} is called a ``Liouvillian''.

When, in addition, $a \geq 0$, \cref{eq:LE} is a
``Lindblad master equation'', ``Lindblad equation'',
or ``completely positive Markovian quantum master equation''.
In this case, the superoperator $\mathcal{L}$ is called a ``Lindbladian'' or ``completely positive Liouvillian''.

The superoperator $\mc{L}_a$ is called the \emph{dissipator}. 

\end{mydefinition}

Whenever we refer to \cref{eq:LE} or the superoperator $\mathcal{L}$ without
explicitly specifying complete positivity, we assume the former case, i.e.,
we only assume the conditions $H = H^\dag$, $a = a^\dag$.

Our main result is the following:
\begin{mytheorem}
\label{thm:main-result}
  For any $J\times J$ matrix $G$ and any vector $\vec{c}$ of length $J$ with real
  coefficients there is a pair $(H, a)$ describing a Markovian quantum master equation 
  \cref{eq:LE} which transforms to \cref{eq:3} with these $G$ and $\vec{c}$.
  If, in addition to the above,
  we require $H$ to be traceless, then such a pair $(H, a)$ is unique and
  can be computed as follows:
  \bes
    \label{eq:GctoHa}
    \begin{align}
      \label{eq:GctoH}
      H &= \frac1{2id}\sum_{m,n=1}^{J} G_{nm} \left[F_m,F_n\right], \\
      \label{eq:Gctoa}
      a_{mn} &=
        \sum_{i=1}^{J} \Tr\bigg[\bigg(
            \sum_{j=1}^{J} G_{ij} F_{j} + c_{i} I
          \bigg)F_{m}F_{i}F_{n}\bigg].
    \end{align}
  \ees
  Moreover, the matrix $a=\{a_{mn}\}$ is positive semidefinite
  [i.e., \cref{eq:LE} is a Lindblad master equation]
  if and only if (iff) for all traceless $B\in \mc{B}(\mc{H})$:
  \begin{equation}
    \label{eq:Gc.positive}
    \sum_{i=1}^{J} \Tr\bigg[\bigg(
        \sum_{j=1}^{J} G_{ij} F_{j} + c_{i} I
      \bigg)B^\dagger F_{i}B\bigg] \geq 0\ .
  \end{equation}
\end{mytheorem}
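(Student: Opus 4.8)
The plan is to treat the forward map $(H,a)\mapsto(G,\vec c)$ as a linear map between finite-dimensional real vector spaces and to exhibit explicit formulas that invert it. The cornerstone is the observation that the operator appearing in \cref{eq:Gctoa,eq:Gc.positive},
\beq
Y_i := \sum_{j=1}^J G_{ij}F_j + c_i I,
\eeq
is nothing but the Hilbert--Schmidt adjoint (Heisenberg-picture Liouvillian) $\mc{L}^\dagger$ applied to $F_i$. Indeed, expanding $\mc{L}^\dagger[F_i]$ in the nice basis and using the forward identifications $G_{ij}=\Tr(F_i\mc{L}[F_j])$ and $c_i=\frac1d\Tr(F_i\mc{L}[I])$ (together with trace preservation, $\Tr\mc{L}[\,\cdot\,]=0$), the $j\ge1$ components reproduce $\sum_j G_{ij}F_j$ while the $j=0$ component reproduces $c_iI$, so $Y_i=\mc{L}^\dagger[F_i]$. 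The entire calculation then rests on two completeness identities for a nice operator basis, $\sum_{m=0}^J F_m A F_m=\Tr(A)\,I$ and $\sum_{m=0}^J F_m^2=dI$, which I would record first.

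For the inversion formulas I would verify \cref{eq:GctoH,eq:Gctoa} by direct substitution. Rewriting $\sum_{m,n}G_{nm}[F_m,F_n]=\sum_m[F_m,\mc{L}[F_m]]$ (using $\mc{L}[F_m]=\sum_n G_{nm}F_n$), I split $\mc{L}=\mc{L}_H+\mc{L}_a$. The Hamiltonian part yields $\sum_m[F_m,[H,F_m]]=2\Tr(H)I-2dH=-2dH$ for traceless $H$, giving the factor $2id$; the dissipator part $\sum_m[F_m,\mc{L}_a[F_m]]$ vanishes identically after the three terms of the dissipator are reduced with the two completeness identities (one term dies because $\Tr F_i=0$, the remaining two cancel). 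This proves \cref{eq:GctoH}. For \cref{eq:Gctoa} I would compute $\sum_i\Tr(\mc{L}^\dagger[F_i]F_mF_iF_n)$: the Hamiltonian part of $\mc{L}^\dagger$ contributes zero because each resulting sum $\sum_iF_i(\cdots)F_i$ collapses via $\sum_iF_iAF_i=\Tr(A)I$ to a factor $\Tr F_m=0$ or $\Tr F_n=0$, while in the dissipator part only the ``sandwich'' term survives and collapses to $\sum_{pq}a_{pq}\delta_{pm}\delta_{qn}=a_{mn}$, the other two terms again vanishing by $\Tr F_m=0$ or $\Tr F_n=0$. Hermiticity of $H$ and of $a$ is immediate from these forms.

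Existence and uniqueness then follow from a dimension count. The two computations above show that the inverse formulas $\Psi:(G,\vec c)\mapsto(H,a)$ compose with the forward map $\Phi$ to give $\Psi\circ\Phi=\mathrm{id}$ on the space of pairs with $H$ traceless Hermitian and $a$ Hermitian; hence $\Phi$ is injective. Since both the domain (traceless Hermitian $H$ plus Hermitian $a$) and the codomain (real $G$ plus real $\vec c$) have dimension $J+J^2$, injectivity forces bijectivity, so $\Psi=\Phi^{-1}$ and every real $(G,\vec c)$ has a unique traceless-$H$ preimage; dropping the traceless restriction, existence persists because $H\mapsto H+\lambda I$ leaves $\mc{L}$ unchanged.

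Finally, the complete-positivity criterion is a short corollary of \cref{eq:Gctoa}. Writing $a\ge0$ as $\sum_{m,n}\bar w_m a_{mn}w_n\ge0$ for all $w\in\mathbb{C}^J$ and inserting the formula for $a_{mn}$, the sums over $m,n$ assemble into $B=\sum_n w_nF_n$ and $B^\dagger=\sum_m\bar w_mF_m$, yielding $\sum_i\Tr(Y_i B^\dagger F_iB)\ge0$. Since $w\mapsto B$ is a bijection onto traceless operators, this is exactly \cref{eq:Gc.positive}. I expect the main obstacle to be the bookkeeping in the second paragraph: getting the index placements right so that the completeness identities apply, and confirming the precise cancellations, rather than any conceptual difficulty; the dimension-count shortcut sidesteps a separate and messier direct proof of surjectivity.
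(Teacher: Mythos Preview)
Your proof is correct and takes a genuinely different, more direct route than the paper. The paper proves \cref{thm:main-result} by building a six-vertex commutative diagram of linear bijections (\cref{thm:forward.inverse}): it passes from $(H,a)$ through two intermediate four-index tensors $x,\tilde x$ satisfying certain symmetry and trace constraints, then to the superoperator $\mathcal{L}$, and finally to $(G,\vec c)$. The explicit formulas \cref{eq:GctoHa} are then obtained in \cref{lm:GctoHa} by composing several of these maps, and the CP criterion is derived in \cref{lm:thm:main-result:part3} exactly as you do. By contrast, you bypass the tensor machinery entirely: your key observation that $\sum_j G_{ij}F_j+c_iI=\mathcal{L}^\dagger[F_i]$, combined with the completeness identity $\sum_{m=0}^J F_mAF_m=\Tr(A)I$ (and the harmless extension of the $m$- and $i$-sums to include $m,i=0$, justified by $[F_0,\cdot]=0$ and $\mathcal{L}^\dagger[I]=0$), lets you verify $\Psi\circ\Phi=\mathrm{id}$ in a single computation. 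The dimension count $J+J^2=J^2+J$ and the CP argument are the same in both approaches. What the paper's approach buys is a richer structural picture (the extra vertices $\mathcal{V}_2,\mathcal{V}_3$ are used elsewhere, e.g., in \cref{sec:rarity} and in the overlap discussion of \cref{sec:prior}); what your approach buys is economy --- it proves exactly the theorem with no scaffolding. One small point worth making explicit when you write this up: the forward map $\Phi$ lands in the \emph{real} space $M(J,\mathbb{R})\oplus\mathbb{R}^J$ because $\mathcal{L}$ is Hermiticity-preserving (the paper records this as \cref{prop:herm-preserving,prop:3}); you implicitly use this when you invoke the dimension count.
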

Condition~\eqref{eq:Gc.positive} is equivalent to
the positive semidefiniteness
of $a$, which is simpler to check. Thus, in practice, it is preferable to first
compute $a$ using \cref{eq:Gctoa} and check the sign of its smallest eigenvalue, rather
than to work directly with \cref{eq:Gc.positive}.

Note that adding a constant, i.e., a term of the form $cI$ with $c\in\mathbb{R}$,
to $H$ does not change the Liouvillian $\mathcal{L}$.
Therefore, if the requirement that $H$ is traceless is not imposed, $H$ can
only be recovered from $G$ up to an additive constant.

\section{Background}
\label{sec:background}

Let $M(d, F)$ denote the vector space of $d\times d$ matrices with coefficients
in $F$, where $F \in \{\mathbb{R}, \mathbb{C}\}$. 
For our purposes it suffices to identify $\mc{B}(\mc{H})$ with $M(d,\mathbb{C})$. Quantum states are represented by density operators $\rho\in \mc{B}_+(\mc{H})$ (positive trace-class operators acting on $\mc{H}$) with unit trace: $\Tr\rho =1$.
Elements of $\mathcal{B}[\mathcal{B}(\mc{H})]$, i.e., linear transformations $\mc{E}:\mathcal{B}(\mc{H})\to \mathcal{B}(\mc{H})$, are called superoperators, or maps. 

Complete positivity of a superoperator $\mc{E}$ is equivalent to the statement that  $\mc{E}$ has a Kraus representation~\cite{Kraus:83}: $\forall X\in \mc{B}(\mc{H})$, 
\beq
\mc{E}(X) = \sum_i K_i X K_i^\dagger\ ,
\label{eq:KOSR}
\eeq
where the $\{K_i\}$ are called Kraus operators. When they satisfy $\sum_i K_i^\dagger K_i = I$, the map $\mc{E}$ is trace preserving. 

A quantum dynamical semigroup is defined as a one-parameter family of strongly continuous CP maps, satisfying $\Lambda_0=I$ (the identity operator), and the Markov property $\Lambda_s\Lambda_t = \Lambda_{s+t}$ for all $s,t\ge 0$.\footnote{Strong continuity means that $\forall X\in\mc{B}(\mc{H})$, $\|\Lambda_t X-X\|\to 0$ as $t\searrow 0$, i.e., the map $\Lambda_t$ is continuous in the strong operator topology.} 

We can more formally state the GKLS theorem(s)~\cite{Gorini:1976uq,Lindblad:76}
 as follows: a superoperator $\mc{L}:\mc{B}(\mc{H})\to \mc{B}(\mc{H})$ is the generator of a quantum dynamical semigroup iff it is of the form given in \cref{eq:LE}, with $a\ge 0$. I.e., when $\mc{L}$ is in the form of \cref{eq:LE}, the solution of $\dot{\rho} = \mc{L}\rho$ is $\rho(t) = \Lambda_t\rho(0)$, where $\Lambda_t = e^{\mc{L}t}$ is an element of a quantum dynamical semigroup.

\subsection{General superoperators over a nice operator basis}
\label{ss:gen-supop}

Given a nice operator basis, let us ``coordinatize'' the  operator $X\in\mc{B}(\mc{H})$ in the nice operator basis as
\begin{equation}
  \label{eq:coordinatization1}
  X=\sum_{j=0}^{J}\boldsymbol{X}_{j}F_{j},
\end{equation}
where we use the notation $\boldsymbol{X}=\{\boldsymbol{X}_{j}\}$ for the vector of coordinates of the operator $X$ (we interchangeably use the $\vec{X}$ and $\boldsymbol{X}$ notations to denote a vector). I.e., for any $X\in\mc{B}(\mc{H})$: 
\beq
  \label{eq:coordinatization2}
  \boldsymbol{X}_i = \<F_i,X\> = \Tr(F_i X)\ .
\eeq
In particular, applying \cref{eq:coordinatization1,eq:coordinatization2} to $\ketb{k}{l}$ we obtain
\begin{equation}
  \label{eq:coordinatization3}
  \ketb{k}{l} = \sum_{i=0}^{J} (F_{i})_{lk} F_i,
\end{equation}
or
\begin{equation}
  \label{eq:coordinatization4}
  \sum_{i=0}^{J} F_{i\,lk} F_{i\,k'l'} = \delta_{ll'}\delta_{kk'}.
\end{equation}
\begin{myproposition}
\label{prop:0}
The operator $X\in \mc{B}(\mc{H})$ is Hermitian iff $\boldsymbol{X}\in \mathbb{R}^{d^{2}}$ in a nice operator basis.
\end{myproposition}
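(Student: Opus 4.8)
The plan is to prove both directions of the equivalence directly from the coordinatization formula \cref{eq:coordinatization2}, $\boldsymbol{X}_i = \Tr(F_i X)$, using the single structural fact supplied by \cref{def:nice-operator-basis}: every basis element is Hermitian, since $F_0 = \frac{1}{\sqrt{d}}I$ and $F_j = F_j^\dagger$ for $1\le j \le J$, so $F_j = F_j^\dagger$ for all $0\le j \le J$. No other property of the nice operator basis (orthonormality, tracelessness) is needed for this statement, though orthonormality is what makes \cref{eq:coordinatization2} the correct inversion of \cref{eq:coordinatization1}.

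For the ``if'' direction I would suppose every coordinate $\boldsymbol{X}_j$ is real and expand $X$ via \cref{eq:coordinatization1}, writing $X=\sum_{j=0}^{J}\boldsymbol{X}_j F_j$ as a real linear combination of Hermitian operators. Taking the adjoint term by term, and using $(\lambda F_j)^\dagger = \bar{\lambda}F_j = \lambda F_j$ when $\lambda\in\mathbb{R}$, gives $X^\dagger = X$, so $X$ is Hermitian.

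For the ``only if'' direction I would instead compute the complex conjugate of each coordinate. Combining the identity $\overline{\Tr M} = \Tr(M^\dagger)$ with cyclicity of the trace and the two Hermiticity hypotheses $X = X^\dagger$ and $F_j = F_j^\dagger$, the chain is
\[
\overline{\boldsymbol{X}_j} = \overline{\Tr(F_j X)} = \Tr\big((F_j X)^\dagger\big) = \Tr(X^\dagger F_j^\dagger) = \Tr(X F_j) = \Tr(F_j X) = \boldsymbol{X}_j,
\]
so each coordinate equals its own conjugate and is therefore real, i.e.\ $\boldsymbol{X}\in\mathbb{R}^{d^2}$ (the index $j$ ranges over the $d^2 = J+1$ values $0,\dots,J$).

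I do not expect any genuine obstacle here: the statement is essentially immediate once \cref{eq:coordinatization2} is in hand. The only points that require care are the elementary identity $\overline{\Tr M} = \Tr(M^\dagger)$, which drives the ``only if'' direction, and the bookkeeping observation that Hermiticity of the coordinates hinges on \emph{every} $F_j$ being Hermitian, including $F_0 = \frac{1}{\sqrt{d}}I$; both are guaranteed by \cref{def:nice-operator-basis}.
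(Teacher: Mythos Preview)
Your proof is correct and matches the paper's own argument essentially line for line: the paper also shows $\boldsymbol{X}_i^* = \boldsymbol{X}_i$ via conjugate symmetry of the Hilbert--Schmidt inner product (which unpacks to your trace chain), and handles the converse by noting that a real linear combination of Hermitian $F_j$ is Hermitian.
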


\begin{proof}
If $X$ is Hermitian, then $\boldsymbol{X}_i^* =  \<F_i,X\>^* = \<X,F_i\> = \Tr(X F_i) = \Tr(F_i X) = \boldsymbol{X}_i$. Vice-versa, if all the coefficients $\boldsymbol{X}_i$ are real, then the operator $X = \sum_{j=0}^J \boldsymbol{X}_j F_j$ is Hermitian because it is a sum of Hermitian operators $F_j$ with real coefficients.
\end{proof}

More generally, for Hermitian matrices, the inner product becomes $\langle A,B\rangle = \Tr(AB)$. Such matrices have $d^2=J+1$ real parameters [$d$ real diagonal elements plus $(d^2-d)/2$ independent complex off-diagonal elements], so they can be represented in terms of vectors in $\mathbb{R}^{d^{2}}$. Moreover, if $A$ and $B$ are both Hermitian operators then: 
\beq
\langle A,B\rangle=\sum_{{j,k=0}}^{J}\boldsymbol{A}_{j}\boldsymbol{B}_{k}\Tr(F_{j}F_{k})=\sum_{{j,k=0}}^{J}\boldsymbol{A}_{j}\boldsymbol{B}_{k}\delta_{jk}=\boldsymbol{A}^{t}\boldsymbol{B}\ ,
\label{eq:<A,B>}
\eeq
so that in these coordinates, the inner product $\langle A,B\rangle$
corresponds to the standard inner product in $\mathbb{R}^{d^{2}}$. 

The matrix elements of a superoperator $\mc{E}: \mc{B}(\mc{H})\to \mc{B}(\mc{H})$ are given by 
\beq
\boldsymbol{\mathcal{E}}_{ij}= \<F_i,\mathcal{E}(F_j)\> = \Tr[F_{i}\mathcal{E}(F_{j})]\ .
\label{eq:Eij}
\eeq

The action of a general (not necessarily CP) superoperator $\mc{E}\in \mathcal{B}[\mathcal{B}(\mc{H})]$ can always be represented for any $A\in\mathcal{B}(\mc{H})$ as
\begin{equation}
\label{eq:Ph-gen-supop}
\mc{E}(A) = \sum_{{i,j=0}}^{J} c_{ij} F_i A F_{j} ,
\end{equation}
where $c_{ij}\in\mathbb{C}$ (see \cref{sec:supop.FAF}). Thus the matrix $c = \{c_{ij}\}$ specifies $\mc{E}$ in the given basis $\{F_i\}$. CP maps are a special case of Eq.~\eqref{eq:Ph-gen-supop}, where $c$ is positive semidefinite. 

For superoperators we denote the Hilbert-Schmidt adjoint of $\mc{E}$ by $\mc{E}^\dagger$, which is defined via
\beq
\langle \mc{E}^\dag(A),B\rangle \equiv \langle A,\mc{E}(B)\rangle\quad \forall A,B\in\mc{B}(\mc{H}) .
		\label{eq:hs-adjoint}
\eeq

\begin{mydefinition}
\label{def:Herm-supop}
$\mathcal{E}$ is \emph{Hermiticity-preserving} iff:
\beq
\mathcal{E}^\dag\left(A^{\dagger}\right) = \mathcal{E}(A)\ \quad \forall A\in \mc{B}(\mc{H}) .
\label{eq:Herm-pres}
\eeq
$\mc{E}$ is \emph{Hermitian} iff 
\beq
\mathcal{E}^\dag\left(A\right) = \mathcal{E}(A)\ \quad \forall A\in \mc{B}(\mc{H}) .
\eeq
\end{mydefinition}

Let us find explicit formulas for $[\mc{E}(A^\dag)]^\dag$ and $\mc{E}^\dagger(A)$ in terms
of the general representation of a superoperator given by \cref{eq:Ph-gen-supop}.
First, from Eq.~\eqref{eq:Ph-gen-supop} we have
\begin{equation}
   [\mc{E}(A^\dag)]^\dag
  = \left(\sum_{ij} c_{ij} F_i A^\dag F_j\right)^\dag
  = \sum_{ij} c^*_{ij} F_j A F_i .
  \label{eq:supop-adjoint0}
\end{equation}
Second, substituting \cref{eq:Ph-gen-supop} into
the complex conjugate of \cref{eq:hs-adjoint} yields:
\bes
\begin{align}
  &\left<B,\mc{E}^\dag(A)\right> = \left<\mc{E}(B), A\right>
  = \Tr\left(\sum_{ij} (c_{ij} F_i B F_j)^\dag A\right) \\
  \;&= \Tr\left(\sum_{ij} c_{ij}^* F_j B^\dag F_i A\right)
  = \Tr\left(B^\dag\sum_{ij} c_{ij}^* F_i A F_j \right) \\
  \;&= \<B, \sum_{ij} c_{ij}^* F_i A F_j \>.
\end{align}
\ees
Since this equality holds for every $B$, we have:
\beq
\label{eq:supop-adjoint1}
\mc{E}^\dag(A) = \sum_{ij} c_{ij}^* F_i A F_j .
\eeq

It will turn out to be useful to have another representation of $\mc{E}^\dag(A)$. Consider two sets of arbitrary operators $\{L_p\}$ and $\{M_q\}$, which we can expand in the nice operator basis as $L_p = \sum_{i=0}^J l_{ip}F_i$ and $M_q = \sum_{i=0}^J m_{iq}F_i$. In addition, let $b$ be some arbitrary matrix in $M(d^2,\mathbb{C})$ and let $c = l b m^\dag$. Then, using \cref{eq:Ph-gen-supop}:
\begin{subequations}
\begin{align}
\mc{E}(A) &= \sum_{i,j=0}^J [\sum_{p,q} l_{ip} b_{pq} m^*_{jq}] F_i A F_j \\
&= \sum_{p,q} b_{pq} \left(\sum_{i=0}^J l_{ip} F_i\right) A \left(\sum_{j=0}^J m_{jq}^* F_j\right) \\
&= \sum_{p,q} b_{pq} L_p A M_q^\dag\ ,
\end{align}
\end{subequations}
and, using \cref{eq:supop-adjoint1}:
\begin{subequations}
\begin{align}
\mc{E}^\dag(A) &= \sum_{i,j=0}^J [\sum_{p,q} l_{ip} b_{pq} m^*_{jq}]^* F_i A F_j \\
&= \sum_{p,q} b_{pq}^* \left(\sum_{i=0}^J l_{ip}^* F_i\right) A \left(\sum_{j=0}^J m_{jq} F_j\right) \\
&= \sum_{p,q} b_{pq}^* L_p^\dag A M_q\ .
\label{eq:supop-adjoint2}
\end{align}
\end{subequations}

\begin{myproposition}
$\mathcal{E}$ is simultaneously Hermiticity-preserving and Hermitian iff $c$ in \cref{eq:Ph-gen-supop} is real-symmetric, i.e., $c_{ij}=c_{ji}=c^*_{ij}$ $\forall i,j$. 
\end{myproposition}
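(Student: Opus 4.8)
The plan is to translate each of the two conditions in \cref{def:Herm-supop} into an algebraic condition on the coefficient matrix $c$ of the representation \cref{eq:Ph-gen-supop}, and then to intersect them. The two explicit formulas derived just above, \cref{eq:supop-adjoint0} for $[\mc{E}(A^\dagger)]^\dagger$ and \cref{eq:supop-adjoint1} for $\mc{E}^\dagger(A)$, are exactly the ingredients needed: the Hermiticity-preserving condition is a statement about $[\mc{E}(A^\dagger)]^\dagger$, while the Hermitian condition is a statement about $\mc{E}^\dagger(A)$, so the entire argument reduces to substituting these formulas and comparing coefficients against $\mc{E}(A)=\sum_{ij} c_{ij} F_i A F_j$.

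First I would handle the Hermitian condition $\mc{E}^\dagger(A)=\mc{E}(A)$. Substituting \cref{eq:Ph-gen-supop,eq:supop-adjoint1}, it reads $\sum_{ij} c_{ij}^* F_i A F_j = \sum_{ij} c_{ij} F_i A F_j$ for all $A$, which forces $c_{ij}^*=c_{ij}$, i.e., $c$ is real. Next, for the Hermiticity-preserving condition, whose left-hand side $[\mc{E}(A^\dagger)]^\dagger$ is computed in \cref{eq:supop-adjoint0} as $\sum_{ij} c_{ij}^* F_j A F_i$, I would relabel the dummy indices $i\leftrightarrow j$ to obtain $\sum_{ij} c_{ji}^* F_i A F_j$, and equate with $\mc{E}(A)=\sum_{ij}c_{ij} F_i A F_j$ to force $c_{ji}^*=c_{ij}$, i.e., $c$ is Hermitian. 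Imposing both simultaneously gives $c_{ij}=c_{ij}^*$ (real) together with $c_{ij}=c_{ji}^*$ (Hermitian); combining them yields $c_{ij}=c_{ji}^*=c_{ji}$, so $c_{ij}=c_{ji}=c_{ij}^*$, i.e., $c$ is real-symmetric. The converse is immediate: if $c$ is real-symmetric then $c_{ij}^*=c_{ij}$ makes $\mc{E}$ Hermitian via \cref{eq:supop-adjoint1}, and $c_{ji}^*=c_{ji}=c_{ij}$ makes it Hermiticity-preserving via \cref{eq:supop-adjoint0}.

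The one step that requires genuine justification---rather than mere substitution---is the inference ``$\sum_{ij} \alpha_{ij} F_i A F_j = \sum_{ij}\beta_{ij} F_i A F_j$ for all $A$ implies $\alpha_{ij}=\beta_{ij}$''. This is the uniqueness of the coefficients in the representation \cref{eq:Ph-gen-supop}, which rests on the linear independence of the $(d^2)^2$ superoperators $A\mapsto F_i A F_j$. Since there are exactly $(d^2)^2$ of them and they span $\mathcal{B}[\mathcal{B}(\mc{H})]$ (the content of \cref{sec:supop.FAF}), they form a basis and the coefficients are unique; I would invoke this fact at each coefficient-matching step. I expect this to be the only real obstacle, and it is already supplied by the construction underlying \cref{eq:Ph-gen-supop}; the remainder is bookkeeping with complex conjugates and index relabeling.
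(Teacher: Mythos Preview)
Your proposal is correct and follows essentially the same approach as the paper: equate the right-hand sides of \cref{eq:supop-adjoint0} and \cref{eq:supop-adjoint1} with $\mc{E}(A)$ (as dictated by \cref{def:Herm-supop}) and read off the resulting conditions on $c$. The paper's proof is a single sentence that leaves the coefficient-matching implicit, whereas you spell out the index relabeling and, more carefully than the paper, explicitly invoke the linear independence of the superoperators $A\mapsto F_iAF_j$ (supplied by \cref{sec:supop.FAF}) to justify comparing coefficients.
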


\begin{proof}
This follows immediately by equating the right-hand sides of \cref{eq:supop-adjoint0,eq:supop-adjoint1}, both of which are then equal to $\mc{E}(A)$ by \cref{def:Herm-supop}.
\end{proof}

\begin{myproposition}
\label{prop:herm-preserving}
$\mathcal{E}$ is Hermiticity-preserving iff $\boldsymbol{\mathcal{E}}\in M(d^2,\mathbb{R})$.
\end{myproposition}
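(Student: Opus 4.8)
The plan is to reduce the statement entirely to \cref{prop:0}, which already characterizes Hermiticity at the level of coordinates. The key observation is that the $j$-th column of the matrix $\boldsymbol{\mathcal{E}}$ is nothing but the coordinate vector of the operator $\mathcal{E}(F_j)$: by \cref{eq:Eij} we have $\boldsymbol{\mathcal{E}}_{ij} = \Tr[F_i \mathcal{E}(F_j)] = \langle F_i, \mathcal{E}(F_j)\rangle$, which by \cref{eq:coordinatization2} is precisely the $i$-th coordinate of $\mathcal{E}(F_j)$ in the nice operator basis. Since every basis element $F_j$ (including $F_0 = \tfrac{1}{\sqrt d}I$) is Hermitian, the whole question collapses to whether the operators $\{\mathcal{E}(F_j)\}_{j=0}^{J}$ are Hermitian, and \cref{prop:0} converts Hermiticity of each $\mathcal{E}(F_j)$ into the reality of the corresponding column of $\boldsymbol{\mathcal{E}}$.

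For the forward direction I would first record that the condition in \cref{def:Herm-supop} is equivalent to $\mathcal{E}$ sending Hermitian operators to Hermitian operators, i.e., $[\mathcal{E}(A)]^\dagger = \mathcal{E}(A^\dagger)$ for all $A$; one direction is immediate, and the converse follows by writing $A = A_1 + iA_2$ with $A_1 = \tfrac{1}{2}(A+A^\dagger)$ and $A_2 = \tfrac{1}{2i}(A-A^\dagger)$ both Hermitian and invoking linearity. Granting this, if $\mathcal{E}$ is Hermiticity-preserving then each $\mathcal{E}(F_j)$ is Hermitian, so by \cref{prop:0} every column of $\boldsymbol{\mathcal{E}}$ is real, whence $\boldsymbol{\mathcal{E}} \in M(d^2,\mathbb{R})$.

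Conversely, if $\boldsymbol{\mathcal{E}}\in M(d^2,\mathbb{R})$, then reality of each column means, again by \cref{prop:0}, that every $\mathcal{E}(F_j)$ is Hermitian. For an arbitrary Hermitian $A$, \cref{prop:0} supplies real coordinates $\boldsymbol{A}_j$, so that $\mathcal{E}(A) = \sum_{j=0}^{J} \boldsymbol{A}_j\, \mathcal{E}(F_j)$ is a real linear combination of Hermitian operators and is therefore Hermitian; extending to arbitrary $A$ via the same Hermitian/anti-Hermitian decomposition recovers $\mathcal{E}(A^\dagger) = [\mathcal{E}(A)]^\dagger$. I expect no genuine obstacle in the core argument, which is simply a columnwise application of \cref{prop:0}. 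The only points requiring care are the preliminary reconciliation of the adjoint-form statement in \cref{def:Herm-supop} with the ``maps Hermitian to Hermitian'' formulation, and the bookkeeping of the index $j=0$, so that the conclusion genuinely concerns the full $d^2\times d^2$ matrix and not merely its traceless block.
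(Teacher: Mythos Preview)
Your proof is correct and follows essentially the same route as the paper's: both hinge on the observation that the $j$-th column of $\boldsymbol{\mathcal{E}}$ is the coordinate vector of $\mathcal{E}(F_j)$, with the paper verifying reality via the one-line computation $\boldsymbol{\mathcal{E}}_{ij}^* = \Tr\bigl([\mathcal{E}(F_j)]^\dagger F_i^\dagger\bigr) = \boldsymbol{\mathcal{E}}_{ij}$ while you invoke \cref{prop:0} explicitly. The only cosmetic difference is that the paper's converse handles arbitrary $A$ in one step by expanding $[\mathcal{E}(A)]^\dagger$ in coordinates, whereas you first treat Hermitian $A$ and then extend via the decomposition $A = A_1 + iA_2$.
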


\begin{proof}
Using Eq.~\eqref{eq:Eij}, we obtain:
\beq
\label{eq:E-real}
\boldsymbol{\mathcal{E}}_{ij}^*=\Tr\left(\left[\mathcal{E}\left(F_{j}\right)\right]^{\dagger}F_{i}^{\dagger}\right)=\Tr\left(\mathcal{E}\left(F_{j}\right)F_{i}\right)=\boldsymbol{\mathcal{E}}_{ij}\ .
\eeq

In the other direction, assume $\boldsymbol{\mathcal{E}}\in M(d^2,\mathbb{R})$. 
$\mathcal{E}(A)$ is represented in a nice operator basis as
\begin{align}
  \mathcal{E}(A) &=
    \sum_{{i,j=0}}^{J} \boldsymbol{\mathcal{E}}_{ij} \boldsymbol{A}_j F_i\ ,
\end{align}
so that
\bes
\begin{align}
  [\mathcal{E}(A)]^\dagger
  &= \left(\,\,\sum_{{i,j=0}}^{J} 
    \boldsymbol{\mathcal{E}}_{ij} \boldsymbol{A}_j F_i\right)^\dagger
  = \sum_{{i,j=0}}^{J}
    \boldsymbol{\mathcal{E}}_{ij}^* \boldsymbol{A}_j^* F_i^\dagger \\
  &= \sum_{{i,j=0}}^{J}
    \boldsymbol{\mathcal{E}}_{ij} \boldsymbol{A}_j^* F_i 
  = \mathcal{E}(A^\dagger)\ .
\end{align}
\ees

\end{proof}

Thus, after coordinatization, a Hermiticity-preserving superoperator $\mathcal{E}$ can be seen as a real-valued $d^2 \times d^2$-dimensional matrix $\boldsymbol{\mathcal{E}}$.

\subsection{General properties of a Liouvillian}
\label{ss:real-basis}

Since $a$ in \cref{eq:LE} is Hermitian, it can be written as
$a = u^{\dag}\tilde{\gamma}u$ where $u$ is unitary and $\tilde{\gamma}$ is diagonal
with the eigenvalues $\{\gamma_{\a}\}_{\alpha=1}^{J}$ of $a$ on its diagonal.
The $\gamma_{\a}$'s are always real when \cref{eq:LE} is a Markovian quantum master equation. When $a\geq 0$, i.e., when \cref{eq:LE} is a Lindblad equation, they are non-negative. Defining
\beq
L_{\a} = \sum_{j = 1}^{J} u^*_{\a j}F_j\ ,
\label{eq:L_alpha}
\eeq
we have $F_i = \sum_{\alpha = 1}^{J}u_{\a i}L_{\a}$. Then:
\bes
\begin{align}
  \sum_{{i,j=1}}^{J}a_{ij} F_i\rho F_j^\dag
  &= \sum_{{\a,\b=1}}^{J} L_\a \rho L_\b^\dag
    \sum_{{i,j=1}}^{J} u_{\a i} a_{ij} u^\dag_{\b l} \notag \\
&  = \sum_{{\a=1}}^{J} {\g}_\a L_\a\rho L_\a^\dag\ , \\
  \sum_{{i,j=1}}^{J} a_{ij} F_j^\dag F_i
  &= \sum_{{\a,\b=1}}^{J} L_\b^\dag L_\a \sum_{{i,j=1}}^{J} u_{\a i} a_{ij} u^\dgr_{\b l}\notag \\
&  = \sum_{{\a=1}}^{J} {\g}_\a L_\a^\dag L_\a \ .
\end{align}
\ees
Therefore the Markovian quantum master equation [\cref{eq:LE}] becomes:
\bes
\label{eq:Lindblad-eq}
\begin{align}
\dot{\rho} =  {\cal L}\rho &= \mc{L}_H\rho+ \mc{L}_a\rho\\
\mc{L}_H[\cdot] &= -i[H,\cdot]\ ,
\label{eq:L_H2} \\
\mc{L}_a[\cdot] &= \sum_{\a=1}^{J} \g_\a\left( L_{\alpha} \cdot L^{\dag}_{\alpha} -\frac{1}{2}\{L^{\dag}_{\alpha}L_{\alpha},\cdot\}\right)\ .
\label{eq:L_D}
\end{align}
\ees
The $L_{\alpha} \in \mc{B}(\mc{H})$ are called \emph{Lindblad operators}. 

We remark that the representation of the dissipator $\mL_a$ in
the form~\eqref{eq:L_D} need not be unique:
for example, if $a = I$ (the identity matrix), it can be written
as $u^\dag I u$ for any unitary $u$, but [as can be seen from
\cref{eq:L_alpha}] different choices of $u$ lead to different Lindblad operators. More generally, $\tilde{\gamma}$ is unique only up to permutations, and any permutation redefines the Lindblad operators by modifying the diagonalizing unitary matrix $u$. 

\begin{myproposition}
\label{prop:3}
The Liouvillian is Hermiticity-preserving.
\end{myproposition}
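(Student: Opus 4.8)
The plan is to show that the Liouvillian $\mathcal{L}$ is Hermiticity-preserving by invoking \cref{prop:herm-preserving}, which states that a superoperator is Hermiticity-preserving iff its coordinate matrix $\boldsymbol{\mathcal{L}}$ is real. Thus the entire task reduces to verifying that $\boldsymbol{\mathcal{L}}_{ij} = \Tr[F_i \mathcal{L}(F_j)]$ is real for all $i,j$. Since $\mathcal{L} = \mathcal{L}_H + \mathcal{L}_a$ splits into the Hamiltonian and dissipative parts, I would treat each summand separately, as reality of each piece implies reality of the sum.

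First I would handle the Hamiltonian part $\mathcal{L}_H(A) = -i[H,A]$. A clean way is to check the defining condition \cref{eq:Herm-pres} directly: using Hermiticity of $H$, one computes $[\mathcal{L}_H(A)]^\dagger = (-i[H,A])^\dagger = i[H^\dagger, A^\dagger] = -i[H, A^\dagger] = \mathcal{L}_H(A^\dagger)$, and since $-i[H,\cdot]$ is easily seen to be Hermitian as a superoperator (its Hilbert-Schmidt adjoint sends $A \mapsto i[H,A]^{\dagger\text{-transpose}}$, which one checks equals $\mathcal{L}_H$ using cyclicity of the trace), the Hermiticity-preserving condition follows. Alternatively, and perhaps more uniformly with the dissipator, I would cast both parts into the general $F A F$ form of \cref{eq:Ph-gen-supop} and apply the earlier-derived criterion.

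For the dissipator I would use the diagonalized representation \cref{eq:L_D}, namely $\mathcal{L}_a(A) = \sum_\alpha \gamma_\alpha \big( L_\alpha A L_\alpha^\dagger - \tfrac12\{L_\alpha^\dagger L_\alpha, A\}\big)$. Taking the adjoint of $\mathcal{L}_a(A)$ and using that the $\gamma_\alpha$ are real (noted in the text, since $a=a^\dagger$ for a Markovian quantum master equation), a direct computation gives $[\mathcal{L}_a(A)]^\dagger = \sum_\alpha \gamma_\alpha \big( L_\alpha A^\dagger L_\alpha^\dagger - \tfrac12\{L_\alpha^\dagger L_\alpha, A^\dagger\}\big) = \mathcal{L}_a(A^\dagger)$, where the anticommutator term is manifestly self-adjoint under the combined operation. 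I would then translate this into the statement $[\mathcal{L}_a(A)]^\dagger = \mathcal{L}_a(A^\dagger)$ and connect it to \cref{eq:Herm-pres} by observing that $\mathcal{L}_a$ is Hermitian in the sense of \cref{def:Herm-supop}; combined with $[\mathcal{L}_a(A)]^\dagger = \mathcal{L}_a(A^\dagger)$ this yields $\mathcal{L}_a^\dagger(A^\dagger) = \mathcal{L}_a(A)$.

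The main subtlety I anticipate is bookkeeping between the two distinct notions in \cref{def:Herm-supop}: being \emph{Hermiticity-preserving} (the property $\mathcal{E}^\dagger(A^\dagger) = \mathcal{E}(A)$) versus being \emph{Hermitian} ($\mathcal{E}^\dagger = \mathcal{E}$). The computations above most naturally produce the relation $[\mathcal{E}(A)]^\dagger = \mathcal{E}(A^\dagger)$, which is the statement that $\mathcal{E}$ maps Hermitian operators to Hermitian operators; I must be careful to convert this correctly into the adjoint-based definition \cref{eq:Herm-pres}. To sidestep this entirely, the cleanest route is simply to compute $\boldsymbol{\mathcal{L}}_{ij} = \Tr[F_i \mathcal{L}(F_j)]$ and show it is real using the reality of the $\gamma_\alpha$, the expansion coefficients in \cref{eq:L_alpha}, and Hermiticity of $H$ and the $F_k$, after which \cref{prop:herm-preserving} finishes the proof immediately.
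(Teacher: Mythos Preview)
Your core computations for both $\mathcal{L}_H$ and $\mathcal{L}_a$ are exactly what the paper does: it verifies $[\mathcal{L}_H(X)]^\dagger = \mathcal{L}_H(X^\dagger)$ and $[\mathcal{L}_a(X)]^\dagger = \mathcal{L}_a(X^\dagger)$ by direct calculation using the diagonalized form \cref{eq:L_D} and the reality of the $\gamma_\alpha$, then assembles these into \cref{eq:Herm-pres-2}. So the substantive part of your proposal matches the paper.

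However, there is a genuine error in your attempt to bridge to the adjoint-based formulation: you claim that $\mathcal{L}_a$ is Hermitian in the sense of \cref{def:Herm-supop}, i.e., $\mathcal{L}_a^\dagger = \mathcal{L}_a$. This is false in general. As \cref{prop:calL-Hermit} shows, $\mathcal{L}_a$ is Hermitian if and only if $a$ is real symmetric, which certainly need not hold for an arbitrary Hermitian $a$ (e.g., amplitude damping). So the route ``$[\mathcal{L}_a(A)]^\dagger = \mathcal{L}_a(A^\dagger)$ plus $\mathcal{L}_a^\dagger = \mathcal{L}_a$ gives $\mathcal{L}_a^\dagger(A^\dagger) = \mathcal{L}_a(A)$'' does not go through.

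You were right to flag the bookkeeping issue between the two notions in \cref{def:Herm-supop}. The paper sidesteps it entirely: its proof establishes $\mathcal{L}(X^\dagger) = [\mathcal{L}(X)]^\dagger$ (\cref{eq:Herm-pres-2}) and stops there, treating this as the operative meaning of ``Hermiticity-preserving.'' This is also precisely the property used in the proof of \cref{prop:herm-preserving} (via $[\mathcal{E}(F_j)]^\dagger$) and in \cref{cor:1}. So your cleanest fallback---showing $\boldsymbol{\mathcal{L}}_{ij}\in\mathbb{R}$ directly, or simply stopping at $[\mathcal{L}(X)]^\dagger = \mathcal{L}(X^\dagger)$---is correct and is in fact all that is needed; drop the spurious Hermiticity claim for $\mathcal{L}_a$.
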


\begin{proof}
To show this we can directly compare $[\mathcal{L}(X)]^\dag$ with $\mathcal{L}(X^\dag)$ using \cref{eq:L_H2,eq:L_D}. 
    For the Hamiltonian part, we have the following:
    \bes
    \begin{align}
        [\mL_H(X)]^{\dag} 
        &= i (HX-XH)^{\dag} = i (X^{\dag} H^{\dag} - H^{\dag} X^{\dag}) \\
        &= -i (H X^{\dag} - X^{\dag} H ) = \mL_H(X^{\dag})\ ,
    \end{align}
    \ees
    while for the dissipative part,
    \bes
    \begin{align}
        [\mL_a(X)]^{\dag} 
        &= \sum_{\alpha=1}^{J} \gamma_{\alpha} \left(
        (L_{\alpha}XL_{\alpha}^{\dag})^{\dag} - \frac{1}{2}\{L_{\alpha}^{\dag} L_{\alpha}, X\}^\dag\right) \\
        &= \sum_{\alpha=1}^{J} \gamma_{\alpha} \left(
            (L_{\alpha} X^{\dag} L_{\alpha}^{\dag})
            - \frac{1}{2} \{L_{\alpha}^{\dag} L_{\alpha}, X^{\dag}\}
        \right) \\
        &= \mL_a(X^{\dag})\,.
    \end{align}
    \ees
    Thus, 
\beq
\mathcal{L}\left(X^{\dagger}\right)  = [\mathcal{L}\left(X\right)]^\dag\ .
\label{eq:Herm-pres-2}
\eeq
\end{proof}

\begin{mycorollary}
\label{cor:1}
The Markovian quantum master equation $\dot{\rho} = \mc{L}\rho$ in a nice operator basis
is a real-valued linear ordinary differential equation (ODE) for the vector 
$\boldsymbol{\rho}$ whose coordinates are $\boldsymbol{\rho}_j = \Tr(\rho F_j)$:
\beq
\dot{\boldsymbol{\rho}}=\boldsymbol{\mathcal{L}}\boldsymbol{\rho}\ , \quad \boldsymbol{\rho}\in \mathbb{R}^{d^2}\ , \quad \boldsymbol{\mathcal{L}}\in M(d^2,\mathbb{R})\ .
\label{eq:rhobold}
\eeq
\end{mycorollary}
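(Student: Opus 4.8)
The plan is to assemble the corollary directly from the three propositions already proved, so the work reduces to coordinatizing the equation and then invoking reality statements. First I would coordinatize $\dot\rho = \mc{L}\rho$ in the (fixed, time-independent) nice operator basis. Taking the Hilbert-Schmidt inner product of both sides with $F_i$ and using \cref{eq:coordinatization2}, the linearity of $\mc{L}$, the expansion \cref{eq:coordinatization1}, and the definition of the matrix elements \cref{eq:Eij}, I would obtain
\beq
\dot{\boldsymbol{\rho}}_i = \<F_i,\mc{L}\rho\> = \sum_{j=0}^{J}\<F_i,\mc{L}(F_j)\>\,\boldsymbol{\rho}_j = \sum_{j=0}^{J}\boldsymbol{\mc{L}}_{ij}\,\boldsymbol{\rho}_j\ .
\eeq
Because the basis $\{F_j\}$ does not depend on time, the coordinates of $\dot\rho$ are exactly the time derivatives $\dot{\boldsymbol{\rho}}_i$, so this is precisely the matrix equation $\dot{\boldsymbol{\rho}} = \boldsymbol{\mc{L}}\boldsymbol{\rho}$.

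Next I would establish the two reality claims. That $\boldsymbol{\mc{L}}\in M(d^2,\mathbb{R})$ is immediate: \cref{prop:3} states that the Liouvillian is Hermiticity-preserving, and \cref{prop:herm-preserving} then yields $\boldsymbol{\mc{L}}\in M(d^2,\mathbb{R})$. For $\boldsymbol{\rho}\in\mathbb{R}^{d^2}$, I would invoke \cref{prop:0}: since $\rho$ is a density operator it is Hermitian, hence its coordinates $\boldsymbol{\rho}_j=\Tr(\rho F_j)$ are real.

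The only point that deserves a remark rather than a genuine obstacle is that the reality of the coordinate vector must hold along the entire trajectory, not merely at the initial time. This is automatic here: \cref{prop:3} guarantees that the flow generated by $\mc{L}$ preserves Hermiticity, so $\rho(t)$ remains Hermitian for all $t$ and \cref{prop:0} applies at every time; equivalently, once $\boldsymbol{\mc{L}}$ is real and $\boldsymbol{\rho}(0)$ is real, the real linear ODE $\dot{\boldsymbol{\rho}}=\boldsymbol{\mc{L}}\boldsymbol{\rho}$ keeps $\boldsymbol{\rho}(t)$ real for all later times. Since no step presents any real difficulty, I expect the full content of the corollary to follow as a direct consequence of \cref{prop:0,prop:herm-preserving,prop:3} together with the linearity of the coordinatization map.
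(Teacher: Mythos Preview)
Your proposal is correct and takes essentially the same approach as the paper: the paper's proof is a single sentence stating that the result follows directly from \cref{prop:0,prop:herm-preserving,prop:3} with $X=\rho$ and $\mc{E}=\mc{L}$. You have simply unpacked this by writing out the coordinatization step explicitly and adding the remark about reality along the trajectory, which is more detail than the paper provides but the same argument.
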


\begin{proof}
This follows directly from \cref{prop:0,prop:herm-preserving,prop:3} with $X=\rho$ and $\mc{E}=\mc{L}$. 
\end{proof}

\begin{myproposition}
\label{prop:TrL0}
    The Liouvillian maps any operator $X$ to a traceless operator.
    That is, for any operator $X$
    \begin{equation}
        \Tr[\mL(X)] = 0.
    \end{equation}
\end{myproposition}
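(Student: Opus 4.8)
The plan is to prove tracelessness by the most direct route: split the Liouvillian into its Hamiltonian and dissipative parts using \cref{eq:L_H2,eq:L_D}, and show that the trace of each vanishes by the cyclic property of the trace. For the Hamiltonian part, $\Tr[\mL_H(X)] = -i\,\Tr([H,X]) = -i(\Tr(HX) - \Tr(XH))$, and cyclicity makes the two terms equal, so the commutator is traceless. This step is immediate and requires no assumption on $X$ beyond membership in $\mc{B}(\mc{H})$.

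For the dissipative part, I would apply cyclicity to each of the three terms appearing in \cref{eq:L_D}. Writing $\Tr[\mL_a(X)] = \sum_{\alpha=1}^{J} \gamma_\alpha \Tr[L_\alpha X L_\alpha^\dag - \tfrac12 L_\alpha^\dag L_\alpha X - \tfrac12 X L_\alpha^\dag L_\alpha]$, cyclicity gives both $\Tr[L_\alpha X L_\alpha^\dag] = \Tr[L_\alpha^\dag L_\alpha X]$ and $\Tr[X L_\alpha^\dag L_\alpha] = \Tr[L_\alpha^\dag L_\alpha X]$, so all three terms reduce to multiples of the single quantity $\Tr[L_\alpha^\dag L_\alpha X]$ and cancel termwise, $1 - \tfrac12 - \tfrac12 = 0$, for each $\alpha$. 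Summing over $\alpha$ yields zero. Crucially, this holds regardless of the sign of $\gamma_\alpha$, so the result covers the general (not necessarily CP) Markovian quantum master equation, not only the Lindblad case.

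There is essentially no obstacle here: the entire argument is bookkeeping with the cyclic trace, and the only point requiring minor care is the symmetric treatment of the two halves of the anticommutator, each of which must be matched against the ``gain'' term $L_\alpha X L_\alpha^\dag$. An equivalent but more abstract route would use the general representation \cref{eq:Ph-gen-supop}, observing that $\Tr[\mc{E}(A)] = \sum_{ij} c_{ij}\,\Tr(F_j F_i A)$ and imposing the structure of $c$ appropriate to a Liouvillian; this is less transparent than the direct computation. A third option is to invoke trace preservation, since $\tfrac{d}{dt}\Tr\rho = \Tr(\mL\rho) = 0$ along the flow and, by linearity of both $\mL$ and the trace, tracelessness on all Hermitian operators extends to arbitrary $X = \real(X) + i\,\imag(X)$. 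I would nonetheless prefer the direct calculation, as it is self-contained and does not presuppose trace preservation.
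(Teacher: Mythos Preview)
Your proposal is correct and follows essentially the same approach as the paper: split $\mL$ into $\mL_H$ and $\mL_a$, then use cyclicity of the trace to kill both pieces. The only cosmetic difference is that you work with the diagonal form \cref{eq:L_D} (Lindblad operators $L_\alpha$, rates $\gamma_\alpha$), whereas the paper computes directly from \cref{eq:L_a} (basis elements $F_i$, matrix $a_{ij}$); the termwise cancellation $1-\tfrac12-\tfrac12=0$ is identical in either representation.
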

\begin{proof}
From \cref{eq:LE} we have
\begin{align}
    \Tr[\mL_H(X)] &= -i \Tr([H, X]) \notag{} \\
    &= -i [\Tr(HX) - \Tr(XH)] = 0 ,
\end{align}
and, using $\Tr(AB) = \Tr(BA)$:
\begin{multline}
    \Tr(\mL_a(X))
      = \sum_{i,j=1}^{J} a_{ij} \Tr\left(
        F_i X F_j - \frac12 \left\{F_j F_i, X\right\}\right) \\
    = \sum_{i,j=1}^{J} a_{ij} \left(
        F_j F_i X - \frac12 F_j F_i X - \frac12 F_j F_i X\right)
      = 0.
\end{multline}
\end{proof}

\begin{myproposition}
\label{prop:calL-Hermit}
  Suppose the dissipator $\mL_a$ is 
of the form~\eqref{eq:L_a} with a Hermitian matrix $a$. 
  Then the following conditions are equivalent.
  \begin{enumerate}
    \item $\mL_a$ is Hermitian.
    \item $\mL_a$ can be written in the form~\eqref{eq:L_D} s.t.
    $\forall \alpha\; L_\alpha = L_\alpha^\dagger$.
    \item $a$ is symmetric, i.e., $a=a^T$.
    \item All the matrix elements $a_{ij}$ are real.
  \end{enumerate}  
\end{myproposition}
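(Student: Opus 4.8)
The plan is to prove the chain of equivalences by establishing $(3)\Leftrightarrow(4)$ trivially, then $(1)\Leftrightarrow(3)$ using the general superoperator machinery already developed, and finally $(2)\Leftrightarrow(3)$ via the diagonalization of $a$.

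First I would dispatch $(3)\Leftrightarrow(4)$. Since $a$ is assumed Hermitian, $a=a^T$ is equivalent to $a=a^*$ (because $a^\dagger=a$ means $a^*=a^T$), which is exactly the statement that every entry $a_{ij}$ is real. This is immediate and requires no calculation.

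Next I would handle $(1)\Leftrightarrow(3)$. The strategy is to write the dissipator in the general form \cref{eq:Ph-gen-supop} and read off its coefficient matrix $c$. Expanding \cref{eq:L_a}, the dissipator acts as $\mL_a(X)=\sum_{ij}a_{ij}\big(F_iXF_j-\tfrac12 F_jF_iX-\tfrac12 XF_jF_i\big)$; the first term is already in $F_i\cdot F_j$ form, while the anticommutator terms must be re-expressed in this form by inserting resolutions of the identity or, more directly, by reducing products such as $F_jF_i$ into the basis via the structure constants. The cleaner route is to invoke the representation \cref{eq:supop-adjoint1}: for a superoperator with coefficient matrix $c$, one has $\mL_a^\dagger(A)=\sum_{ij}c_{ij}^*F_iAF_j$, so $\mL_a$ is Hermitian (i.e. $\mL_a^\dagger=\mL_a$) precisely when $c=c^*$, i.e. $c$ is real. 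I would then argue that because $F_iF_j$ for $i,j\ge 1$ together with the identity-insertion terms contribute coefficients that are $\mathbb{R}$-linear in the entries of $a$, the reality of the full coefficient matrix $c$ is equivalent to the reality of $a$ itself. The main obstacle lies here: carefully tracking how the anticommutator terms redistribute among the $F_i\cdot F_j$ coefficients so that no spurious imaginary parts are introduced or cancelled, so that reality of $c$ cleanly pins down reality of $a$. A more economical alternative avoiding this bookkeeping is to use the diagonalized form \cref{eq:L_D} directly: one shows $[\mL_a(X)]^\dagger=\sum_\alpha\gamma_\alpha\big(L_\alpha X^\dagger L_\alpha^\dagger-\tfrac12\{L_\alpha^\dagger L_\alpha,X^\dagger\}\big)$ and $\mL_a^\dagger(X)=\sum_\alpha\gamma_\alpha\big(L_\alpha^\dagger X L_\alpha-\tfrac12\{L_\alpha L_\alpha^\dagger,X\}\big)$ via \cref{eq:supop-adjoint2}, and compares.

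Finally I would prove $(2)\Leftrightarrow(3)$. For $(2)\Rightarrow(3)$: if $\mL_a$ admits a representation \cref{eq:L_D} with all $L_\alpha=L_\alpha^\dagger$, then substituting $L_\alpha=\sum_j u^*_{\alpha j}F_j$ with $u^*_{\alpha j}$ real forces $a_{ij}=\sum_\alpha \gamma_\alpha u_{\alpha i}^* u_{\alpha j}$ to be real and symmetric. For $(3)\Rightarrow(2)$: if $a=a^T$ is real-symmetric, it is diagonalized by a real orthogonal matrix $u$, whence $L_\alpha=\sum_j u^*_{\alpha j}F_j=\sum_j u_{\alpha j}F_j$ is a real combination of the Hermitian $F_j$ and hence Hermitian. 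This direction is the most direct and I would present it last as the natural capstone, using the non-uniqueness remark about $u$ already noted after \cref{eq:L_D} to justify that a \emph{suitable} choice of diagonalizing unitary (namely a real orthogonal one) exists.
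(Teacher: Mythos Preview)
Your argument for $(2)\Rightarrow(3)$ has a genuine gap. From Hermitian $L_\alpha$ you correctly obtain real expansion coefficients $w_{\alpha j}$ and hence a real-symmetric matrix $\tilde a = w^T\tilde\gamma w$ such that $\mL_{\tilde a}=\mL_a$. But nothing you have said forces $\tilde a$ to coincide with the \emph{given} $a$: condition~(2) only asserts that \emph{some} diagonal representation has Hermitian jump operators, not that the particular $u$ diagonalizing $a$ is real. To close this you must invoke the fact that a Hermitian $a$ is uniquely determined by the superoperator $\mL_a$; the paper does exactly this, appealing forward to \cref{thm:forward.inverse}. The same uniqueness issue lurks in your $(1)\Rightarrow(3)$ via the $c$-matrix, where you need to know that the $F_iXF_j$ coefficients pin down $a$ uniquely.

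Your route to $(1)\Leftrightarrow(3)$ differs from the paper's and is in fact cleaner once the obstacle you flag is resolved. The resolution is simple: in the expansion $\mL_a(X)=\sum_{i,j=0}^{J}c_{ij}F_iXF_j$, the anticommutator terms $\{F_jF_i,X\}=F_jF_iXI+IXF_jF_i$ contribute only to the border entries $c_{k0}$ and $c_{0k}$ (since $I=\sqrt d\,F_0$), so for $i,j\ge 1$ one has $c_{ij}=a_{ij}$ exactly. A short computation using $a=a^\dagger$ and $\Tr(F_kF_jF_i)^*=\Tr(F_kF_iF_j)$ shows the border entries are automatically real. Hence $c$ real $\Leftrightarrow$ $a$ real, giving $(1)\Leftrightarrow(4)$ directly, without uniqueness. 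The paper instead proves $2\Rightarrow 1$ by direct computation and $1\Rightarrow 4$ by arguing $\mL_a^\dagger=\mL_{a^*}$ and invoking uniqueness; your $c$-matrix argument, once completed, avoids the forward reference entirely.
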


\begin{proof}
$2\implies 1$. If all the Lindblad operators $L_\alpha$ are Hermitian then
$\mathcal{L}_a^\dag = \mathcal{L}_a$. Indeed,
from \cref{eq:L_D,eq:supop-adjoint2}, for any operator $A$ we have
\begin{align}
  \mL_a^\dag(A)
  &= \sum_{\a=1}^{J} \g_\a\left( L_{\alpha}^{\dag} X L_{\alpha} -\frac{1}{2}\{L^{\dag}_{\alpha}L_{\alpha}, X\}\right) = \mL_a(A).
\end{align}

$3 \Leftrightarrow 4$. Conditions 3 and 4 mean that for each $i,j$ we have $a_{ji} = a_{ij}$ and $a_{ji} = a_{ji}^*$ respectively. Since $a=a^\dag$, the right-hand sides of these two equations are equal for all $i,j$.

$3,4 \implies 2$. We know that $a$ is real-symmetric; hence it
can be diagonalized by an orthogonal matrix. Therefore, $u$ in \cref{eq:L_alpha} can
be chosen to have real matrix elements. For that choice, we have
$\forall \a\;L_\a=L_\a^\dag$. 

$1 \implies 4$. According to 
\cref{eq:supop-adjoint2}, 
$\mathcal{L}_a^\dag = \mathcal{L}_{a^*}$.
From $\mathcal{L}_a = \mathcal{L}_a^\dag$ we have
\begin{equation}
  \mathcal{L}_a = (\mathcal{L}_a + \mathcal{L}_{a^*}) / 2 = \mathcal{L}_{\tilde a},
\end{equation}
where $\tilde{a} = (a + a^*)/2$ --- a matrix with real matrix elements. Later in \cref{thm:forward.inverse}
we will show that $a$ is uniquely determined by $\mL_a$, hence $a = \tilde{a}$ 
and $a = a^*$.

$2 \implies 3$. If $L_\a=L_\a^\dag$ then it can be expanded in the given 
basis $\{F_j\}$ with real coefficients: $L_{\a} = \sum_{j = 1}^{J}w_{\a j}F_j$, 
with $w_{\a j}\in \mathbb{R}$. Substituting this into \cref{eq:L_D} we obtain 
\cref{eq:L_a} with $\tilde{a}=w^T \tilde{\g}w$ in place of $a$ and
$\tilde{\g} = \text{diag}(\g_1,\dots,\g_J)$. Thus $\tilde{a}=\tilde{a}^T$.
Later in \cref{thm:forward.inverse} (see also \cref{prop:Rcond,cor:Rdim})
we will show that $a$ is uniquely determined by $\mL_a$; hence, $a = \tilde{a}$ 
and $a = a^T$.
\end{proof}

By definition, the generator $\mathcal{L}_a$ is unital if $\mathcal{L}_a(I)= 0$~\cite{Lidar200682} [a CPTP map is unital if it maps $I$ to itself: $\mc{E}(I)=I$]. 
Another general property of the Liouvillian is the following:

\begin{myproposition}
\label{prop:a-symm-L-unital}
The Liouvillian is unital if any of the conditions in \cref{prop:calL-Hermit} are satisfied.
\end{myproposition}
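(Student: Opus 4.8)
The plan is to reduce the claim to a one-line computation and then invoke the symmetry of $a$. Since \cref{prop:calL-Hermit} establishes that its four conditions are equivalent, it suffices to assume whichever one is most convenient; I would take condition~3, that $a=a^T$ (equivalently condition~2, that the Lindblad operators can be chosen Hermitian). Recalling that ``unital'' for the Liouvillian means $\mL(I)=0$, I first note that the Hamiltonian part never contributes, since $\mL_H(I)=-i[H,I]=0$. Hence the entire task collapses to showing $\mL_a(I)=0$.

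Next I would evaluate $\mL_a(I)$ directly from the form~\eqref{eq:L_a}. Setting the argument equal to $I$, the anticommutator term becomes $\tfrac12\{F_jF_i,I\}=F_jF_i$, so the two halves of the dissipator combine into $\mL_a(I)=\sum_{i,j=1}^{J} a_{ij}(F_iF_j-F_jF_i)=\sum_{i,j=1}^{J} a_{ij}[F_i,F_j]$. This is the key identity: the ``unital defect'' of the dissipator is precisely the contraction of the rate matrix against the commutator tensor.

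The final step is to observe that $[F_i,F_j]$ is antisymmetric under the exchange $i\leftrightarrow j$, whereas $a_{ij}=a_{ji}$ is symmetric by condition~3; contracting a symmetric object with an antisymmetric one yields zero, so $\mL_a(I)=0$. As an independent cross-check one can instead use the diagonal representation~\eqref{eq:L_D} with Hermitian $L_\alpha$ (condition~2), which gives $\mL_a(I)=\sum_{\alpha=1}^{J}\gamma_\alpha\big(L_\alpha L_\alpha^\dagger-\tfrac12\{L_\alpha^\dagger L_\alpha,I\}\big)=\sum_{\alpha=1}^{J}\gamma_\alpha[L_\alpha,L_\alpha^\dagger]=0$ since $L_\alpha=L_\alpha^\dagger$. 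Either route yields $\mL(I)=\mL_a(I)=0$.

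I do not expect a genuine obstacle here: the argument is a short algebraic identity followed by a symmetry/antisymmetry cancellation. The only points requiring care are bookkeeping — handling the anticommutator term correctly when the argument is $I$ — and being explicit that ``unital'' here refers to $\mL(I)=0$ rather than to any trace-preservation statement, so that the reduction to $\mL_a(I)=0$ is properly justified by $\mL_H(I)=0$.
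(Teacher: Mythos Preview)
Your proof is correct and follows essentially the same approach as the paper: assume $a=a^T$, compute $\mL_a(I)=\sum_{i,j}a_{ij}[F_i,F_j]$, and use the symmetric/antisymmetric contraction to conclude it vanishes. Your version is slightly more thorough in explicitly dispatching $\mL_H(I)=0$ and offering the alternative route via Hermitian $L_\alpha$, but the core argument is identical to the paper's.
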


\begin{proof}
    It suffices to assume that $a$ is symmetric; then using \cref{eq:L_a}:
    \bes
    \begin{align}
    \label{eq:L_a(I)}
     \mathcal{L}_a(I) &= \sum_{i,j=1}^{J} a_{ij} (F_i F_j-F_j F_i) \\
     &= \sum_{i,j=1}^{J} a_{ij} F_i F_j-\sum_{i,j=1}^{J} a_{ji}F_j F_i = 0 \ .
    \end{align}
    \ees
\end{proof}
Note that the converse is false: unitality does not imply that $a$ must be symmetric (or any of the other conditions in \cref{prop:calL-Hermit}). As a counterexample, consider any nice operator basis containing two or more commuting operators [e.g., the last two matrices in \cref{eq:su(3)-matrices} below]. Then, if $a$ contains a non-zero matrix element only between these two operators [e.g., $a_{78}=1$, all the rest zero, for the nice operator basis in \cref{eq:su(3)-matrices}], \cref{eq:L_a(I)} gives $\mathcal{L}_a(I)=0$, but $a$ is not symmetric.

\subsection{Markovian quantum master equation for the coherence vector: \texorpdfstring{from $H$ and $a$ to $G=Q+R$ and $\vec{c}$}{from H and a to G=Q+R and c}}
\label{ss:LE-for-v}
By \cref{cor:1}, we can expand the density matrix in the nice operator basis as:
\beq
\rho = \vec{F}'\cdot \boldsymbol{\rho}  = \frac{1}{d}I + \vec{F}\cdot\vec{v}\ , 
\label{eq:333}
\eeq
where $\vec{F}' = (F_0,F_1,\dots,F_J)$, $\boldsymbol{\rho} = (1/\sqrt{d},v_1,\dots,v_J)^T$, $\vec{F} = (F_1,\dots,F_J)$ collects the traceless basis-operators into a vector, and the corresponding coordinate vector $\vec{v} = (v_1,\dots,v_J)^T \in \mathbb{R}^{J}$ is called the \emph{coherence vector}. We have:
\beq
v_j = \Tr(\rho F_j) = \boldsymbol{\rho}_j \ .
\label{eq:v_j}
\eeq

We note that the conditions guaranteeing that a given vector represents a valid (i.e.,
non-negative) density matrix are nontrivial. A simple necessary condition
follows from the condition that the purity $\Tr(\rho^2)\le 1$:
\bes
\begin{align}
    \Tr(\rho^2) &= \Tr\left[\left(\frac{1}{d}I + \vec{F}\cdot\vec{v}\right)^2\right] \\
&= \frac{1}{d} + \sum_{i,j=1}^{J} \Tr(F_i F_j)v_i v_j = \frac{1}{d}+\norm{\vec{v}}^2\ ,
\end{align}
\ees
i.e.,
\beq
\label{eq:vnorm}
\|\vec{v}\| \leq \left(1-\frac{1}{d}\right)^{1/2} \ ,
\eeq
which is saturated for pure states. Consequently, as mentioned in \cref{sec:intro}, $G$'s eigenvalues are constrained to have non-positive real parts. This is a necessary condition any candidate \cref{eq:3} must satisfy to represent a Lindblad equation. Additional inequalities have been derived in Ref.~\cite{byrd:062322}.

\begin{myproposition}
\label{prop:4}
The coherence vector satisfies \cref{eq:3}. 
Moreover, the decomposition of $\mc{L}$ as $\mc{L} = \mc{L}_H+\mc{L}_a$ with $\mc{L}_H$ and $\mc{L}_a$ given by Eqs.~\eqref{eq:L_H} and~\eqref{eq:L_a}
induces the decomposition of $G$ in \cref{eq:3} into $G=Q+R$, where $\mc{L}_H[\rho] \leadsto Q\vec{v}$, $\mc{L}_a[\rho] \leadsto R\vec{v} + \vec{c}$, $\vec{c}\in\mathbb{R}^J$, $R,Q\in M(J,\mathbb{R})$, and $Q=-Q^T$ (antisymmetric). 
\end{myproposition}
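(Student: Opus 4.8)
The plan is to descend from the $d^2$-dimensional real ODE of \cref{cor:1} to the $J$-dimensional affine ODE for $\vec v$ by exploiting the fact that the zeroth coordinate is frozen. First I would note that $\boldsymbol{\rho}_0 = \Tr(\rho F_0) = \frac{1}{\sqrt d}\Tr\rho = \frac{1}{\sqrt d}$ is constant in time, so that $\boldsymbol{\rho} = (1/\sqrt d, v_1,\dots,v_J)^T$ and the remaining coordinates are exactly the coherence vector. By \cref{cor:1} we have $\dot{\boldsymbol{\rho}} = \boldsymbol{\mathcal{L}}\boldsymbol{\rho}$ with $\boldsymbol{\mathcal{L}}\in M(d^2,\mathbb{R})$. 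Partitioning the index set into $\{0\}$ and $\{1,\dots,J\}$ and reading off the rows $1\le j\le J$ gives
\beq
\dot v_j = \sum_{k=0}^{J}\boldsymbol{\mathcal{L}}_{jk}\boldsymbol{\rho}_k = \frac{1}{\sqrt d}\,\boldsymbol{\mathcal{L}}_{j0} + \sum_{k=1}^J \boldsymbol{\mathcal{L}}_{jk}v_k\ ,\qquad 1\le j\le J\ ,
\eeq
which is precisely \cref{eq:3} with $G_{jk}=\boldsymbol{\mathcal{L}}_{jk}$ (the lower-right $J\times J$ block) and $c_j = \boldsymbol{\mathcal{L}}_{j0}/\sqrt d$. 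Since $\boldsymbol{\mathcal{L}}$ is real, $G\in M(J,\mathbb{R})$ and $\vec c\in\mathbb{R}^J$, establishing the first claim. As a consistency check I would invoke \cref{prop:TrL0}, which gives $\boldsymbol{\mathcal{L}}_{0k}=\Tr(F_0\mathcal{L}(F_k))=\frac{1}{\sqrt d}\Tr[\mathcal{L}(F_k)]=0$, i.e.\ the zeroth row vanishes, consistent with $\dot{\boldsymbol{\rho}}_0=0$.

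Next I would split $\boldsymbol{\mathcal{L}} = \boldsymbol{\mathcal{L}}_H + \boldsymbol{\mathcal{L}}_a$, which holds because the coordinatization \cref{eq:Eij} is linear in the superoperator. Defining $Q$ and $R$ as the lower-right $J\times J$ blocks of $\boldsymbol{\mathcal{L}}_H$ and $\boldsymbol{\mathcal{L}}_a$ respectively yields $G=Q+R$. The observation that isolates the inhomogeneous term is that $\vec c$ is built from the zeroth column, $c_j \propto \boldsymbol{\mathcal{L}}_{j0} = \Tr[F_j\mathcal{L}(F_0)] = \frac{1}{\sqrt d}\Tr[F_j\mathcal{L}(I)]$, together with $\mathcal{L}_H(I) = -i[H,I]=0$. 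Hence the Hamiltonian part contributes nothing to the zeroth column, so that $\mc{L}_H[\rho]\leadsto Q\vec v$ carries no constant term, while the entire inhomogeneity comes from the dissipator, giving $\mc{L}_a[\rho]\leadsto R\vec v + \vec c$. Reality of $Q$ and $R$ then follows from \cref{prop:herm-preserving} together with the fact, already established within the proof of \cref{prop:3}, that $\mathcal{L}_H$ and $\mathcal{L}_a$ are each \emph{separately} Hermiticity-preserving, so that $\boldsymbol{\mathcal{L}}_H$ and $\boldsymbol{\mathcal{L}}_a$ both lie in $M(d^2,\mathbb{R})$.

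Finally I would verify $Q=-Q^T$ by direct computation. Using \cref{eq:Eij}, $Q_{jk} = \Tr[F_j\mathcal{L}_H(F_k)] = -i\Tr\big(F_j[H,F_k]\big) = -i\big[\Tr(F_jHF_k)-\Tr(F_jF_kH)\big]$; applying cyclicity of the trace to each term and comparing with the same expression for $Q_{kj}$ shows the two differ only by an overall sign, so $Q_{jk}=-Q_{kj}$. I expect the only points demanding genuine care to be this cyclicity bookkeeping and, more conceptually, confirming that the frozen coordinate $\boldsymbol{\rho}_0=1/\sqrt d$ generates exactly the affine term $\vec c$ (rather than spurious contributions or a shift in $G$); everything else is a direct consequence of \cref{cor:1}, \cref{prop:TrL0}, \cref{prop:herm-preserving}, and \cref{prop:3}.
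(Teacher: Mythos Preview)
Your proposal is correct and follows essentially the same approach as the paper: separate the zeroth coordinate from \cref{cor:1}, identify $G$ and $\vec c$ as the lower-right block and (scaled) zeroth column of $\boldsymbol{\mathcal{L}}$, split into $Q$ and $R$ via $\mathcal{L}=\mathcal{L}_H+\mathcal{L}_a$, use $\mathcal{L}_H(I)=0$ to assign the inhomogeneity entirely to the dissipator, and verify $Q=-Q^T$ by cyclicity of the trace. If anything, your argument for the separate reality of $Q$ and $R$ (invoking that $\mathcal{L}_H$ and $\mathcal{L}_a$ are each Hermiticity-preserving, as shown inside the proof of \cref{prop:3}) is slightly more explicit than the paper's, which appeals to \cref{cor:1} and ``linearity''.
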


\begin{proof}
After separating out $v_0 = \boldsymbol{\rho}_0=1/\sqrt{d}$, Eq.~\eqref{eq:rhobold} becomes
\beq
\dot{v}_i = \sum_{j=1}^{J} \boldsymbol{\mathcal{L}}_{ij} v_j + \frac{1}{\sqrt{d}}\boldsymbol{\mathcal{L}}_{i0} \ ,\quad 1\le  i\le J \ .
\label{eq:dotv_j}
\eeq
This already establishes that $\vec{v}$ satisfies \cref{eq:2}, with
\bes
\begin{align}
Q_{ij} + R_{ij} &\equiv  \boldsymbol{\mathcal{L}}_{ij} =  \Tr[F_{i}\mathcal{L}(F_{j})] \equiv G_{ij}\\
\label{eq:cjdef}
c_j &\equiv \frac{1}{\sqrt{d}}\boldsymbol{\mathcal{L}}_{j0} = \frac{1}{\sqrt{d}}(Q_{j0}+R_{j0}) \ ,
\end{align}
\ees
for $1\le  i,j\le J$.

To prove the remaining claims, recall that $\boldsymbol{\mathcal{L}}_{ij}= \boldsymbol{\mathcal{L}}_{ij}^*$ by \cref{cor:1}. Therefore, by linearity, for $1\le  i,j\le J$: 
\bes
\label{eq:8.4.2}
\begin{align}
\label{eq:Qjk}
Q_{ij} &\equiv \Tr[F_{i}\mathcal{L}_H(F_{j})] = Q_{ij}^*\\
\label{eq:Rjk}
R_{ij} &\equiv \Tr[F_{i}\mathcal{L}_a(F_{j})]= R_{ij}^*\ ,
\end{align}
\ees
i.e., $R,Q\in M(J,\mathbb{R})$. In addition,
\bes
\begin{align}
Q_{ji} &= \Tr[F_{j}\mathcal{L}_H(F_{i})] = -i \Tr\left(F_{j}[H,F_{i}]\right) 
\\
&
= i \Tr\left(F_{i}[H,F_{j}]\right) = -\Tr[F_{i}\mathcal{L}_H(F_{j})] = -Q_{ij}\ ,
\end{align}
\ees
i.e., $Q$ is antisymmetric. Moreover, by expanding $H$ in the nice operator basis as 
\beq
\label{eq:Hfromh}
H = \sum_{m=0}^{J} h_m F_m\ ,
\eeq
we can write $Q$'s matrix elements as:
\bes
\label{eq:QfromH}
\begin{align}
\label{eq:QfromH-a}
Q_{ij} &= -i \Tr\left(F_{i}[H,F_{j}]\right) \\
\label{eq:QfromH-b}
&= -i \sum_{m=1}^J h_m \Tr \left( F_i[F_m,F_j]\right) \ .
\end{align}
\ees

As for $\vec{c}$, since $F_0 = I/\sqrt{d}$, we have
\beq
Q_{j0} = \frac{-i}{\sqrt{d}}\Tr\left(F_{j}[H,I]\right) =0,
\eeq
so that for $1\le  j\le J$, \cref{eq:cjdef} is replaced by:
\begin{align}
\label{eq:c_j}
c_j = \frac{1}{d}\Tr[F_{j}\mathcal{L}_a(I)] = \frac{1}{\sqrt{d}}R_{j0} = 
c_j^*\ ,
\end{align}
i.e., $\vec{c}\in\mathbb{R}^J$.
\end{proof}

Let $\mathrm{spec}(A)$ denote the spectrum of the operator $A$, i.e., the set of its eigenvalues. 
\begin{myproposition}
\label{prop:specL=G}
\beq
\mathrm{spec}(\mL) = \{0\}\cup\mathrm{spec}(G) .
\eeq
\end{myproposition}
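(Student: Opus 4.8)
The plan is to pass from the superoperator $\mL$ to its coordinate matrix $\boldsymbol{\mathcal{L}}$ in the nice operator basis and to read off the spectrum from a block-triangular structure. By \cref{cor:1}, the coordinatization $X\mapsto\boldsymbol{X}$ is a linear isomorphism of $\mc{B}(\mc{H})$ onto $\mathbb{C}^{d^2}$ that intertwines $\mL$ with the matrix $\boldsymbol{\mathcal{L}}$ whose entries are $\boldsymbol{\mathcal{L}}_{ij}=\Tr[F_i\mL(F_j)]$ [\cref{eq:Eij}], with indices running over $0\le i,j\le J$. Since the same isomorphism appears on both sides, $\mathrm{spec}(\mL)=\mathrm{spec}(\boldsymbol{\mathcal{L}})$, so it suffices to compute the eigenvalues of the $d^2\times d^2$ matrix $\boldsymbol{\mathcal{L}}$.

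Next I would isolate the row and column associated with the index $0$ (the identity direction). \cref{prop:TrL0} gives $\Tr[\mL(F_j)]=0$ for every $j$, so the entire top row vanishes: $\boldsymbol{\mathcal{L}}_{0j}=\tfrac{1}{\sqrt d}\Tr[\mL(F_j)]=0$ for $0\le j\le J$. Meanwhile, the lower-right $J\times J$ block is exactly $G$: by \cref{prop:4}, $\boldsymbol{\mathcal{L}}_{ij}=G_{ij}$ for $1\le i,j\le J$. Hence, ordering the basis as $(F_0,F_1,\dots,F_J)$, the matrix takes the block lower-triangular form
\beq
\boldsymbol{\mathcal{L}} = \begin{pmatrix} 0 & \vec{0}^{\,T} \\ \sqrt{d}\,\vec{c} & G \end{pmatrix},
\eeq
where the first column below the corner is $\boldsymbol{\mathcal{L}}_{j0}=\sqrt{d}\,c_j$ [\cref{eq:cjdef}], a quantity that plays no role in the spectrum.

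Finally I would conclude via the characteristic polynomial. Because $\boldsymbol{\mathcal{L}}$ is block lower-triangular with a $1\times1$ diagonal block equal to $0$ and a second diagonal block equal to $G$, its characteristic polynomial factors as $\det(\lambda I-\boldsymbol{\mathcal{L}})=\lambda\,\det(\lambda I-G)$. Therefore $\mathrm{spec}(\boldsymbol{\mathcal{L}})=\{0\}\cup\mathrm{spec}(G)$, and combined with $\mathrm{spec}(\mL)=\mathrm{spec}(\boldsymbol{\mathcal{L}})$ this is the claim; note that $0$ is automatically an eigenvalue, with $F_0\propto I$ furnishing the associated left eigenvector. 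The only point requiring care---and the step I expect to be the main obstacle---is the passage $\mathrm{spec}(\mL)=\mathrm{spec}(\boldsymbol{\mathcal{L}})$: although \cref{cor:1} tells us $\boldsymbol{\mathcal{L}}$ is real-valued, its relevant eigenvalues are generally complex, so one must treat the coordinatization as a $\mathbb{C}$-linear isomorphism of the complex vector space $\mc{B}(\mc{H})$ onto $\mathbb{C}^{d^2}$ (not merely of the real subspace of Hermitian operators), so that complex eigenvalues are preserved and no spurious real-symmetry constraint is introduced.
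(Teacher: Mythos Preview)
Your proof is correct and follows essentially the same approach as the paper: both pass to the coordinate matrix $\boldsymbol{\mathcal{L}}$, use \cref{prop:TrL0} to show the top row vanishes, write $\boldsymbol{\mathcal{L}}$ in the block lower-triangular form \cref{eq:Lij-matrix-form}, and factor the characteristic polynomial as $\lambda\det(\lambda I-G)$. Your additional remark that the coordinatization must be viewed as a $\mathbb{C}$-linear isomorphism (so that complex eigenvalues are faithfully preserved) is a valid clarification that the paper leaves implicit.
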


\begin{proof}
Note that, as follows from proposition \ref{prop:TrL0},
\begin{equation}
  \boldsymbol{\mathcal{L}}_{0j} = 0
\quad \text{i.e.},\quad \boldsymbol{\mathcal{L}}_0 = \vec{0}\ .
\end{equation}

Combining this with \cref{eq:8.4.2}, we see that in the basis $\{F_j\}_{j=0}^{J}$:
\beq
  \label{eq:Lij-matrix-form}
  \boldsymbol{\mathcal{L}} = 
  \left(
  \begin{array}{cc}
  0  & 0 \cdots 0   \\
  \sqrt{d}\vec{c}  & G    
  \end{array}
  \right)\,.
\eeq
Computing the spectrum of the superoperator $\mL$ is equivalent to finding the eigenvalues of $\boldsymbol{\mathcal{L}}$, which are the solutions of the characteristic equation
\beq
\det\left(\lambda I_{J+1}-\boldsymbol{\mathcal{L}}\right) = \lambda \det\left(\lambda I_{J}-G\right) = 0 ,
\eeq
where $I_n$ denotes the $n\times n$ identity matrix. These solutions are $\lambda = 0$ and $\mathrm{spec}(G)$.
\end{proof}

\subsection{When does \texorpdfstring{$\vec{c}$}{c} vanish?}

\begin{myproposition}
\label{prop:5}
$\vec{c}=\vec{0}$ iff $\mathcal{L}_a$ is unital.
\end{myproposition}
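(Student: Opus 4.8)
The plan is to leverage the explicit coordinate formula for $\vec{c}$ that has already been derived in \cref{eq:c_j}, namely $c_j = \frac{1}{d}\Tr[F_j\mathcal{L}_a(I)]$. This identity says precisely that, up to the overall factor $1/d$, the components $c_j$ are the nontrivial coordinates of the single operator $\mathcal{L}_a(I)$ in the nice operator basis. The whole statement then reduces to the elementary fact that an operator vanishes iff all of its coordinates in a basis vanish.

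The forward direction is immediate: if $\mathcal{L}_a$ is unital, then by definition $\mathcal{L}_a(I)=0$, and substituting into \cref{eq:c_j} gives $c_j=0$ for every $1\le j\le J$, i.e., $\vec{c}=\vec{0}$. For the converse, I would assume $\vec{c}=\vec{0}$ and compute all $J+1$ coordinates of $\mathcal{L}_a(I)$ in the basis $\{F_j\}_{j=0}^J$. For $1\le j\le J$ the coordinate $\Tr(F_j\mathcal{L}_a(I))=d\,c_j$ vanishes by hypothesis. The remaining coordinate corresponds to $F_0=I/\sqrt{d}$; since $\mathcal{L}_H(I)=-i[H,I]=0$ we have $\mathcal{L}_a(I)=\mathcal{L}(I)$, so \cref{prop:TrL0} gives $\Tr(F_0\mathcal{L}_a(I))=\frac{1}{\sqrt{d}}\Tr[\mathcal{L}(I)]=0$. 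With every coordinate equal to zero, the expansion \cref{eq:coordinatization1} forces $\mathcal{L}_a(I)=0$, which is exactly unitality.

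There is essentially no technical obstacle here: the argument is a direct application of the pre-derived identity \cref{eq:c_j} together with the completeness of the nice operator basis. The only point that needs a moment's attention is that $\vec{c}$ has length $J$ and hence says nothing directly about the $F_0$ coordinate of $\mathcal{L}_a(I)$; this missing piece is supplied for free by the tracelessness result \cref{prop:TrL0}, which guarantees the $F_0$ coordinate vanishes as well, closing the argument.
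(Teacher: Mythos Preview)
Your proof is correct and follows essentially the same route as the paper: both use \cref{eq:c_j} to identify $c_j$ with the $F_j$-coordinates of $\mathcal{L}_a(I)$ and conclude via completeness of the nice operator basis. If anything, you are slightly more careful than the paper's ``explicit'' version, since you spell out why the $F_0$ coordinate vanishes (the paper handles this implicitly through \cref{eq:Lij-matrix-form}, whose top row of zeros encodes \cref{prop:TrL0}); note also that the proof of \cref{prop:TrL0} actually shows $\Tr[\mathcal{L}_a(X)]=0$ directly, so you need not detour through $\mathcal{L}_H(I)=0$.
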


\begin{proof}
The proof is immediate from \cref{eq:Lij-matrix-form}, which shows that $\vec{c}$ is the vector representing the matrix $\mathcal{L}_a(I)/d$. Thus $\vec{c}=\vec{0}$ iff $\mathcal{L}_a(I)=0$.

A more explicit proof is the following calculation. First, if $\mathcal{L}_a$ is unital then, using Eqs.~\eqref{eq:Rjk} and~\eqref{eq:c_j}, we have:
\beq
\label{eq:c_j-unital}
c_k = \frac{1}{d}\Tr[F_{k}\mathcal{L}_a(I)] = 0\quad (\mL_a \text{ unital})\ .
\eeq
On the other hand, if $\vec c=0$, 
\begin{align}
\label{eq:unital-c_j}
\Tr[F_{k}\mathcal{L}_a(I)] = d c_k = 0,
\end{align}
hence $\mathcal{L}_a(I)=0$.
\end{proof}
Note that using \cref{eq:L_a} we can write down the general formula for $\vec{c}$'s elements in a given nice operator basis:
\begin{equation}
\label{eq:cfroma}
c_k =  \frac{1}{d} \sum_{i,j=1}^{J} a_{ij}\Tr(\big[F_i , F_j \big]F_k)\,.
\end{equation}

\subsection{When is \texorpdfstring{$R$}{R} symmetric?}
Here by $R$ we mean the $J\times J$
matrix with elements $\{R_{kl}\}_{k,l=1}^J$. Using \cref{eq:Rjk}:
\bes
\label{eq:R}
\begin{align}
  \label{eq:LatoR}
  R_{kl} &= \Tr\big[F_k \mc{L}_a(F_l)\big]  \\
  \label{eq:Rfroma}
  &= \sum_{{i,j=1}}^{J} a_{ij}\Tr\big[F_k \big(F_i F_l F_j   -  \frac{1}{2} \left\{F_j F_i, F_l\right\} \big)\big] \ .
\end{align}
\ees
Note that this is the general formula for $R$'s elements in a given nice operator basis.
$R$ is not symmetric in general. However, we have two special cases presented in the following Propositions.

\begin{myproposition}
\label{prop:R-su2}
$R$ is symmetric in the single qubit ($d=2$) case.
\end{myproposition}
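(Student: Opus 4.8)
The plan is to work directly from the explicit formula for $R$ in \cref{eq:Rfroma} and to show that the $(i,j)$-dependent kernel multiplying $a_{ij}$ is already symmetric under $k\leftrightarrow l$, so that symmetry of $R$ holds for every Hermitian $a$. Concretely, write $R_{kl} = \sum_{i,j=1}^{3} a_{ij}\,T^{ij}_{kl}$ with $T^{ij}_{kl} \equiv \Tr[F_k(F_i F_l F_j - \tfrac12\{F_j F_i,F_l\})]$. It then suffices to prove $T^{ij}_{kl}=T^{ij}_{lk}$ for all fixed $i,j$, because this gives $R_{lk}=\sum_{ij}a_{ij}T^{ij}_{lk}=\sum_{ij}a_{ij}T^{ij}_{kl}=R_{kl}$ irrespective of whether $a$ itself is symmetric.

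The computational engine is a four-factor trace identity special to $d=2$. Here $J=3$ and the nice operator basis is $F_i=\sigma_i/\sqrt2$, so the Pauli algebra gives $F_aF_b=\tfrac12\delta_{ab}I+\tfrac{i}{\sqrt2}\epsilon_{abc}F_c$; the absence of any symmetric (anticommutator-type) contribution here is precisely the vanishing of the totally symmetric structure constants $d_{abc}$ of $su(2)$. Multiplying two such products and using $\Tr I=2$, $\Tr F_c=0$, $\Tr(F_pF_q)=\delta_{pq}$, together with $\epsilon_{abp}\epsilon_{cdp}=\delta_{ac}\delta_{bd}-\delta_{ad}\delta_{bc}$, yields the compact identity $\Tr(F_aF_bF_cF_d)=\tfrac12(\delta_{ab}\delta_{cd}-\delta_{ac}\delta_{bd}+\delta_{ad}\delta_{bc})$.

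I would then substitute this identity into the three traces making up $T^{ij}_{kl}$ — namely $\Tr(F_kF_iF_lF_j)$ from the first term, and $\Tr(F_kF_jF_iF_l)+\Tr(F_kF_lF_jF_i)$ from the anticommutator — and collect the Kronecker deltas. The anticommutator piece collapses to $-\tfrac12\delta_{kl}\delta_{ij}$, and after adding the first term one is left with $T^{ij}_{kl}=\tfrac12\delta_{ki}\delta_{lj}+\tfrac12\delta_{kj}\delta_{il}-\delta_{kl}\delta_{ij}$. This expression is manifestly invariant under $k\leftrightarrow l$ (and, incidentally, under $i\leftrightarrow j$, which shows that only the symmetric part of $a$ contributes to $R$), completing the argument.

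The only genuine content — and the single place where $d=2$ enters — is the clean four-Pauli trace identity, i.e. the vanishing of the symmetric structure constants $d_{abc}$. I expect this to be the crux: for $d\ge 3$ the product $F_aF_b$ acquires a symmetric $\sum_c d_{abc}F_c$ term, $\Tr(F_aF_bF_cF_d)$ then carries additional $d$-dependent contributions, and the kernel $T^{ij}_{kl}$ loses its $k\leftrightarrow l$ symmetry, consistent with the fact that $R$ is not symmetric in general.
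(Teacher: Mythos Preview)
Your proposal is correct and is precisely the ``direct calculation'' the paper alludes to but does not spell out: the paper simply asserts that with $F_i=\sigma_i/\sqrt{2}$ one checks $R_{kl}=R_{lk}$ for arbitrary $a$, and your four-Pauli trace identity $\Tr(F_aF_bF_cF_d)=\tfrac12(\delta_{ab}\delta_{cd}-\delta_{ac}\delta_{bd}+\delta_{ad}\delta_{bc})$ is exactly the computational content behind that assertion, leading to the manifestly $k\leftrightarrow l$-symmetric kernel $T^{ij}_{kl}=\tfrac12\delta_{ki}\delta_{lj}+\tfrac12\delta_{kj}\delta_{il}-\delta_{kl}\delta_{ij}$. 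The paper also offers a second, independent argument you might find worth noting: by the later dimension count (\cref{cor:Rdim}) the space of possible antisymmetric parts of $R$ has dimension $J(J-3)/2$, which vanishes for $J=3$ (i.e., $d=2$), so symmetry is forced without any trace gymnastics.
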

\begin{proof}
By direct calculation, it follows that if we choose the traceless elements of the nice operator basis as the normalized Pauli matrices $\{\sigma^x,\sigma^y,\sigma^z\}/\sqrt{2}$, then $R_{kl}=R_{lk}$ for arbitrary $a$.

An alternative way to see this is from \cref{cor:Rdim} below: the dimension of the space of possible antisymmetric components of $R$ is $J(J-3)/2$, which is $0$ iff $J\in\{0,3\}$, i.e., when $d\in\{1,2\}$.
\end{proof}

\begin{myproposition}
\label{prop:Rsymm-LHerm}
The following statements are equivalent:
\begin{enumerate}
  \item $R$ is symmetric and $\vec{c}=\vec{0}$;
  \item Any of the conditions in \cref{prop:calL-Hermit} are satisfied.
\end{enumerate}
\end{myproposition}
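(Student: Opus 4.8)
The plan is to route both implications through condition~1 of \cref{prop:calL-Hermit} (that $\mL_a$ is Hermitian, $\mL_a^\dagger=\mL_a$) and to recast the symmetry of $R$ as a comparison between $\mL_a$ and its Hilbert--Schmidt adjoint. The key observation is that, since each $F_k$ is Hermitian and both $\mL_a$ and $\mL_a^\dagger$ are Hermiticity-preserving (hence map the real span of the $F_k$ into itself, by \cref{prop:herm-preserving}), \cref{eq:LatoR} together with the adjoint relation \cref{eq:hs-adjoint} yields the two identities
\begin{equation}
R_{kl} = \langle F_k, \mL_a(F_l)\rangle, \qquad R_{lk} = \langle F_k, \mL_a^\dagger(F_l)\rangle,
\end{equation}
valid for $1\le k,l\le J$, where for the second one I use $\langle F_l,\mL_a(F_k)\rangle=\langle \mL_a^\dagger(F_l),F_k\rangle$ and the reality (hence symmetry) of the inner product of the Hermitian operators $\mL_a^\dagger(F_l)$ and $F_k$. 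Thus symmetry of $R$ is precisely the statement that $\mL_a(F_l)$ and $\mL_a^\dagger(F_l)$ have the same components along the traceless basis elements $F_k$, $1\le k\le J$.

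For the direction $2\Rightarrow 1$ I would first invoke \cref{prop:a-symm-L-unital}: any of the conditions of \cref{prop:calL-Hermit} makes $\mL_a$ unital, so $\vec{c}=\vec{0}$ by \cref{prop:5}. Then condition~1 ($\mL_a^\dagger=\mL_a$), inserted into the two identities above, gives $R_{kl}=R_{lk}$ at once, establishing statement~1.

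The direction $1\Rightarrow 2$ is where the care is needed, and where $\vec{c}=\vec{0}$ must be used. Symmetry of $R$ controls the agreement of $\langle F_k,\mL_a(F_l)\rangle$ and $\langle F_k,\mL_a^\dagger(F_l)\rangle$ only for $1\le k\le J$; it says nothing about the $F_0=I/\sqrt d$ component of $\mL_a(F_l)-\mL_a^\dagger(F_l)$. I would close this gap using $\vec c=\vec 0$, i.e.\ $\mL_a(I)=0$ (\cref{prop:5}). By \cref{prop:TrL0} the forward map is trace-annihilating, so $\langle F_0,\mL_a(F_l)\rangle \propto \Tr\mL_a(F_l)=0$; and the adjoint relation gives $\Tr\mL_a^\dagger(F_l)=\langle I,\mL_a^\dagger(F_l)\rangle=\langle\mL_a(I),F_l\rangle=0$, so $\langle F_0,\mL_a^\dagger(F_l)\rangle=0$ as well. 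Hence the $k=0$ components also agree, and since $\{F_k\}_{k=0}^J$ is a complete basis we conclude $\mL_a(F_l)=\mL_a^\dagger(F_l)$ for $1\le l\le J$. It remains to check the $F_0$ column: $\mL_a(F_0)=\tfrac1{\sqrt d}\mL_a(I)=0$, while $\langle F_k,\mL_a^\dagger(I)\rangle=\langle\mL_a(F_k),I\rangle=\overline{\Tr\mL_a(F_k)}=0$ for every $k$ by \cref{prop:TrL0}, so $\mL_a^\dagger(I)=0$ and thus $\mL_a^\dagger(F_0)=0$ too. Therefore $\mL_a$ and $\mL_a^\dagger$ agree on the entire basis, i.e.\ $\mL_a=\mL_a^\dagger$, which is condition~1 of \cref{prop:calL-Hermit}.

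I expect the only genuine obstacle to be exactly this bridging step: symmetry of the $J\times J$ matrix $R$ constrains only the traceless-to-traceless block of $\mL_a-\mL_a^\dagger$, whereas Hermiticity of $\mL_a$ is a statement about the full superoperator on $\mc{B}(\mc{H})$, including the row and column indexed by $F_0$. The condition $\vec c=\vec 0$ is what supplies that missing information, by forcing both $\mL_a$ and $\mL_a^\dagger$ to annihilate $I$ and to be trace-annihilating. This also clarifies why $\vec c=\vec 0$ cannot be dropped from statement~1: the counterexample following \cref{prop:a-symm-L-unital} has $\vec c=\vec 0$ (it is unital) yet non-symmetric $a$, so by the present proposition its $R$ cannot be symmetric, confirming that unitality alone does not yield condition~2.
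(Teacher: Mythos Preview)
Your proof is correct and follows essentially the same approach as the paper: both directions are routed through condition~1 of \cref{prop:calL-Hermit} ($\mL_a^\dagger=\mL_a$), and the $1\Rightarrow 2$ direction is established by checking agreement of $\mL_a$ and $\mL_a^\dagger$ on the full basis $\{F_k\}_{k=0}^J$, with the $F_0$ row and column handled via $\vec c=\vec 0$ and \cref{prop:TrL0}. The only cosmetic difference is that for $2\Rightarrow 1$ you obtain $\vec c=\vec 0$ by quoting \cref{prop:a-symm-L-unital,prop:5}, whereas the paper recomputes it directly from $\mL_a^\dagger=\mL_a$ and \cref{prop:TrL0}; the underlying calculation is the same.

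One small remark: your closing comment about the counterexample after \cref{prop:a-symm-L-unital} does not actually show that $\vec c=\vec 0$ cannot be dropped from statement~1. That example has $\vec c=\vec 0$ but (by the proposition) non-symmetric $R$, so it illustrates that unitality alone is insufficient, not that the hypothesis $\vec c=\vec 0$ is necessary. To see the latter one needs an $a$ with symmetric $R$ and $\vec c\neq\vec 0$; the paper points to the qubit amplitude-damping example (\cref{sss:amplitude}) for this.
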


\begin{proof}
If $\mL_a^{\dag} = \mL_a$ (one of the four equivalent conditions in \cref{prop:calL-Hermit}) then, using \cref{eq:Rjk}:
\bes
\begin{align}
    R_{lk} &= \langle F_l, \mL_a(F_k) \rangle 
    =  \langle \mL_a(F_l), F_k \rangle \\
    & =   \langle F_k, \mL_a(F_l)  \rangle^* = R_{kl}^*
    = R_{kl}\ ,
\end{align}
\ees
where we also used that the matrix elements of $R$ are real (\cref{prop:4}). This proves that 2 $\implies$ 1.
Similarly, 
\bes
\begin{align}
    c_k &= \frac{1}{d} \left<F_k, \mathcal{L}_a(I)\right>
    = \frac{1}{d} \left<\mathcal{L}_a(F_k),I\right> \\
    &= \frac{1}{d}\Tr[\mathcal{L}_a(F_k)]^* = 0 ,
\end{align}
\ees
where in the last equality we used \cref{prop:TrL0}.

The proof that 1 $\implies$ 2 is almost identical:
\bes
\begin{align}
    \langle F_l, \mL_a(F_k) \rangle = R_{lk} =  R_{kl} = R_{kl}^* &= \langle F_k, \mL_a(F_l) \rangle^* \\
    &=  \langle \mL_a(F_l) , F_k  \rangle \ ,
\end{align}
\ees
and
\begin{equation}
    \left<F_k, \mathcal{L}_a(F_0)\right> = \sqrt{d} c_k
    = 0 = \left<\mathcal{L}_a(F_k),F_0\right>,
\end{equation}
which means $\mL_a^{\dag} = \mL_a$.
\end{proof}

In particular, $a=a^T$ implies that $R$ is symmetric. Note that, consistent with the comment following \cref{prop:a-symm-L-unital}, the converse is false, i.e., $R$ being symmetric does not imply that $a$ is real-symmetric: any $a$ yielding a symmetric $R$ and non-zero $\vec{c}$ is a counterexample. As we describe
in \cref{thm:forward.inverse}, from any such pair $(R,c)$
one could recover $a$ via \cref{eq:Gctoa}, which would be a counterexample. For a specific counterexample, see \cref{sss:amplitude}.

According to \cref{cor:Rdim} below, the dimension of the space of possible antisymmetric
components of $R$ is $J(J-3)/2$, which is non-zero if and only if $d \geq 3$. Thus, for
$d\geq 3$ the matrix $R$ is not always symmetric. An explicit example of this for $d=3$
is the following. Consider a qutrit subject to amplitude damping (spontaneous emission)
involving just two
of the three levels: 
\begin{align}
\label{eq:a-su3}
    a = \left(
\begin{array}{ccc}
 1 & -i &   \\
 i & 1 &  \\
 & & 0_{6\times 6}
\end{array}
\right)\ ,
\end{align}
which is a non-negative matrix with eigenvalues $0$ ($7$-fold degenerate) and $2$. Note that in all the examples given in this work, we use a specific choice of a nice operator basis $\{F_i\}$: the generalized Gell-Mann matrices normalized to satisfy the normalization condition Eq.~\eqref{eq:F-reqs}; see \cref{ss:nice-op} and Ref.~\cite{ODE-to-GKLS-supporting-calcs}.
By \cref{eq:R} we have
\begin{align}
\label{eq:R-su3}
    R = \left(
\begin{array}{cccccccc}
 -\frac{1}{4} & 0 & 0 & 0 & -\frac{1}{4} & 0 & 0 &
   0 \\
 0 & -\frac{1}{4} & 0 & \frac{1}{4} & 0 & 0 & 0 &
   0 \\
 0 & 0 & -\frac{1}{2} & 0 & 0 & 0 & 0 & 0 \\
 0 & -\frac{3}{4} & 0 & -\frac{5}{4} & 0 & 0 & 0 &
   0 \\
 \frac{3}{4} & 0 & 0 & 0 & -\frac{5}{4} & 0 & 0 &
   0 \\
 0 & 0 & 0 & 0 & 0 & -\frac{1}{2} & \frac{1}{4} &
   \frac{1}{4 \sqrt{3}} \\
 0 & 0 & 0 & 0 & 0 & -\frac{3}{4} & -\frac{5}{4} &
   -\frac{\sqrt{3}}{4} \\
 0 & 0 & 0 & 0 & 0 & -\frac{\sqrt{3}}{4} &
   -\frac{\sqrt{3}}{4} & -\frac{3}{4} \\
\end{array}
\right)
\end{align}
which is not symmetric. We revisit this example in \cref{sss:qutrit-amplitude-damping}.

\subsection{Properties of the linear map
\texorpdfstring{$a \mapsto (R,\vec{c})$}{a to (R,c)}}
\label{ss:atoRc}
Consider a map $\mathcal{F}: a \mapsto (R, \vec{c})$ defined by \cref{eq:R,eq:cfroma}. This is a linear map of real vector spaces $\mathcal{F}: M_{\textrm{sa}}(J,\mathbb{C}) \to M(J,\mathbb{R})\oplus \mathbb{R}^J$, where $M_{\textrm{sa}}(J,\mathbb{C})$ is the $\mathbb{R}$-vector space of
Hermitian (or self-adjoint; hence the ``sa'' subscript) $J\times J$ matrices over $\mathbb{C}$.

Then, as will follow later from
\cref{thm:forward.inverse,prop:GtoHconsistency},
this map is injective, i.e., $a\neq a' \implies \mathcal{F}(a) \neq \mathcal{F}(a')$. In other words, if the pairs $(R, \vec{c})$ and $(R', \vec{c}')$ are equal, then also the corresponding rate matrices $a$ and $a'$ are equal.
A direct way to
prove the injectivity of $\mathcal{F}$, which does not
rely on \cref{thm:forward.inverse} (but in part uses similar ideas), 
is presented in \cref{sec:lemmas12}.

The following Lemma, together with \cref{cor:Rdim} below,
describes the image of $\mathcal{F}$.
\begin{mylemma}
\label{lemma4}
For all symmetric $R$, $R_{\text{sym}}$: $(R_{\text{sym}},\vec{0}) \in \text{Im}(\mc{F})$. In particular, $(R_{\text{sym}},\vec{0})$ is the image of some real-valued symmetric $a$.
\end{mylemma}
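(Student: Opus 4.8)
The plan is to prove this by a dimension-counting argument that leans on the injectivity of $\mathcal{F}$ asserted just above the Lemma. The key structural fact I would use is the sharp characterization provided by \cref{prop:Rsymm-LHerm} (via the equivalences in \cref{prop:calL-Hermit}): a Hermitian $a$ satisfies $\mathcal{F}(a) = (R,\vec{0})$ with $R$ symmetric \emph{if and only if} $a$ is real-symmetric. Accordingly, I would restrict $\mathcal{F}$ to the subspace $S \subset M_{\textrm{sa}}(J,\mathbb{C})$ of real-symmetric matrices, obtaining a linear map $\mathcal{F}|_S : S \to V$, where $V \equiv \{(R_{\text{sym}},\vec{0}) : R_{\text{sym}} = R_{\text{sym}}^T\}$ is the subspace of $M(J,\mathbb{R})\oplus\mathbb{R}^J$ consisting of symmetric real matrices paired with the zero vector. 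That the image of $\mathcal{F}|_S$ genuinely lands in $V$ — i.e.\ that real-symmetric $a$ forces both the symmetry of $R$ and the vanishing of $\vec{c}$ — is exactly the content of \cref{prop:Rsymm-LHerm}, so this containment is free.

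First I would observe that $S$ and $V$ have the same real dimension $J(J+1)/2$: the real-symmetric $J\times J$ matrices have $J(J+1)/2$ independent entries, and a symmetric $R_{\text{sym}}$ (with $\vec{c}$ pinned to $\vec{0}$) is likewise determined by $J(J+1)/2$ entries. Next I would invoke the injectivity of $\mathcal{F}$, which holds on all of $M_{\textrm{sa}}(J,\mathbb{C})$ and is therefore inherited by the restriction $\mathcal{F}|_S$. An injective linear map between finite-dimensional real vector spaces of equal dimension is necessarily surjective (by rank-nullity), so $\mathcal{F}|_S$ is a bijection of $S$ onto $V$. Consequently every symmetric $R_{\text{sym}}$ is realized as $\mathcal{F}(a)$ for a (unique) real-symmetric $a$, which establishes both $(R_{\text{sym}},\vec{0}) \in \text{Im}(\mathcal{F})$ and the ``in particular'' clause.

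I do not expect a serious obstacle here, since the argument is elementary once injectivity of $\mathcal{F}$ is granted; the only point demanding care is the correct bookkeeping of the two dimensions and the verification that the codomain is precisely $V$ rather than a larger space. The one place where a subtle dependence hides is the injectivity itself, which in the paper follows from \cref{thm:forward.inverse} but is also proved independently in \cref{sec:lemmas12}, so the present Lemma can cite the latter to avoid any circularity. If one preferred a fully self-contained route that sidesteps injectivity, the alternative would be to construct an explicit preimage by expanding a candidate $a$ over the symmetrized products $\tfrac12(F_iF_j+F_jF_i)$ and inverting the resulting linear relation against $R_{\text{sym}}$; however, the surjectivity-from-injectivity argument is considerably cleaner and is the one I would present.
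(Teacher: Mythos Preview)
Your proof is correct and follows essentially the same approach as the paper: restrict $\mathcal{F}$ to the $J(J+1)/2$-dimensional space of real-symmetric $a$, use \cref{prop:Rsymm-LHerm} to see that the image lies in the $J(J+1)/2$-dimensional space $V=\{(R_{\text{sym}},\vec{0})\}$, and conclude surjectivity from injectivity plus equality of dimensions. The paper's version is more terse but the logic is identical.
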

\begin{proof}
  The subspace of real-valued symmetric $a$ in $M(J,\mathbb{R})$ is of dimension $J(J+1)/2$. We know by \cref{prop:Rsymm-LHerm} that $\mc{F}$ maps real-symmetric $a$ to $\vec{c}=\vec{0}$ and $R$ real-symmetric. The subspace $V = \{(R_{\text{sym}},\vec{0})\}$ has dimension $J(J+1)/2$ in the codomain of $\mc{F}$. Since, as explained above, $\mc{F}$ is injective, the image of all real-symmetric $a$ is the subspace $V$.
\end{proof}

\subsection{Properties of a nice operator basis}
\label{ss:nice-op}
In this subsection, we define structure constants corresponding to a nice operator basis and
provide alternative forms of \cref{eq:cfroma,eq:QfromH-b} using these structure constants.
We note that the structure constants are often explicitly defined for the case of
(normalized and generalized) Gell-Mann matrices~\cite{generalized-Gell-Mann}. As mentioned above, this is our choice in all the examples in this work. However, the theory presented here applies to any choice of a nice
operator basis $\{F_i\}_{i=0}^{J}$ (see \cref{def:nice-operator-basis}).

For any such choice, the elements $\{F_i\}_{i=1}^J$ form a generator set of the Lie algebra $\mathfrak{su}(d)$.\footnote{Not every generator set of a Lie algebra is a nice operator basis. For example, the set $\{\sigma^x,\sigma^y,\sigma^z\}$ is a generator set of $\mathfrak{su}(2)$ but it 
violates the normalization conditions \cref{eq:F-reqs}; indeed, with
this choice we have
$\left<F_i,F_j\right> = 2 \delta_{ij}$.} We can define structure constants $f_{ijk}$ via
\beq
  [F_i,F_j] = i \sum_{k=1}^{J} f_{ijk} F_k = i f_{ijk} F_k\ ,
  \label{eq:F-Lie}
\eeq
where in the second equality we used the Einstein summation convention of summing over repeated indices, which we use henceforth when convenient.

The structure constants are totally antisymmetric, i.e., $f_{jkl}=-f_{kjl}=f_{klj}$ and themselves satisfy a type of orthogonality relation~\cite{Kaplan:1967aa,Haber:16,sun-norm-comment}:
\beq
\label{eq:ff}
f_{jkl} f_{jkm} = 2d \d_{lm}\ .
\eeq

Using \cref{eq:F-reqs,eq:F-Lie} we can then further simplify \cref{eq:cfroma}:
\begin{align}
\label{eq:c2}
c_k =  \frac{1}{d} i a_{ij}f_{ijk}  \ .
\end{align}

We may also derive an explicit expression relating the coordinates $h_m$ of the Hamiltonian in the expansion~\eqref{eq:Hfromh} to the matrix elements of $Q$. Namely, inserting \cref{eq:F-Lie} into \cref{eq:QfromH}, we have:
\begin{align}
Q_{jk} &= -i h_l \Tr(F_j[F_l,F_k]) = -i h_l if_{lkj} = -f_{jkl} h_l\ .
\end{align}
On the other hand, using
\beq
f_{jkm}Q_{jk} = -f_{jkl}f_{jkm}h_l = -2d \d_{lm} h_l = -2dh_m \ ,
\eeq
we have
\beq
\label{eq:hm}
h_m = -\frac{1}{2d} f_{jkm}Q_{jk} \ .
\eeq
This implies that $H$ can be computed given $Q$ as
\beq
\label{eq:HfromQ}
H = -\frac{1}{2d} f_{jkm}Q_{jk}F_m \ .
\eeq
We discuss the consistency between this expression and \cref{eq:GctoH} in
\cref{prop:recoverHconsistency1}.

\section{Solution of the inverse problem}
\label{sec:sol}
We now set up the necessary mathematical framework to solve the
inverse problem in full generality. Since some of 
the results after \cref{prop:TrL0} used 
\cref{thm:forward.inverse} and its corollaries, in this section we
only use the results up to (and including) \cref{prop:TrL0}, to avoid any circular references.

\subsection{Forward and inverse transformations}
For a field $F \in \{\mathbb{R}, \mathbb{C}\}$ let $M_{0}(d, F)$
denote the subspace of traceless matrices. Let $M_{\textrm{sa}}(d, F)$ denote
the $\mathbb{R}$-subspace of Hermitian matrices.
E.g., $M_{\textrm{sa}}(d, \mathbb{R})$ is a vector space
of real-valued $d\times d$ symmetric matrices.
Finally, let $M_{0,\textrm{sa}}(d, F)$ be the $\mathbb{R}$-subspace
of matrices which are both traceless and Hermitian.
\begin{mytheorem}
  \label{thm:forward.inverse}
  For any nice operator basis $\{F_n\}_{n=1}^{J}$ 
  there is a linear bijective correspondence between
  the following objects.
  \begin{enumerate}
    \item Pairs $(H, a)$ where $H \in M_{0,\textrm{sa}}(d, \mathbb{C})$,
      $a \in M_{\textrm{sa}}(J, \mathbb{C})$.
    \item $d \times d \times d \times d$ tensors
      $x = \{x_{ijkl}\}_{i,j,k,l=1..d}$ over
      $\mathbb{C}$ satisfying
      \begin{equation}
        \label{eq:xcond}
        x_{ijkl} = x_{lkji}^*,\quad \sum_{k=1}^d (x_{ijkk} + x_{kkij}) = 0\ .
      \end{equation}
    \item $d \times d \times d \times d$ tensors
      $\tilde{x} = \{\tilde{x}_{ijkl}\}_{i,j,k,l=1..d}$ over
      $\mathbb{C}$ satisfying
      \begin{equation}
        \label{eq:xt.cond}
        \tilde{x}_{ijkl} = \tilde{x}_{lkji}^*\ ,
        \quad \sum_{i=1}^{d} \tilde{x}_{ijki} = 0\ .
      \end{equation}
    \item Linear Hermiticity-preserving superoperators $\mathcal{L}$ acting on $M(d,\mathbb{C})$
      satisfying
      \begin{equation}
        \label{eq:Lcond}
        \mathcal{L}\left(X^\dagger\right) = [\mathcal{L}(X)]^\dagger\ ,\quad
        \Tr[\mathcal{L}(X)] = 0
      \end{equation}
      for each $X \in M(d,\mathbb{C})$.
    \item Linear superoperators
      \begin{equation}
        \mathcal{L}: M_{sa}(d, \mathbb{C}) \to M_{0,sa}(d, \mathbb{C})\ .
      \end{equation}
    \item Pairs $(G, \vec{c})$ where $G \in M(J, \mathbb{R})$,
      $c \in \mathbb{R}^{J}$.
  \end{enumerate}
  The corresponding transformations are given as follows:
  \begin{itemize}
    \item $1\to 2$:
      \begin{multline}
        \label{eq:1to2}
        x_{ijkl} = -iH_{ij} \delta_{kl} + i \delta_{ij} H_{kl} 
        + \sum_{m,n=1}^{J} a_{mn} (F_m)_{ij} (F_n)_{kl}\ .
      \end{multline}
    \item $2\to 1$:
    \bes
      \begin{align}
        \label{eq:2to1.H}
        H_{ij} &= \frac1{2id}\sum_{k=1}^{d} (x_{kkij} - x_{ijkk})\ ,\\
        \label{eq:2to1.a}
        a_{mn} &= \sum_{i,j,k,l=1}^{d} (F_{m})_{ji} x_{ijkl} (F_{n})_{lk}\ .
      \end{align}
      \ees
    \item $2 \to 4,5$:
      \begin{equation}
        \label{eq:2to4}
        [\mathcal{L}(X)]_{il} = \sum_{j,k=1}^{d}
        (x_{ijkl} X_{jk} - \frac12 x_{jkij} X_{kl} -
        \frac12 X_{ij} x_{kljk})\ .
      \end{equation}
    \item $4 \to 5$ is simply the restriction to Hermitian operators
      $X$.
    \item $5 \to 4$:
      \begin{equation}
        \mathcal{L}(X) = \mathcal{L}[\real(X)]
        + i\mathcal{L}[\imag(X)]\ ,
      \end{equation}
      where
      \begin{equation}
        \label{eq:real,imag}
        \real(X) = \frac{X + X^\dagger}{2}\ , \qquad
        \imag(X) = \real(X/i) = \frac{X - X^\dagger}{2i}\ .
      \end{equation}
    \item $1 \to 4$ is given by Eqs.~\eqref{eq:LEa}-\eqref{eq:L_a}.
    \item $5 \to 6$:
    \bes
       \label{eq:5to6}
       \begin{align}
        G_{nm} &= \Tr[F_n\mathcal{L}(F_m)]\ , \\
        c_n &= \frac{1}{\sqrt{d}}\Tr[F_n\mathcal{L}(F_0)]\ .
      \end{align}
      \ees
    \item $6 \to 4,5$:
      \begin{equation}
        \label{eq:6to4}
        \mathcal{L}(X) = \sum_{n=1}^{J} \left(
          \sum_{m=1}^{J} G_{nm} \Tr(F_m X) + c_{n} \Tr(X)
        \right) F_{n}\ .
      \end{equation}
    \item $4 \to 3$:
      \begin{equation}
        \label{eq:4to3}
        \tilde{x}_{ijkl} = \left[\mathcal{L}(\ket{j}\!\!\bra{k})\right]_{il}\ .
      \end{equation}
    \item $3 \to 1$ is given by Eq.~\eqref{eq:2to1.H} and Eq.~\eqref{eq:2to1.a}
      with $\tilde x$ in place of $x$.
    \item $3 \to 4$ is given by Eq.~\eqref{eq:2to4}
      with $\tilde x$ in place of $x$.
    \item $3 \to 2$ is given by
      \begin{equation}
        \label{eq:3to2}
        x_{ijkl} = \tilde{x}_{ijkl}
          - b_{ij} \delta_{kl} - \delta_{ij} b_{kl}\ ,
      \end{equation}
      where
      \begin{equation}
        \label{eq:3to2.b}
        b_{ij} = \frac{1}{2d}\sum_{k=1}^d \left(\tilde{x}_{ijkk} + \tilde{x}_{kkij} - \frac{1}{d}\sum_{l=1}^d\tilde{x}_{llkk} \delta_{ij}\right)\ .
      \end{equation}
  \end{itemize}
\end{mytheorem}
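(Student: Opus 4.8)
The plan is to leverage that every one of the listed transformation maps is manifestly $\mathbb{R}$-linear in the entries of its argument, so the theorem reduces to two tasks: showing each arrow is \emph{well-defined} (its image satisfies the constraints defining the target object), and showing that a spanning family of forward-and-back compositions equals the identity. Because the six objects are joined by a connected network of such arrows, once the round trips along a spanning set of arrows are identities, linearity forces every arrow in the network to be a bijection. I would organize the verification around the superoperator $\mathcal{L}$ (objects $4$ and $5$) as a hub and treat the blocks $\{3,4,5,6\}$ and $\{1,2\}$ separately, connecting them last.

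For the hub block the checks are short. The correspondence $3\leftrightarrow 4$ is essentially a readout of matrix elements: \eqref{eq:4to3} sets $\tilde{x}_{ijkl}=[\mathcal{L}(\ketb{j}{k})]_{il}$, and using $\ketb{k}{j}=(\ketb{j}{k})^\dagger$ one sees that the first relation in \eqref{eq:xt.cond} is exactly Hermiticity-preservation and that $\sum_i\tilde{x}_{ijki}=\Tr[\mathcal{L}(\ketb{j}{k})]$, so the second relation is exactly trace-annihilation; thus \eqref{eq:xt.cond} $\Leftrightarrow$ \eqref{eq:Lcond}. Reconstructing $\mathcal{L}$ from $\tilde{x}$ via \eqref{eq:2to4} works because the two anticommutator terms there reduce to the partial traces $\sum_j\tilde{x}_{jkij}$ and $\sum_k\tilde{x}_{kljk}$, which vanish by the second equation of \eqref{eq:xt.cond}; this makes $3\to 4$ and $4\to 3$ mutually inverse. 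Next, $4\leftrightarrow 5$ is the restriction of $\mathcal{L}$ to Hermitian inputs paired with its unique $\mathbb{R}$-linear extension through $X=\real(X)+i\,\imag(X)$, where Hermiticity-preservation sends the restriction into $M_{\mathrm{sa}}$ and the trace condition pushes it into $M_{0,\mathrm{sa}}$. Finally $5\leftrightarrow 6$ is coordinatization: by \cref{prop:herm-preserving} the coordinate matrix of a Hermiticity-preserving map is real, and the trace condition (\cref{prop:TrL0}) annihilates its $F_0$-row, so \eqref{eq:5to6} and \eqref{eq:6to4} are mutually inverse with $G\in M(J,\mathbb{R})$ and $\vec{c}\in\mathbb{R}^J$.

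For $1\leftrightarrow 2$ I would first check well-definedness in both directions. Substituting \eqref{eq:1to2} and using $F_m=F_m^\dagger$, $H=H^\dagger$, $a=a^\dagger$ gives $x_{ijkl}=x_{lkji}^*$, while $\Tr H=0$ and $\Tr F_m=0$ ($m\ge 1$) yield $\sum_k x_{ijkk}=-i d\,H_{ij}$ and $\sum_k x_{kkij}=i d\,H_{ij}$, hence the second line of \eqref{eq:xcond}; conversely, the symmetry $x_{ijkl}=x_{lkji}^*$ makes the $(H,a)$ extracted by \eqref{eq:2to1.H},\eqref{eq:2to1.a} traceless-Hermitian and Hermitian, respectively. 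The composition $1\to 2\to 1$ is immediate: \eqref{eq:2to1.H} inverts the two partial traces just computed, and \eqref{eq:2to1.a} recovers $a$ using orthonormality $\Tr(F_mF_p)=\delta_{mp}$ together with $\Tr F_m=0$. This already makes $1\to 2$ injective. Since the pairs $(H,a)$ and the tensors obeying \eqref{eq:xcond} both have real dimension $d^4-d^2$ (the Hermiticity-type relation cuts the $2d^4$ real dimensions to $d^4$; the obstruction $T_{ij}\equiv\sum_k(x_{ijkk}+x_{kkij})$ is itself Hermitian, so $T=0$ imposes $d^2$ further real constraints, while $(H,a)$ has dimension $J+J^2=d^4-d^2$), an injective linear map whose image lies in the target must be onto it. Alternatively one verifies $2\to 1\to 2=\id$ directly; this uses the completeness relation \eqref{eq:coordinatization4} of the nice operator basis.

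What remains is to bridge $\{1,2\}$ to the hub, and this is where the single delicate point lies. The bridge is \eqref{eq:2to4}, and the crux is a gauge-invariance lemma: the superoperator produced by \eqref{eq:2to4} is unchanged under $x_{ijkl}\mapsto x_{ijkl}+b_{ij}\delta_{kl}+\delta_{ij}b_{kl}$ for \emph{any} matrix $b$, since the shift adds $bX+Xb$ through the first term of \eqref{eq:2to4} and subtracts exactly $bX+Xb$ through the two anticommutator terms. One checks that the matrix-element tensor $\tilde{x}$ of $\mathcal{L}[x]$ differs from $x$ precisely by such a gauge term (with $b$ a partial trace of $x$), and that the second equation of \eqref{eq:xcond} kills the gauge freedom completely (taking traces in $2d\,b_{ij}+2\delta_{ij}\Tr b=0$ forces $\Tr b=0$ and then $b=0$). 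Hence \eqref{eq:3to2}--\eqref{eq:3to2.b} simply computes the unique gauge representative of $\tilde{x}$ satisfying \eqref{eq:xcond}, so $2\to 4\to 3\to 2=\id$ and $3\to 2\to 4\to 3=\id$, connecting the two blocks. I expect this gauge computation, together with verifying that \eqref{eq:3to2.b} indeed enforces $T=0$ while keeping $b$ Hermitian so the first constraint survives, to be the main obstacle. The last arrow, $1\to 4$ via the Lindblad form \eqref{eq:LEa}--\eqref{eq:L_a}, is then reconciled by checking that substituting \eqref{eq:1to2} into \eqref{eq:2to4} reproduces $\mathcal{L}_H+\mathcal{L}_a$ (the Hamiltonian pieces cancel out of the anticommutator terms by tracelessness of $H$), i.e.\ $1\to 4=(2\to 4)\circ(1\to 2)$, so it too is a bijection.
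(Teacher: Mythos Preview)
Your proposal is correct and follows essentially the same approach as the paper: verify well-definedness of each arrow, check that round-trips along a spanning set equal the identity, and use the equal-dimension/left-inverse argument to promote one-sided inverses to bijections. Your packaging of the $2\leftrightarrow 3\leftrightarrow 4$ link as a ``gauge-invariance lemma'' (invariance of \eqref{eq:2to4} under $x_{ijkl}\mapsto x_{ijkl}+b_{ij}\delta_{kl}+\delta_{ij}b_{kl}$, with \eqref{eq:xcond} as the gauge-fixing condition) is exactly what the paper computes componentwise in \eqref{eq:3to2.proof}--\eqref{eq:3to2.proof.b} and the three subsequent cancellation checks; your framing is a bit more conceptual but mathematically identical. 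One small inaccuracy: in your last sentence, the Hamiltonian pieces drop out of the anticommutator terms of \eqref{eq:2to4} by direct cancellation of the $\pm iH$ contributions, not by tracelessness of $H$.
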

We denote the spaces in \cref{thm:forward.inverse} as $\mathcal{V}_1,\dots,\mathcal{V}_6$
and the corresponding maps as $\varphi_{ij}: \mathcal{V}_j \to \mathcal{V}_i$. \cref{thm:forward.inverse} and \cref{lm:GctoHa} below imply that the diagram in \cref{fig:V1V6} is commutative and all its arrows are
$\mathbb{R}$-linear bijections (given by the formulas in
\cref{thm:forward.inverse} and \cref{lm:GctoHa}). Once
the theorem is proven, we will use the maps $\varphi_{ij}$
for any indices $i,j=1,\dots,6$. Such maps are defined as
the compositions of the maps in \cref{fig:V1V6};
any such composition will
provide the same result due to the commutativity of \cref{fig:V1V6}.
\begin{figure}
\begin{tikzpicture}
  \newlength{\dx}
  \newlength{\dy}
  \setlength{\dx}{2cm}
  \setlength{\dy}{2cm}
  \node[draw,circle,align=center](n1) at (-2\dx, 0)
    {$\mathcal{V}_1$\\$(H,a)$};
  \node[draw,circle,align=center](n2) at (-1\dx, 0)
    {$\mathcal{V}_2$\\$x$};
  \node[draw,circle,align=center](n3) at (-\dx,\dy)
    {$\mathcal{V}_3$\\$\tilde{x}$};
  \node[draw,circle,align=center](n4) at (    0, 0)
    {$\mathcal{V}_4$\\$\mathcal{L}$};
  \node[draw,circle,align=center](n5) at (   0,\dy)
    {$\mathcal{V}_5$\\$\mathcal{L}$};
  \node[draw,circle,align=center](n6) at ( \dx,\dy)
    {$\mathcal{V}_6$\\$(G,\vec{c})$};
  \draw[stealth-stealth] (n1) -- (n2);
  \draw[stealth-stealth] (n3) -- (n4);
  \draw[stealth-stealth] (n4) -- (n5);
  \draw[-stealth] (n3) -- (n1);
  \draw[-stealth] (n3) -- (n2);
  \draw[-stealth] (n2) -- (n4);
  \draw[-stealth] (n5) -- (n6);
  \draw[-stealth] (n6) -- (n4);
  \path[-stealth] (n1) edge[bend right] (n4);
  \path[dashed,-stealth] (n6) edge[bend left] (n5);
  \draw[dashed,-stealth] (n2) -- (n5);
  \path[dotted,-stealth] (n6) edge[bend right] (n3);
  \draw[dotted,-stealth] (n6.south)
    ..controls (0.5\dx,-0.7\dy) and (-\dx,-0.7\dy).. (n1.south east);
\end{tikzpicture}
\caption{A commutative diagram representing the transformations between the various spaces described in \cref{thm:forward.inverse} (solid lines) and Lemma~\ref{lm:GctoHa} (dotted lines). These transformations capture the equivalence between the algebraic objects appearing in Markovian quantum master equations and first-order differential equations.
The maps to $\mathcal{V}_5$ from $\mathcal{V}_2$ and $\mathcal{V}_6$ are
represented by dashed lines as they are represented by the same
formulas \eqref{eq:2to4} and \eqref{eq:6to4} as the corresponding
maps to $\mathcal{V}_4$: indeed, these maps to $\mathcal{V}_5$ trivially
coincide with the composition of the corresponding maps to
$\mathcal{V}_4$ with the restriction $\mathcal{V}_4 \to \mathcal{V}_5$.}
\label{fig:V1V6}
\end{figure}
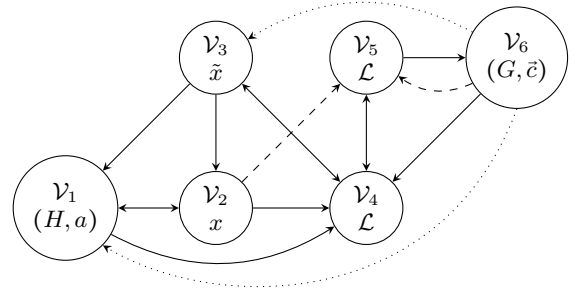
  
\begin{proof}
  First, note that the dimension of each of the $\mathbb{R}$-vector spaces
  $\mathcal{V}_i$ is $(d^2-1)d^2 = J(J+1)$.
  
  We would like to prove that all maps are well
  defined and that the diagram is commutative,
  i.e., if we start at any object in any node and apply
  the maps in the diagram until we end up in the same node, we will have 
  obtained the same object as the one we started with.

  For $\varphi_{21}\colon \mathcal{V}_{1} \to \mathcal{V}_2$
  we need to show that the right-hand side of Eq.~\eqref{eq:1to2}
  satisfies Eq.~\eqref{eq:xcond}.
  We check this by direct computation:
  \begin{align}
    x_{lkji}^* &= iH_{lk}^* \delta_{ji} - i \delta_{lk} H_{ji}^*
      + \sum_{m,n=1}^{J} a_{mn}^* (F_m)_{lk}^* (F_n)_{ji}^* \notag \\
    &= - i H_{ij}\delta_{kl}  + i\delta_{ij} H_{kl}
      + \sum_{m,n=1}^{J} a_{mn} (F_m)_{ij} (F_n)_{kl}\notag \\
    &= x_{ijkl}\ .
  \end{align}
  Here we used that $H, a, \{F_n\}$ are Hermitian, swapped the order of
  two terms with $H$, and renamed the indices $n\leftrightarrow m$
  in the summation.
  To check the second property of $x$ we note that the $\{F_n\}$ are
  traceless. Hence, the second term in Eq.~\eqref{eq:1to2} does not contribute
  and
  \begin{align}
&    \sum_{k=1}^d (x_{ijkk} + x_{kkij}) = \\
&    \quad \sum_{k=1}^d \Bigl(
        -iH_{ij} \delta_{kk} + i \delta_{ij} H_{kk} 
        -iH_{kk} \delta_{ij} + i \delta_{kk} H_{ij}\Bigr)
    = 0\ . \notag
  \end{align}

  For $\varphi_{12}\colon \mathcal{V}_2 \to \mathcal{V}_1$
  the fact that $H$ and $a$ are Hermitian follows
  from the first property of $x$ in \cref{eq:xcond} and 
  \cref{eq:2to1.H,eq:2to1.a} for $H$ and $a$.
  The trace of the right-hand side of Eq.~\eqref{eq:2to1.H} evaluates to 0.

  If we start with $(H,a)$ and apply the maps
  $\mathcal{V}_1\to \mathcal{V}_2 \to \mathcal{V}_1$
  given by \cref{eq:1to2,eq:2to1.H,eq:2to1.a} we
  obtain some $(H',a') = \varphi_{12}(\varphi_{21}((H,a)))$.
  Let us prove that these match the original $(H,a)$, i.e., that
  $\varphi_{12}\circ \varphi_{21} = \id_{\mathcal{V}_1}$
  Since $F_m$ and $F_n$ are traceless, the term corresponding to $a$ in the
  expression for $H'$ is zero. Using the fact that $\Tr(H)=0$ and $\Tr(I)=d$
  we obtain
  \bes
  \begin{align}
    H'_{ij} &= \frac1{2id}\sum_{k=1}^{d} (x_{kkij} - x_{ijkk}) \\
    & = \frac1{2id}\Bigl(
      -i\Tr(H) \delta_{ij} + i \Tr(I) H_{ij}
      +iH_{ij} \Tr(I) \notag \\
      &\qquad - i \delta_{ij} \Tr(H)
    \Bigr) \\
    & = \frac1{2id}\left(id H_{ij} + iH_{ij}d\right) = H_{ij}\ .
  \end{align}
  \ees
  Similarly, we can observe that the term with $H$ from Eq.~\eqref{eq:1to2}
  does not contribute to $a'$ since the $\{F_n\}$ are traceless. Using the fact
  that the $\{F_{m}\}$ are Hermitian and form an orthonormal basis we obtain
  \begin{equation}
    a'_{mn} = \sum_{i,j=1}^{J} a_{ij} \Tr(F_{m} F_{i}) \Tr(F_{n} F_{j})
    = a_{mn}\ .
  \end{equation}

  It follows that
  $\varphi_{21} \circ \varphi_{12} = \id_{\mathcal{V}_2}$.
  Indeed, it is a general fact that a left inverse of a linear map
  between 2 vector spaces of the same dimension is necessarily its 
  right inverse as well~\cite[Proposition 1.20]{oliveira2022linear}.
  We also use this argument for other loops, 
  proving the commutativity of each loop in the diagram
  for only a single starting point.

  By direct comparison of Eqs.~\eqref{eq:LEa}-\eqref{eq:L_a} with
  Eq.~\eqref{eq:1to2} and Eq.~\eqref{eq:2to4} one can conclude that 
  they yield the same operator $\mathcal{L}$. To check that 
  $\varphi_{41}$ is well defined, one would need to check that
  the operator $\mL$ defined by Eqs.~\eqref{eq:LEa}-\eqref{eq:L_a}
  satisfies Eq.~\eqref{eq:Lcond}. This was already done in
 \cref{prop:3} and \cref{prop:TrL0}. Therefore,
  $\varphi_{42} \circ \varphi_{21} = \varphi_{41}$. Note, that
  $\varphi_{42}$ is well defined because for any
  $x \in \mathcal{V}_2$ we have
  \begin{equation}
    \varphi_{42}(x) = (\varphi_{42}\circ \varphi_{21} \circ \varphi_{12})(x) = \varphi_{41}(\varphi_{12}(x)).
  \end{equation}
  
  The maps
  $\varphi_{45},\,\varphi_{54}: \mathcal{V}_5 \leftrightarrow \mathcal{V}_4$
  are clearly well-defined. Also
  $\varphi_{54} \circ \varphi_{45} = \id_{\mathcal{V}_5}$
  and, hence, $\varphi_{45} \circ \varphi_{54} = \id_{\mathcal{V}_4}$. 

  The fact that the elements of $G$ and $\vec{c}$ are real follows from
  Eq.~\eqref{eq:5to6}, the fact that the $\{F_n\}$ form an orthonormal basis,
  and the fact that $\mathcal{L}$ maps Hermitian operators to Hermitian
  operators. Let us check that the composition
  $\varphi_{46} \circ \varphi_{65} \circ \varphi_{54} = \id_{\mathcal{V}_4}$.
  Denote the resulting operator given by Eq.~\eqref{eq:6to4} as
  $\mathcal{L}' = (\varphi_{46} \circ \varphi_{65} \circ \varphi_{54})
  (\mathcal{L})$. Since $\{F_n\}_{n=0,\dots,J}$ form a basis in
  $M(d,\mathbb{C})$ and $\mathcal{L}$ and both $\mathcal{L}'$ are
  $\mathbb{C}$-linear, it is sufficient to check that
  $\mathcal{L}(F_l) = \mathcal{L}'(F_l)$ for $l=0,\dots,J$.
  For $l=0$ we have, using Eq.~\eqref{eq:6to4}:
  \bes
  \begin{align}
    \mathcal{L}'(F_0) &= \sum_{n=1}^{J} c_{n}\Tr(I/\sqrt{d})F_{n} = \sum_{n=1}^{J} c_{n}\sqrt{d}F_{n} \\
    &= \sum_{n=1}^{J} \Tr[F_n\mathcal{L}(F_0)] F_{n} = \mathcal{L}(F_0).
  \end{align}
  \ees
  And for $n>0$ we have:
  \bes
  \begin{align}
    \mathcal{L}'(F_l) &= \sum_{n,m=1}^J G_{nm}\Tr(F_{m}F_l)F_n = \\
    & = \sum_{n=1}^{J} G_{nl}F_{n} 
    = \sum_{n=1}^{J} \Tr[F_n\mathcal{L}(F_l)] F_{n} \\
    & = \mathcal{L}(F_l)\ .
  \end{align}
  \ees
  Here we again used that $\{F_n\}_{n=0,\dots,J}$ are Hermitian
  and form an orthonormal basis,
  and that $\Tr[F_0\mathcal{L}(X)]=0$ for any $X$ [\cref{eq:Lcond}]. 

  Now consider the map $\varphi_{34}: \mathcal{L} \mapsto \tilde{x}$.
 \cref{eq:xt.cond} follows from \cref{eq:Lcond} applied to
  $X=\ket{j}\!\!\bra{k}$. Let us prove that
  $\varphi_{23}\circ\varphi_{34}\circ\varphi_{42} = \id_{\mathcal{V}_2}$.
  Start with $x \in \mathcal{V}_2$ and apply the above maps to obtain
  $\mathcal{L} = \varphi_{42}(x)\in\mathcal{V}_4$,
  $\tilde{x} = \varphi_{34}(\mathcal{L})\in\mathcal{V}_3$,
  $x' = \varphi_{23}(\tilde{x})\in\mathcal{V}_2$.
  From \cref{eq:2to4,eq:4to3} we have:
  \bes
    \label{eq:3to2.proof}
  \begin{align}
    \tilde{x}_{ijkl}
    &= x_{ijkl}
      - \frac12 \sum_{n=1}^d (x_{njin}\delta_{lk} + x_{nlkn}\delta_{ij}) \\
    \label{eq:3to2.proof:b}
    &= x_{ijkl} + \tilde{b}_{ij} \delta_{kl}
      + \delta_{ij} \tilde{b}_{kl}\ ,
  \end{align}
  \ees
  where
  \begin{equation}
    \label{eq:3to2.proof.b}
    \tilde{b}_{ij} = -\frac12 \sum_{n=1}^{d} x_{njin}.
  \end{equation}
  In order to compute $x'=\varphi_{23}(\tilde{x})$ we compute $b$
  given by \cref{eq:3to2.b} by substituting
  \cref{eq:3to2.proof:b}
 into Eq.~\eqref{eq:3to2.b} and using Eq.~\eqref{eq:xcond}.
  \bes
  \begin{align}
      b_{ij} &= \frac{1}{2d} \sum_{k=1}^d \biggl(
        x_{ijkk} + \tilde{b}_{ij} \delta_{kk}
      + \delta_{ij} \tilde{b}_{kk} \\
      &\quad+ x_{kkij} + \tilde{b}_{kk} \delta_{ij}
      + \delta_{kk} \tilde{b}_{ij}
      -\frac{1}{d} \sum_{l=1}^d(x_{kkll} + 2\tilde{b}_{kk}\delta_{ll})\delta_{ij}
      \biggr) \notag{} \\
      &= \tilde{b}_{ij}
        + (1-1)\frac{1}{d} \delta_{ij} \sum_{k=1}^d \tilde{b}_{kk}
      = \tilde{b}_{ij}\ .
  \end{align}
  \ees
  Therefore, Eq.~\eqref{eq:3to2} subtracts the same term as the one added
  in Eq.~\eqref{eq:3to2.proof}. Hence, $x = x'$. Note that the
  computation we performed also implies that $\varphi_{23}$
  is well-defined by a similar argument as used for $\varphi_{42}$ above.

  Finally, concerning $\varphi_{13}$ and $\varphi_{43}$, one can check that the term of the form
  \begin{equation}
  \label{eq:b-term}
    \tilde{b}_{ij} \delta_{kl} + \delta_{ij} \tilde{b}_{kl} = {b}_{ij} \delta_{kl} + \delta_{ij} {b}_{kl}\ ,
  \end{equation}
  which appears in Eq.~\eqref{eq:3to2}, evaluates to $0$ when substituted 
  into Eqs.~\eqref{eq:2to1.H}, \eqref{eq:2to1.a}, and~\eqref{eq:2to4}. Specifically, 
      \bes
      \begin{align}
         H_{ij} &= \frac1{2id}\sum_{k=1}^{d} [(\tilde{x}_{kkij}
          - b_{kk} \delta_{ij} - b_{ij}\delta_{kk})\notag \\
          &\qquad -(\tilde{x}_{ijkk}
          - b_{ij}\delta_{kk} - \delta_{ij} b_{kk} )] \\
          \label{eq:3to1.H}
         & = \frac1{2id}\sum_{k=1}^{d} (\tilde{x}_{kkij} - \tilde{x}_{ijkk})\ .
      \end{align}
      \ees
Substituting just the term ${b}_{ij} \delta_{kl} + \delta_{ij} {b}_{kl}$ into Eq.~\eqref{eq:2to1.a}, we have:
      \bes
      \begin{align}
              &\sum_{i,j,k,l=1}^{d} (F_{m})_{ji} (b_{ij} \delta_{kl} + \delta_{ij} b_{kl}) (F_{n})_{lk} \\
          &\quad =  \sum_{i,j=1}^{d} b_{ij} [(F_m)_{ji} \Tr(F_n) + (F_n)_{ji} \Tr(F_m)] \\
          &\quad = 0 \ ,
           \end{align}
      \ees
      which means that
            \begin{equation}
        a_{mn} = \sum_{i,j,k,l=1}^{d} (F_{m})_{ji} \tilde{x}_{ijkl} (F_{n})_{lk}\ .
         \label{eq:3to1.a}
      \end{equation}
Likewise, substituting the same Eq.~\eqref{eq:b-term} term into Eq.~\eqref{eq:2to4}, we have:
\bes
     \begin{align}
& \sum_{j,k=1}^{d}
        [b_{ij}\delta_{kl}X_{jk} +\delta_{ij} b_{kl} X_{jk} \notag\\
        &\qquad - \frac12 (b_{jk}\delta_{ij} X_{kl} + \delta_{jk}b_{ij}X_{kl}) \notag\\
        &\qquad - \frac12 (X_{ij}b_{kl}\delta_{jk}
        + X_{ij}\delta_{kl}b_{jk}) ] \\
        &\quad = \frac12 \sum_{j=1}^{d} (
        2b_{ij}X_{jl} + 2b_{jl} X_{ij}
        - b_{ij} X_{jl} - b_{ij}X_{jl} \\
        &\qquad - X_{ij}b_{jl} - X_{ij}b_{jl}
        ) \notag \\
        &\quad = 0\ ,
      \end{align}
      \ees
      which means that 
            \begin{equation}
        [\mathcal{L}(X)]_{il} = \sum_{j,k=1}^{d}
        (\tilde{x}_{ijkl} X_{jk} - \frac12 \tilde{x}_{jkij} X_{kl} -
        \frac12 X_{ij} \tilde{x}_{kljk})\ .
      \end{equation}
  This proves the remaining statements
  $\varphi_{43} \circ \varphi_{34} = \id_{\mathcal{V}_4}$ and
  $\varphi_{13} \circ \varphi_{34} \circ \varphi_{41} = \id_{\mathcal{V}_1}$. 
\end{proof}

\subsection{Explicit formulas for
\texorpdfstring{$H$ and $a$ from $G$ and $\vec{c}$}{%
H and a from G and c}}
We are now finally prepared to complete the solution of the inverse problem and provide explicit formulas for $H$ and $a$ given $G$ and $\vec{c}$. We first solve the problem more generally: assume we have $(G,\vec{c})\in \mathcal{V}_6$ and would like to know
the corresponding object from one of $\mathcal{V}_1,\dots,\mathcal{V}_5$.
If we are interested in $\mathcal{L} \in \mathcal{V}_4$ or $\mathcal{L} \in \mathcal{V}_5$ we could directly apply Eq.~\eqref{eq:6to4} from \cref{thm:forward.inverse}. For $\mathcal{V}_1$, $\mathcal{V}_2$, or $\mathcal{V}_3$ we
would, however, have to compose multiple maps from that theorem.

Here we provide explicit formulas for these compositions.
\begin{mylemma}
  \label{lm:GctoHa}
  Suppose we have $(G,\vec{c}) \in \mathcal{V}_6$. Then the objects from $\mathcal{V}_1,\mathcal{V}_2,\mathcal{V}_3$ corresponding to $(G,\vec{c})$ in the sense of
  Theorem \ref{thm:forward.inverse} are given by the following formulas:
  \begin{itemize}
    \item $(H,a) \in \mathcal{V}_1$ are given by \cref{eq:GctoH,eq:Gctoa}.
    \item $\tilde{x} \in \mathcal{V}_3$ is given by
      \begin{equation}
        \label{eq:Gc2xt}
        \tilde{x}_{ijkl}
        = \sum_{n=1}^{J} \left(
            \sum_{m=1}^{J} G_{nm} (F_{m})_{kj} + c_{n} \delta_{kj}
          \right) (F_{n})_{il}.
      \end{equation}
    \item $x \in \mathcal{V}_2$ is given by Eq.~\eqref{eq:3to2}
      with $b$ given by
      \begin{equation}
        \label{eq:Gc2b}
        b = \frac{1}{2d}
        \sum_{n=1}^{J} \left(
          \sum_{m=1}^{J} G_{nm} (\{F_{m}, F_{n}\}-\delta_{mn}I/d) + 2c_{n} F_n
          \right).
      \end{equation}
  \end{itemize}
\end{mylemma}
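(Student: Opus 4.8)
The plan is to obtain each of the three formulas by composing the elementary maps of \cref{thm:forward.inverse} along the commutative diagram of \cref{fig:V1V6}, exploiting the fact that every path between two given nodes yields the same result. Since $\mathcal{V}_3$ is reached from $\mathcal{V}_6$ by the composition $\varphi_{34}\circ\varphi_{46}$, I would start there: first apply \cref{eq:6to4} to build $\mathcal{L}$ from $(G,\vec{c})$, then evaluate $[\mathcal{L}(\ket{j}\!\!\bra{k})]_{il}$ as prescribed by \cref{eq:4to3}. The only facts needed are $\Tr(F_m\ket{j}\!\!\bra{k})=(F_m)_{kj}$ and $\Tr(\ket{j}\!\!\bra{k})=\delta_{kj}$, after which \cref{eq:Gc2xt} drops out with no further manipulation.

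With $\tilde{x}$ in hand, I would obtain $(H,a)\in\mathcal{V}_1$ by applying $\varphi_{13}$, i.e., \cref{eq:3to1.H,eq:3to1.a}. For $H$, the computation reduces to evaluating $\sum_k\tilde{x}_{kkij}$ and $\sum_k\tilde{x}_{ijkk}$; contracting the matrix indices via $\sum_k(F_m)_{ik}(F_n)_{kj}=(F_mF_n)_{ij}$ collapses these into $(F_mF_n)_{ij}$ and $(F_nF_m)_{ij}$ respectively, and their difference is exactly the commutator $[F_m,F_n]_{ij}$, reproducing \cref{eq:GctoH}. For $a$, substituting \cref{eq:Gc2xt} into $a_{mn}=\sum_{ijkl}(F_m)_{ji}\tilde{x}_{ijkl}(F_n)_{lk}$ and performing the free index sums by ordinary matrix multiplication and the cyclicity of the trace assembles the four-factor traces $\Tr(F_jF_mF_iF_n)$ from the $G$ term and the three-factor traces $\Tr(F_mF_iF_n)$ from the $\vec{c}$ term, yielding \cref{eq:Gctoa}.

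Finally, for $x\in\mathcal{V}_2$ I would apply $\varphi_{23}$, given by \cref{eq:3to2,eq:3to2.b}, to the same $\tilde{x}$. Since \cref{eq:3to2} merely subtracts the term $b_{ij}\delta_{kl}+\delta_{ij}b_{kl}$, the only task is to compute $b$ from \cref{eq:3to2.b}. Here I would reuse the already-computed sums $\sum_k\tilde{x}_{ijkk}$ and $\sum_k\tilde{x}_{kkij}$, whose symmetric combination produces the anticommutator $\{F_m,F_n\}_{ij}$, together with the scalar $\sum_{k,l}\tilde{x}_{llkk}$, which collapses to $\Tr(G)$ upon invoking the orthonormality $\Tr(F_mF_n)=\delta_{mn}$ of \cref{eq:F-reqs} and the tracelessness $\Tr(F_n)=0$. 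Recognizing $\Tr(G)\,\delta_{ij}/d$ as the matrix element of $\sum_{m,n}G_{nm}\delta_{mn}I/d$ then recovers exactly \cref{eq:Gc2b}.

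I expect no conceptual obstacle: the entire lemma is a chain of explicit tensor contractions whose well-definedness and consistency are already guaranteed by the commutativity established in \cref{thm:forward.inverse}. The only genuine care required is index bookkeeping---tracking which slot of each tensor is contracted with which matrix index, and applying orthonormality and tracelessness in precisely the right places---so the main risk is a transcription error rather than a substantive difficulty.
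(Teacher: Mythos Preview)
Your proposal is correct and follows essentially the same route as the paper's own proof: compute $\tilde{x}$ via $\varphi_{34}\circ\varphi_{46}$, then feed it into \cref{eq:3to1.H,eq:3to1.a} for $(H,a)$ and into \cref{eq:3to2.b} for $b$, using orthonormality and tracelessness of the $F_n$ at exactly the same places. The paper streamlines the bookkeeping slightly by introducing the shorthand $\tilde{G}_n=\sum_m G_{nm}F_m+c_nI$, and it derives $b$ before $(H,a)$ rather than after, but neither of these is a substantive difference.
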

\begin{proof}
To simplify the computation, we introduce
\begin{equation}
  \label{eq:tGn.def}
  \tilde{G}_n = \sum_{m=1}^J G_{nm} F_m + c_n I\ .
\end{equation}
Eq.~\eqref{eq:Gc2xt} is obtained by substituting
  Eq.~\eqref{eq:6to4} into Eq.~\eqref{eq:4to3}. Eq.~\eqref{eq:Gc2b}
  is obtained by substituting Eq.~\eqref{eq:Gc2xt} into Eq.~\eqref{eq:3to2.b}:
\bes
\begin{align}
         b_{ij} &= \frac{1}{2d}\sum_{k=1}^d (\tilde{x}_{ijkk} + \tilde{x}_{kkij})\\
         &=\frac{1}{2d}\sum_{k=1}^d \bigg[ \sum_{n=1}^{J} (\tilde{G}_n)_{kj} (F_{n})_{ik}
           + (\tilde{G}_n)_{ik} (F_{n})_{kj} \notag \\
          &\qquad -\frac{1}{d}\delta_{ij}
          (\tilde G_n)_{lk} (F_{n})_{kl}
          \bigg] \\
          &= \frac{1}{2d} \bigg[ \sum_{m,n=1}^{J} G_{nm}\Bigl[
            (F_n F_m)_{ij} + (F_m F_n)_{ij} \notag \\
          &\qquad - \Tr(F_m F_n)\delta_{ij}/d\Bigr]
           + \sum_{n=1}^J 2 c_n (F_n)_{ij}\bigg]\ .
\end{align}
\ees
 Eqs.~\eqref{eq:GctoH} and~\eqref{eq:Gctoa}
  are obtained by substituting $\tilde{x}$ from Eq.~\eqref{eq:Gc2xt}
  into Eq.~\eqref{eq:3to1.H} and Eq.~\eqref{eq:3to1.a} respectively:
\bes
\begin{align}
H_{ij} &= \frac{1}{2id}\sum_{k=1}^d\sum_{n=1}^J\bigg[(\tilde{G}_n)_{ik}(F_n)_{kj}  -(\tilde{G}_n)_{kj}(F_n)_{ik}\bigg] \\
&= \frac{1}{2id}\sum_{n=1}^J([\tilde{G}_n,F_n])_{ij}
= \frac1{2id}\sum_{m,n=1}^{J} G_{nm} (\left[F_m,F_n\right])_{ij} ,
\end{align}
\ees  
and
\bes
\begin{align}
a_{mn} &= \;\sum_{{i,j,k,l=1}}^{d}\; (F_m)_{ji}
  \left(\sum_{n'=1}^{J}(\tilde{G}_{n'})_{kj} (F_{n'})_{il}\right)(F_n)_{lk} \\
& = \sum_{n'=1}^{J} \Tr(\tilde{G}_{n'}F_m F_{n'}F_n) \\
& = \sum_{i=1}^{J} \Tr\left[\left(\sum_{j=1}^J G_{ij} F_j + c_i I\right)F_m F_{i}F_n\right]\ .
\end{align}
\ees  
\end{proof}

\cref{eq:GctoH,eq:Gctoa} of \cref{thm:main-result} are now a special case (the first item) of \cref{lm:GctoHa}.

Once $a$ is computed from Eq.~\eqref{eq:Gctoa}, one can check if the given ODE 
generates a Lindblad equation (and not just a Markovian quantum master equation)
by testing whether $a$ is positive semi-definite. What 
remains is to formulate a complete-positivity condition directly in terms of $G$ and 
$\vec{c}$; we do this next.

\subsection{Complete positivity}
We now complete the proof of \cref{thm:main-result}.
\begin{mylemma}[Positive-semidefiniteness condition of \cref{thm:main-result}]
\label{lm:thm:main-result:part3}
A pair $(G,\vec{c})$ generates a (completely positive) Lindblad master equation iff \cref{eq:Gc.positive} holds for all traceless $B \in \mathcal{B}(\mathcal{H})$.
\end{mylemma}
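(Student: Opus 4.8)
The plan is to reduce positive semidefiniteness of $a$ to its defining quadratic form and then recognize that form as the left-hand side of \cref{eq:Gc.positive}. By the GKLS theorem and the definition following \cref{eq:LE}, the pair $(G,\vec{c})$ generates a Lindblad master equation if and only if the associated rate matrix $a$ is positive semidefinite. Moreover, $a$ is Hermitian: it is the $\mathcal{V}_1$-component of $(G,\vec{c})$ under \cref{thm:forward.inverse}, so $a \in M_{\textrm{sa}}(J,\mathbb{C})$. Hence it suffices to show that \cref{eq:Gc.positive} is equivalent to $\vec{z}^\dagger a\, \vec{z} \geq 0$ for all $\vec{z} \in \mathbb{C}^J$, which for a Hermitian matrix is exactly positive semidefiniteness.

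First I would insert the explicit formula \cref{eq:Gctoa} into the quadratic form. Writing $\tilde{G}_i = \sum_{j=1}^J G_{ij} F_j + c_i I$ as in \cref{eq:tGn.def}, linearity of the trace gives
\begin{equation}
\sum_{m,n=1}^J z_m^* a_{mn} z_n = \sum_{i=1}^J \Tr\Big[\tilde{G}_i \Big(\sum_{m=1}^J z_m^* F_m\Big) F_i \Big(\sum_{n=1}^J z_n F_n\Big)\Big].
\end{equation}
Setting $B = \sum_{n=1}^J z_n F_n$, and noting that $B^\dagger = \sum_{m=1}^J z_m^* F_m$ because each $F_m$ is Hermitian, the right-hand side becomes precisely $\sum_{i=1}^J \Tr[\tilde{G}_i B^\dagger F_i B]$, i.e., the left-hand side of \cref{eq:Gc.positive}. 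Thus the value of the quadratic form $\vec{z}^\dagger a\, \vec{z}$ coincides with the expression tested in \cref{eq:Gc.positive} for this particular $B$.

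The key remaining step is to match the two quantifiers. Since each $F_n$ with $1 \le n \le J$ is traceless, every $B = \sum_n z_n F_n$ is traceless. Conversely, $\{F_n\}_{n=1}^J$ are $J$ linearly independent traceless operators, and the space of traceless operators in $M(d,\mathbb{C})$ has complex dimension $d^2 - 1 = J$; hence $\{F_n\}_{n=1}^J$ is a $\mathbb{C}$-basis of that space. Therefore $\vec{z} \mapsto B = \sum_n z_n F_n$ is a bijection from $\mathbb{C}^J$ onto the traceless operators, and under it the statement ``\cref{eq:Gc.positive} holds for all traceless $B$'' is identical to ``$\vec{z}^\dagger a\, \vec{z} \geq 0$ for all $\vec{z} \in \mathbb{C}^J$.'' Combined with $a = a^\dagger$ and the GKLS characterization, this yields the claimed equivalence.

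The one genuinely substantive point, and the step I would be most careful about, is the identification of $\{F_n\}_{n=1}^J$ as a \emph{complex}-linear basis of the full space of traceless operators, rather than merely a \emph{real} basis of the traceless Hermitian ones (as it is used elsewhere in the paper via \cref{prop:0}). This is what guarantees that allowing complex $\vec{z}$ sweeps out \emph{all} traceless $B$, so that the a-priori weaker test over traceless Hermitian $B$ is in fact equivalent to full positive semidefiniteness of $a$. Everything else is a direct rearrangement of \cref{eq:Gctoa} together with the bijectivity established in \cref{thm:forward.inverse,lm:GctoHa}.
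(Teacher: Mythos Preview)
Your proof is correct and follows essentially the same route as the paper: both substitute \cref{eq:Gctoa} into the quadratic form $\sum_{m,n} z_m^* a_{mn} z_n$ and then identify $\mathbb{C}^J$ with the traceless operators via $B=\sum_n z_n F_n$. Your explicit remark that $\{F_n\}_{n=1}^J$ is a $\mathbb{C}$-basis (not merely an $\mathbb{R}$-basis) of $M_0(d,\mathbb{C})$ makes the quantifier-matching step slightly more transparent than in the paper, but the argument is otherwise identical.
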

\begin{proof}
  The condition $a\geq 0$ is equivalent to the condition
  that for any vector $b=\{b_m\}_{m=1}^{J}$
  one has
  \begin{equation}
    \label{eq:cp.bab}
    \sum_{m,n=1}^{J} b_m^*a_{mn} b_n \geq 0\ .
  \end{equation}
  Such vectors are in one-to-one correspondence with $B\in M_0(d,\mathbb{C})$
  [i.e., with traceless $B \in B(\mathcal{H})$]
  via expansion in the basis $\{F_m\}$:
  \begin{equation}
    \label{eq:cp.b}
    B = \sum_{m=1}^{J} b_m F_m, \qquad b_m = \Tr(F_m B).
  \end{equation}
  Substituting \cref{eq:Gctoa} into the right
  hand side of \cref{eq:cp.bab} and simplifying it using
  \cref{eq:cp.b}
  we obtain the right-hand side of \cref{eq:Gc.positive}:
  \begin{align}
 & \sum_{m,n=1}^{J} b_m^*a_{mn} b_n = \notag{}\\
 & \quad \sum_{m,n=1}^{J} b_m^*\sum_{i=1}^{J} \Tr\bigg[\bigg(
            \sum_{j=1}^{J} G_{ij} F_{j} + c_{i} I
          \bigg)F_{m}F_{i}F_{n}\bigg] b_n = \notag{}\\
 & \quad \sum_{i=1}^{J} \Tr\bigg[\bigg(
            \sum_{j=1}^{J} G_{ij} F_{j} + c_{i} I
          \bigg)B^\dag F_{i}B\bigg]\ .
  \end{align}
\end{proof}

\subsection{Complete positivity and convex geometry}
There is a fruitful complementary description of our complete positivity results using convex geometry, which we explain in this subsection. For the necessary background in convex geometry, see \cref{sec:convex}.

The last part of \cref{thm:main-result},
i.e., the just-proven \cref{lm:thm:main-result:part3},
describes the set of pairs $(G, \vec{c})$ corresponding to
positive semidefinite $a$ (or, equivalently, to
Lindblad master equations). We denote this set by $\mathcal{V}_{6{+}}$
(a subset of $\mathcal{V}_6$). Moreover, we
introduce $\mathcal{V}_{i{+}}$ ($i=1,\dots,6$) to be images of
$\mathcal{V}_{6{+}}$
under the maps $\varphi_{i6}$ described by
the commutative diagram \cref{fig:V1V6}.
$\mathcal{V}_{1{+}}$ consists of pairs $(H,a)$ where $H$ is an arbitrary
Hermitian matrix and $a$ is positive semidefinite.

The set $M_{+}(d,\mathbb{C})$ of positive semidefinite matrices
is a convex cone with compact support (see \cref{lm:Mplus.1} in \cref{sec:convex}).
The complete positivity part of \cref{thm:main-result} uses the 
description of the cone $M_{+}(d,\mathbb{C})$ along with a set of linear 
inequalities \cref{eq:cp.bab}, i.e., the set of supporting hyperplanes.

An alternative is to note that $M_{+}(d,\mathbb{C})$ is a convex 
hull of its extreme rays (\cref{lm:Mplus.3}). 
The extreme rays of $M_{{+}}(J,\mathbb{C})$ are the rays generated by rank 1 
matrices (\cref{lm:Mplus.2}), i.e., matrices $a$ of the form $bb^\dag$, where $b\in\mathbb{C}^J$.
  
  Using the map $\varphi_{61}$ this provides a description for 
$\mathcal{V}_{6{+}}$ given by the following:
\begin{myproposition}
  The cone $\mathcal{V}_{6+}$ of pairs $(G,\vec{c})$ generating a
  (completely positive) Lindblad master equation coincides with the convex hull of the 
  following elements:
  \begin{itemize}
    \item Elements of the form $(Q, \vec{0})$, where $Q$ is given by \cref{eq:QfromH} for $H \in M_{0,\textrm{sa}}(d,\mathbb{C})$;
    \item Elements of the form $(R, \vec{c})$, where
    \bes
    \begin{align}
      R_{kl} &= \Tr\big[F_k \big(B F_l B^\dagger
        - \frac{1}{2} \left\{B^\dagger B, F_l\right\} \big)\big] \\
      c_k &= \frac{1}{d} \Tr(\big[B , B^\dagger \big]F_k)\\
      B &\in M_0(d, \mathbb{C})\ .
    \end{align}
    \ees
  \end{itemize}
\end{myproposition}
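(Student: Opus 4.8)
The plan is to combine the positive-semidefiniteness characterization from \cref{lm:thm:main-result:part3} with the extreme-ray description of the cone $M_+(J,\mathbb{C})$ recalled just above the statement. First I would establish the overall strategy: since $\varphi_{16}$ (equivalently, the forward map $\varphi_{61}$) is an $\mathbb{R}$-linear bijection by \cref{thm:forward.inverse}, it carries convex hulls to convex hulls and extreme rays to extreme rays. Thus it suffices to describe the extreme rays of $\mathcal{V}_{1+}$ and push them through $\varphi_{61}$. By \cref{lm:Mplus.2,lm:Mplus.3} (invoked in the preceding paragraph), $M_+(J,\mathbb{C})$ is the convex hull of its rank-one extreme rays $a = bb^\dagger$ with $b\in\mathbb{C}^J$, and $\mathcal{V}_{1+}$ consists of pairs $(H,a)$ with $H$ arbitrary Hermitian and $a\succeq 0$. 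Since the $H$-component and the $a$-component decouple linearly, the extreme structure of $\mathcal{V}_{1+}$ is generated by two families: pairs $(H,0)$ with $H$ ranging over $M_{0,\textrm{sa}}(d,\mathbb{C})$ (the lineality/Hamiltonian directions), and pairs $(0,bb^\dagger)$ (the genuine extreme rays of the dissipative cone).

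Next I would compute the images of these two families under $\varphi_{61}$, i.e., apply \cref{eq:5to6} after passing through the Liouvillian. For the first family, $(H,0)$ has $\mathcal{L}=\mathcal{L}_H$ purely Hamiltonian, so $R=0$, $\vec{c}=\vec{0}$, and $G=Q$ with $Q$ given by \cref{eq:QfromH}; this yields exactly the stated elements $(Q,\vec{0})$. For the second family, I would substitute the rank-one rate matrix $a=bb^\dagger$, equivalently a single Lindblad operator $L=\sum_m b_m^* F_m = B$ (a traceless $B\in M_0(d,\mathbb{C})$ via the correspondence \cref{eq:cp.b}), into the definitions \cref{eq:LatoR,eq:cfroma} of $R$ and $\vec{c}$. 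With a single term the dissipator is $\mathcal{L}_a[\cdot]=B\cdot B^\dagger - \frac12\{B^\dagger B,\cdot\}$, so $R_{kl}=\Tr[F_k(BF_lB^\dagger-\tfrac12\{B^\dagger B,F_l\})]$ and $c_k=\frac1d\Tr([B,B^\dagger]F_k)$, matching the stated formulas. The fact that $B$ ranges over all of $M_0(d,\mathbb{C})$ follows because $b\mapsto B=\sum_m b_m F_m$ is a bijection onto the traceless matrices.

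Finally I would assemble the convex-hull statement. Because $\varphi_{61}$ is a linear bijection, it maps the convex cone $\mathcal{V}_{1+}=\{(H,a): H=H^\dagger,\,a\succeq 0\}$ onto $\mathcal{V}_{6+}$, and it maps the convex hull of a generating set onto the convex hull of the images of that set. Since $\mathcal{V}_{1+}$ is the convex hull (more precisely, the Minkowski sum of the Hamiltonian subspace with the cone generated by the rank-one $a$'s), its image $\mathcal{V}_{6+}$ is the convex hull of the images of the two families computed above, which are precisely the two bulleted sets in the statement. I expect the main obstacle to be the bookkeeping around whether one needs the full convex hull or a Minkowski sum of a linear subspace (the Hamiltonian directions, which form a two-sided lineality space rather than a pointed cone) with the dissipative cone. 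I would address this by noting that the $(Q,\vec{0})$ family is closed under negation and scaling (since $H\mapsto -H$ and $H\mapsto\lambda H$ are allowed), so including it in a convex-hull description is equivalent to adding its linear span; care is needed to confirm that ``convex hull'' as stated is interpreted with these generating elements allowed to be scaled arbitrarily, i.e., as generators of a cone, so that the two families together generate all of $\mathcal{V}_{6+}$.
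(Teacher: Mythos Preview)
Your proposal is correct and follows essentially the same route as the paper: transport the convex structure of $\mathcal{V}_{1+}$ through the linear bijection $\varphi_{61}$, split into the Hamiltonian family $(H,0)$ and the rank-one dissipative family $(0,bb^\dagger)$, and compute the images via \cref{eq:QfromH} and \cref{eq:Rfroma,eq:cfroma}. One minor slip: with the convention of \cref{eq:cp.b} and $a=bb^\dagger$ one has $B=\sum_m b_m F_m$ (no conjugate), so that $\sum_{m,n} b_m b_n^* F_m\cdot F_n = B\cdot B^\dagger$; this is inconsequential since $b\mapsto b^*$ is a bijection on $\mathbb{C}^J$ and you land on the right formulas anyway. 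Your closing remarks on the lineality space are more careful than the paper, which simply asserts that $\mathcal{V}_{6+}$ is ``generated'' by the two families.
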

\begin{proof}
  According to the discussion above, $\mathcal{V}_{6+}$ is generated
  by elements of the form $\varphi_{61}((H,a=0))$ and
  $\varphi_{61}((H=0, a=bb^\dagger))$. The former are given by \cref{eq:QfromH}. 
  The image of the latter can be computed using
  \cref{eq:Rfroma,eq:cfroma},
  after identifying $b \in \mathbb{C}^J$ with elements $B \in M_0(d,\mathbb{C})$
  via \cref{eq:cp.b}:
  \bes
  \begin{align}
    R_{kl} &= \sum_{i,j=1}^{J} b_{i} b_{j}^*\Tr\big[F_k \big(F_i F_l F_j
      - \frac{1}{2} \left\{F_j F_i, F_l\right\} \big)\big] \notag\\
    &= \Tr\big[F_k \big(B F_l B^\dagger
      - \frac{1}{2} \left\{B^\dagger B, F_l\right\} \big)\big] \\
      c_k &= \frac{1}{d} \sum_{i,j=1}^{J} b_{i} b_{j}^*\Tr(\big[F_i , F_j \big]F_k)
    = \frac{1}{d} \Tr(\big[B , B^\dagger \big]F_k)\ .
  \end{align}
  \ees

\end{proof}

\subsection{Consistency}
\label{ss:consistency}

By construction, the forward maps defined in \cref{thm:forward.inverse} are 
consistent with our previously defined maps. More explicitly, this is stated
in the following proposition.
\begin{myproposition}
\label{prop:GtoHconsistency}
  Let $H'$ be a Hermitian operator in $\mc{B}(\mathcal{H})$ and let $a$ be
  a Hermitian $J\times J$ matrix. Let $Q,R,\vec{c}$
  be obtained from $(H',a)$
  using \cref{eq:Qjk,eq:Rjk,eq:cjdef}, and $G=Q+R$. Let $H=H'-\frac{1}{d}\Tr(H')I$ be the traceless component of $H'$, and let $\varphi_{ij}$ be the maps from \cref{thm:forward.inverse}. Then
  \bes
  \begin{align}
    \label{eq:Q0eqphi61}
    (Q,0) &= \varphi_{61}((H,0)), \\
    \label{eq:Rceqphi61}
    (R,c) &= \varphi_{61}((0,a)), \\
    \label{eq:Gceqphi61}
    (G,c) &= \varphi_{61}((H,a)).
  \end{align}
  \ees
\end{myproposition}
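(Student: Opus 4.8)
The plan is to use two facts from \cref{thm:forward.inverse}: that $\varphi_{61}$ is $\mathbb{R}$-linear, and that it can be realized concretely as the composition $\varphi_{61}=\varphi_{65}\circ\varphi_{54}\circ\varphi_{41}$ along the path $\mathcal{V}_1\to\mathcal{V}_4\to\mathcal{V}_5\to\mathcal{V}_6$ of \cref{fig:V1V6}. Along this path $\varphi_{41}$ assembles the Liouvillian $\mathcal{L}$ from $(H,a)$ via \cref{eq:LEa,eq:L_H,eq:L_a}, $\varphi_{54}$ restricts to Hermitian operators, and $\varphi_{65}$ extracts $(G,\vec{c})$ from $\mathcal{L}$ via \cref{eq:5to6}. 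The key observation is that the formulas \cref{eq:5to6} defining $\varphi_{65}$ are \emph{identical} to the defining relations \cref{eq:Qjk,eq:Rjk} for $G_{nm}=\Tr[F_n\mathcal{L}(F_m)]$ and to \cref{eq:cjdef} combined with \cref{eq:Eij} for $c_n=\tfrac{1}{\sqrt{d}}\boldsymbol{\mathcal{L}}_{n0}$. Thus the proposition reduces to checking which Liouvillian is fed into these identical read-off maps in each of the three cases.

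First I would establish \cref{eq:Rceqphi61}. Setting $H=0$ in $\varphi_{41}$ gives $\mathcal{L}=\mathcal{L}_a$, so $\varphi_{65}$ returns $G_{nm}=\Tr[F_n\mathcal{L}_a(F_m)]=R_{nm}$ by \cref{eq:Rjk} and $c_n=\tfrac{1}{\sqrt{d}}\Tr[F_n\mathcal{L}_a(F_0)]$. Because the Hamiltonian part satisfies $\mathcal{L}_{H'}(F_0)=-i[H',I]/\sqrt{d}=0$ (equivalently $Q_{n0}=0$, as shown in the proof of \cref{prop:4}), the proposition's $c_n=\tfrac{1}{\sqrt{d}}\boldsymbol{\mathcal{L}}_{n0}$ receives no contribution from $H'$ and equals this same quantity. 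Hence $\varphi_{61}((0,a))=(R,\vec{c})$.

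Next, for \cref{eq:Q0eqphi61} I would set $a=0$, giving $\mathcal{L}=\mathcal{L}_H$ built from the \emph{traceless} $H$. The read-off yields $G_{nm}=\Tr[F_n\mathcal{L}_H(F_m)]$ and, since $\mathcal{L}_H(I)=-i[H,I]=0$, the component $c_n=\tfrac{1}{d}\Tr[F_n\mathcal{L}_H(I)]=0$. This is the one place requiring care: the proposition defines $Q$ from $H'$, whereas $\varphi_{61}$ must use the traceless $H$. I would close the gap by noting that $H'-H=\tfrac{1}{d}\Tr(H')I$ is a multiple of the identity, so that $\mathcal{L}_{H'}(\cdot)=-i[H',\cdot]=-i[H,\cdot]=\mathcal{L}_H(\cdot)$; the Hamiltonian superoperator, and therefore $Q_{nm}=\Tr[F_n\mathcal{L}_H(F_m)]$, is invariant under this shift. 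Thus the $Q$ of the proposition matches the $G$-block of $\varphi_{61}((H,0))$, giving $\varphi_{61}((H,0))=(Q,\vec{0})$.

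Finally, \cref{eq:Gceqphi61} follows immediately: by $\mathbb{R}$-linearity of $\varphi_{61}$ together with $(H,a)=(H,0)+(0,a)$, we obtain $\varphi_{61}((H,a))=(Q,\vec{0})+(R,\vec{c})=(Q+R,\vec{c})=(G,\vec{c})$. The only genuine obstacle is the tracelessness bookkeeping of \cref{eq:Q0eqphi61} described above; every other step is a direct comparison of identical defining formulas and requires no new estimate or computation.
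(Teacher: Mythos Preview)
Your proposal is correct and follows essentially the same approach as the paper: factor $\varphi_{61}=\varphi_{65}\circ\varphi_{54}\circ\varphi_{41}$, observe that $\mathcal{L}_{H'}=\mathcal{L}_H$ since $H'-H$ is a multiple of the identity, compare the defining formulas \cref{eq:Qjk,eq:Rjk,eq:cjdef} with \cref{eq:5to6}, and obtain \cref{eq:Gceqphi61} by linearity. The only cosmetic difference is ordering---the paper handles the tracelessness bookkeeping up front and then treats $(Q,\vec{0})$ before $(R,\vec{c})$---but the logical content is identical.
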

\begin{proof}
  First, note that \cref{eq:L_H} yields the same
  operator $\mathcal{L}$ if we change $H$ by a constant.
  Hence, we could apply \cref{eq:Qjk} to $H$ instead of $H'$
  and get the same $Q$. One of the equivalent ways to
  describe $\varphi_{16}$ is $\varphi_{61} = \varphi_{65} \circ \varphi_{54} \circ \varphi_{41}$. Note that
  \begin{equation}
  (\varphi_{54} \circ \varphi_{41})((H,0)) =
  \mathcal{L}_{H} +
  \left.\mathcal{L}_{a}\right|_{a=0}
  = \mathcal{L}_{H}.
  \end{equation}
  Thus, \cref{eq:Q0eqphi61} follows from comparison of \cref{eq:Qjk} with \cref{eq:5to6}.
  Similarly,
  \begin{equation}
  (\varphi_{54} \circ \varphi_{41})((0,a)) = \mathcal{L}_{a},
  \end{equation}
  and \cref{eq:Rceqphi61} follows from comparison of
  \cref{eq:Rjk,eq:cjdef} with \cref{eq:5to6}.
  Finally, \cref{eq:Gceqphi61} is the sum of \cref{eq:Q0eqphi61,eq:Rceqphi61}.
\end{proof}

In the following Proposition, we show that \cref{eq:GctoH,eq:HfromQ} are consistent, i.e., yield the same result when used to recover $H$. 

\begin{myproposition}
\label{prop:recoverHconsistency1}
For any $J\times J$ matrix $G$, the r.h.s. of \cref{eq:GctoH} and \cref{eq:HfromQ} (when $Q$ is replaced with $G$) are equal
to each other.
\end{myproposition}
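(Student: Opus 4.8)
The plan is to reduce both expressions to a common form using only the structure-constant identity \cref{eq:F-Lie} and the total antisymmetry of the $f_{jkl}$. First I would substitute the commutator expansion $[F_m,F_n] = i\sum_{p=1}^J f_{mnp}F_p$ into the right-hand side of \cref{eq:GctoH}. The explicit factor of $i$ supplied by the structure constants cancels the $1/i$ in the prefactor $1/(2id)$, leaving
\beq
\frac{1}{2id}\sum_{m,n=1}^J G_{nm}[F_m,F_n] = \frac{1}{2d}\sum_{m,n,p=1}^J G_{nm}f_{mnp}F_p\ .
\eeq

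Next I would relabel the two summation indices $m\leftrightarrow n$, which turns $G_{nm}f_{mnp}$ into $G_{mn}f_{nmp}$, and then invoke the antisymmetry $f_{nmp}=-f_{mnp}$ to extract an overall sign:
\beq
\frac{1}{2d}\sum_{m,n,p=1}^J G_{nm}f_{mnp}F_p = -\frac{1}{2d}\sum_{m,n,p=1}^J G_{mn}f_{mnp}F_p\ .
\eeq
Renaming the dummy indices $(m,n,p)\to(j,k,m)$ to match the conventions of \cref{eq:HfromQ} yields exactly $-\frac{1}{2d}f_{jkm}G_{jk}F_m$, i.e., the right-hand side of \cref{eq:HfromQ} with $Q$ replaced by $G$. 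This completes the identification.

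The computation is purely mechanical, so there is no genuine obstacle; the only point requiring care is the sign bookkeeping that follows from the antisymmetry step. It is worth noting that both expressions are automatically insensitive to the symmetric part of $G$ --- in \cref{eq:GctoH} because $[F_m,F_n]$ is antisymmetric under $m\leftrightarrow n$, and in \cref{eq:HfromQ} because $f_{jkm}$ is antisymmetric under $j\leftrightarrow k$ --- so one could equivalently verify the equality after first antisymmetrizing $G$, but this extra step is unnecessary for the direct argument above.
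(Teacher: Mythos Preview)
Your proof is correct and follows essentially the same route as the paper's: both substitute the structure-constant identity \cref{eq:F-Lie} into the commutator in \cref{eq:GctoH} and match the result to \cref{eq:HfromQ}, with the only nontrivial step being the sign from the antisymmetry of $f_{jkl}$. Your write-up is in fact more explicit than the paper's, which compresses the whole argument into a single line.
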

\begin{proof}
  Comparing \cref{eq:GctoH} to \cref{eq:HfromQ} one can see that for a given $G$ the claim is equivalent to
  \begin{equation}
    \label{eq:Hconsist:1}
    \frac{1}{2id} G_{nm} [F_m,F_n] = -\frac{1}{2d} f_{nml} G_{nm} F_{l}
  \end{equation}
  where we use the Einstein summation notation. \cref{eq:Hconsist:1} follows directly from the definition of the structure constants \cref{eq:F-Lie}.
\end{proof}

Next, we show that the inverse maps [not only \cref{eq:GctoH,eq:HfromQ}] also
yield the same result when used to recover $H$,
and in fact one may interchange $G$ and $Q$ when using these
maps, or subtract any matrix from $G$ which can be obtained using the formula for $R$ (even from a different $\mathcal{L}_a$; e.g., any real symmetric matrix).

\begin{myproposition}
  \label{prop:recoverHconsistency2}
  Let $H,a,Q,R,G,\vec{c}$ be the same as in \cref{prop:GtoHconsistency}. Then all of the following methods recover the same $H$:
  \begin{enumerate}
    \item Applying \cref{eq:GctoH} to $G$ (as is);
    \item Applying \cref{eq:GctoH} to $Q$ in place of $G$;
    \item Computing $Q'$ to be the antisymmetric part of $G$
    and applying \cref{eq:GctoH} to $Q'$ in place of $G$;
    \item Computing $Q'' = G - R'$ where $(R',\vec{c}') = \varphi_{61}((0,a'))$ for any Hermitian $J\times J$ matrix $a'$ (possibly different from $a$) and applying \cref{eq:GctoH} to $Q''$ in place of $G$.
  \end{enumerate}
\end{myproposition}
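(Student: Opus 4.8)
The plan is to reduce all four methods to a single linear map and then exploit two elementary properties of it. Write $\mathcal{M}(G) \equiv \frac{1}{2id}\sum_{m,n=1}^{J} G_{nm}[F_m,F_n]$ for the right-hand side of \cref{eq:GctoH}, so that each of the four methods amounts to evaluating $\mathcal{M}$ on a particular matrix ($G$, $Q$, $Q'$, or $Q''$). Everything then follows from linearity once the two properties are in hand.

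First I would note that $\mathcal{M}$ annihilates every symmetric matrix: since $[F_m,F_n]=-[F_n,F_m]$ is antisymmetric under $m\leftrightarrow n$, contracting it against a symmetric matrix produces (after relabeling the summation indices) a sum equal to its own negative, hence zero. Consequently $\mathcal{M}(G)$ depends only on the antisymmetric part of $G$. Because method~3 applies $\mathcal{M}$ to $Q'$, which is by definition the antisymmetric part of $G$, this observation alone establishes that methods~1 and~3 agree.

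The crux is the claim that $\mathcal{M}(R')=0$ whenever $(R',\vec{c}') = \varphi_{61}((0,a'))$ for some Hermitian $a'$, i.e., the $R$-block of any purely dissipative generator carries no Hamiltonian information. I would obtain this directly from the bijectivity established in \cref{thm:forward.inverse}: applying the inverse map $\varphi_{16}$ to $(R',\vec{c}')$ must return $(0,a')$, while by \cref{lm:GctoHa} the $H$-component of $\varphi_{16}((R',\vec{c}'))$ is precisely $\mathcal{M}(R')$ (with $R'$ playing the role of $G$ in \cref{eq:GctoH}). Equating the $H$-components of the two sides forces $\mathcal{M}(R')=0$.

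With these two facts the remaining identities follow by linearity. For method~4, $Q''=G-R'$ gives $\mathcal{M}(Q'')=\mathcal{M}(G)-\mathcal{M}(R')=\mathcal{M}(G)$, matching method~1. Method~2 is the special case $a'=a$: by \cref{eq:Rceqphi61} of \cref{prop:GtoHconsistency} we have $(R,\vec{c})=\varphi_{61}((0,a))$ and $G=Q+R$, so $Q''=G-R=Q$, and method~2 therefore coincides with method~4 and hence with method~1. The only real obstacle is the vanishing $\mathcal{M}(R')=0$; a head-on computation starting from $R'_{nm}=\Tr[F_n\mathcal{L}_{a'}(F_m)]$ would be unpleasant and would have to contend with the generally nonzero antisymmetric part of $R'$ (present for $d\ge 3$), whereas invoking the inverse map from \cref{thm:forward.inverse} dispatches it without any calculation.
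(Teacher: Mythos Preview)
Your argument is correct and rests on the same two ingredients as the paper: the antisymmetry of the commutator (for item~3) and the bijectivity $\varphi_{16}\circ\varphi_{61}=\id$ from \cref{thm:forward.inverse} (for the rest). The paper's organization differs only cosmetically: it applies the bijection directly to $(H,a)$, $(H,0)$, and $(H,a-a')$ to handle items~1, 2, and~4 separately, whereas you reduce items~2--4 to item~1 via linearity and the vanishing $\mathcal{M}(R')=0$. One small omission worth fixing: you establish that all four methods return the common value $\mathcal{M}(G)$, but you never explicitly state that $\mathcal{M}(G)=H$; this follows from the very same bijectivity argument you invoke for $\mathcal{M}(R')=0$, now applied to $(G,\vec{c})=\varphi_{61}((H,a))$ from \cref{eq:Gceqphi61}, and should be said.
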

\begin{proof}
    According to \cref{thm:forward.inverse}, the diagram \cref{fig:V1V6} is commutative. Since \cref{eq:GctoHa} is
    is the formula for $\varphi_{16}$ and $\varphi_{61}((H,a)) = (G,c)$, applying \cref{eq:GctoH} to $G$ recovers the original $H$ (statement \#1). 
    
    Since $\varphi_{61}((H,0)) = (Q,\vec{0})$,
    applying \cref{eq:GctoH} to $Q$ also recovers the original $H$ (statement \#2).

    Since \cref{eq:GctoH} contracts $G_{nm}$ with a commutator,
    which is antisymmetric with respect to the order of indices,
    the result of that application is independent of adding
    any symmetric matrix to $G$ (statement \#3).

    The final statement follows from
    $\varphi_{61}((H, a - a')) = (Q'',c-c')$.
\end{proof}

\subsection{Decomposing
\texorpdfstring{$G$ into $G = Q + R$}{G into G = Q + R}, and the
space of possible matrices \texorpdfstring{$R$}{R}}
\label{ss:Decomposing}

\begin{myproposition}
\label{prop:Decomposing}
The decomposition of $G$ into $Q$ and $R$ can be obtained by
the following formulas:
\bes
\label{eq:QRfromG}
\begin{align}
  \label{eq:QfromG.F}
  Q_{ij} &= -\frac1{2d}\sum_{m,n=1}^{J} G_{nm}\Tr\left(
  [F_{i},F_{j}],[F_{n},F_{m}]\right) \\
  \label{eq:QfromG.f}
  &= \frac1{2d}\sum_{m,n,k=1}^{J} G_{nm}f_{knm}f_{kij}.\\
  \label{eq:RfromG}
  R_{ij} &= G_{ij} - Q_{ij}.
\end{align}
\ees
\end{myproposition}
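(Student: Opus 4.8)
The plan is to exploit the fact, already available to us, that the antisymmetric ``Hamiltonian part'' $Q$ is determined entirely by the traceless Hamiltonian $H$, and that $H$ is itself recoverable from $G$ alone via \cref{eq:GctoH}. I would therefore obtain $Q$ as the composition $G \mapsto H \mapsto Q$, express both maps through the structure constants $f_{ijk}$, and simplify. The two displayed forms of $Q$ then differ only by how the structure-constant contraction is written, and \cref{eq:RfromG} is immediate from the defining relation $G = Q + R$.

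Concretely, I would first recall from \cref{ss:nice-op} the relation $Q_{ij} = -f_{ijk}\,h_k$, obtained by inserting \cref{eq:F-Lie} into \cref{eq:QfromH}, where the $h_k$ are the expansion coefficients of $H = \sum_k h_k F_k$. Next, rewriting the commutator in \cref{eq:GctoH} as $[F_m,F_n] = i f_{mnk} F_k$ and using uniqueness of the basis expansion to read off the coefficient of $F_k$ gives $h_k = \frac{1}{2d}\sum_{m,n} G_{nm} f_{mnk}$. Substituting this into $Q_{ij} = -f_{ijk} h_k$ yields
\[
  Q_{ij} = -\frac{1}{2d}\sum_{m,n,k} G_{nm}\, f_{ijk}\, f_{mnk}\ ,
\]
and the total antisymmetry of $f$ (cyclic permutations preserve the sign, a transposition flips it) converts $f_{mnk} = -f_{knm}$ and $f_{ijk} = f_{kij}$, reproducing \cref{eq:QfromG.f}. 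For the trace form \cref{eq:QfromG.F}, I would instead expand both commutators by \cref{eq:F-Lie} and use orthonormality \cref{eq:F-reqs}, $\Tr(F_k F_p) = \delta_{kp}$, to get $\Tr([F_i,F_j][F_n,F_m]) = -f_{ijk} f_{nmk}$; inserting this into the right-hand side of \cref{eq:QfromG.F} produces exactly the same contraction, proving that the two forms agree.

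The one step that is not merely mechanical, and which I expect to be the crux, is the justification that \cref{eq:GctoH} applied to the full matrix $G$, rather than to its Hamiltonian part $Q$, nonetheless returns the correct $H$; equivalently, that the dissipative block $R$ contributes nothing when contracted against the commutator in \cref{eq:GctoH}. This is not evident term by term, since $R$ generically carries a nonzero antisymmetric component for $d \geq 3$. Rather than verify this cancellation by hand, I would invoke \cref{prop:recoverHconsistency2} (statement~1 versus statement~2) and, at bottom, the commutativity of the diagram in \cref{thm:forward.inverse}, which guarantees $\varphi_{16}\circ\varphi_{61} = \id$ and hence that \cref{eq:GctoH} recovers $H$ independently of $a$. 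With that fact secured, the coefficient formula for $h_k$ above is legitimate, and everything remaining is bookkeeping with the index conventions of the structure constants.
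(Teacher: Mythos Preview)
Your proof is correct and follows essentially the same route as the paper: both arguments compose the maps $G \mapsto H$ (via \cref{eq:GctoH}) and $H \mapsto Q$ (via \cref{eq:QfromH}), differing only in whether the intermediate computation is carried out in trace form first or in structure-constant form first. Your explicit acknowledgment that \cref{eq:GctoH} applied to the full $G$ returns the correct $H$ (justified through \cref{prop:recoverHconsistency2} and the commutativity of \cref{fig:V1V6}) makes explicit what the paper leaves implicit, but the underlying logic is identical.
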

\begin{proof}
Substituting \cref{eq:GctoH} into \cref{eq:QfromH-a}, we obtain:
\begin{equation}
  Q_{ij} = \frac1{2d}\sum_{m,n=1}^{J} G_{nm}\Tr\left(
    F_{i}\left[F_{j},\left[F_m,F_n\right]\right]\right),
\end{equation}
which is equivalent to \cref{eq:QfromG.F}.
\cref{eq:QfromG.f} is then obtained
using the definition of the structure constants in \cref{eq:F-Lie}. Finally, \cref{eq:RfromG} follows from $G = R + Q$.

Note that an alternative way to obtain the formula
for $R$ is to substitute \cref{eq:Gctoa} into \cref{eq:Rfroma}
(which results in a significantly more complex computation
giving the same final result).
\end{proof}

Naively, one might be tempted to set $Q$
to be the antisymmetric part of $G$, or $R$
to be the symmetric part of $G$, instead of using the expressions given in \cref{prop:Decomposing}. This, however,
only works for $d \leq 2$.
This follows
from the dimension counting done in \cref{cor:Rdim} below: the space of antisymmetric matrices $R$
has dimension $J(J-3)/2$, which is non-vanishing unless $d\in\{1,2\}$ (recall that $J=d^2-1$).
We give an example illustrating this for $d=3$ in \cref{sss:factor-su3}.

However, as stated in 
\cref{prop:recoverHconsistency1,prop:recoverHconsistency2}
above, taking the antisymmetric part of $G$ would
still result in the correct $H$ being recovered.

\begin{myproposition}
\label{prop:Rcond}
For any $R \in M(J,\mathbb{R})$, $\vec{c} \in \mathbb{R}^J$
a pair $(R, \vec{c})$ can be obtained from some Hermitian
matrix $a$ if and only if
\begin{equation}
  \label{eq:Rcond}
  \sum_{m,n=1}^{J} R_{mn} [F_n, F_m] = 0.
\end{equation}
In particular, this condition is independent of $\vec{c}$,
i.e., for any $\vec{c}' \in \mathbb{R}^J$ the
pair $(R, \vec{c})$ can be obtained
from some Hermitian matrix $a$ if and only if $(R, \vec{c}')$ can
(from a possibly different Hermitian matrix $a'$).
\end{myproposition}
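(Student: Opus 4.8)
The plan is to recognize that condition~\cref{eq:Rcond} is exactly the statement that the Hamiltonian recovered from $R$ via the inverse formula~\cref{eq:GctoH} vanishes, and then to exploit the bijectivity of the maps in \cref{thm:forward.inverse} together with the consistency result~\cref{prop:GtoHconsistency}. The central observation I would establish first is the following: applying~\cref{eq:GctoH} with $G$ replaced by $R$ produces the operator $H = \frac{1}{2id}\sum_{m,n=1}^{J} R_{nm}[F_m,F_n]$, and relabeling the summation indices $m\leftrightarrow n$ turns this into $\frac{1}{2id}\sum_{m,n=1}^{J} R_{mn}[F_n,F_m]$. Hence $H=0$ if and only if~\cref{eq:Rcond} holds. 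Since~\cref{eq:GctoH} does not involve $\vec{c}$ at all, this observation simultaneously delivers the claimed independence of $\vec{c}$.

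For the ``only if'' direction, I would assume $(R,\vec{c}) = \mathcal{F}(a)$ for some Hermitian $a$. By~\cref{prop:GtoHconsistency}, specifically~\cref{eq:Rceqphi61}, this is the same as $(R,\vec{c}) = \varphi_{61}((0,a))$. Applying the inverse map $\varphi_{16}$ and using commutativity of the diagram in \cref{fig:V1V6}, we obtain $\varphi_{16}((R,\vec{c})) = (0,a)$, whose $H$-component is zero; by the central observation this is precisely~\cref{eq:Rcond}.

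For the ``if'' direction, I would assume~\cref{eq:Rcond} and fix an arbitrary $\vec{c}\in\mathbb{R}^J$. Viewing $(R,\vec{c})$ as an element of $\mathcal{V}_6$, I apply $\varphi_{16}$ through~\cref{eq:GctoH,eq:Gctoa} to produce $(H,a) = \varphi_{16}((R,\vec{c}))$, where $a$ is automatically Hermitian because $\varphi_{16}$ maps into $\mathcal{V}_1$ by \cref{thm:forward.inverse}. The central observation gives $H=0$, so $(R,\vec{c}) = \varphi_{61}((0,a)) = \mathcal{F}(a)$ by~\cref{prop:GtoHconsistency}, and hence $(R,\vec{c})\in\text{Im}(\mathcal{F})$. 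As $\vec{c}$ was arbitrary, the independence of $\vec{c}$ follows once more.

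The conceptual heart of the argument---and the only genuinely nonobvious step---is the identification of~\cref{eq:Rcond} with the vanishing of the Hamiltonian reconstructed from $R$; after that, everything reduces to bookkeeping along the commutative diagram. The points requiring care are the index relabeling in~\cref{eq:GctoH} and invoking the precise consistency statement guaranteeing $\mathcal{F}(a)=\varphi_{61}((0,a))$, so that the reconstructed $a$ genuinely maps back to the given $(R,\vec{c})$ under $\mathcal{F}$. A more computational alternative would instead count dimensions: injectivity of $\mathcal{F}$ (established earlier) fixes $\dim\text{Im}(\mathcal{F})$, and one could check that~\cref{eq:Rcond} cuts out a subspace of matching dimension by verifying that the $J$ linear constraints $\sum_{m,n} R_{mn} f_{nmk}=0$ are independent; but the diagram-based argument is cleaner and avoids this count.
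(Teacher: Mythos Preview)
Your proposal is correct and follows essentially the same approach as the paper's own proof: both identify condition~\cref{eq:Rcond} with the vanishing of the Hamiltonian reconstructed from $R$ via~\cref{eq:GctoH}, and then invoke the bijectivity of $\varphi_{16}$ from \cref{thm:forward.inverse} together with~\cref{prop:GtoHconsistency} to conclude. The paper presents the argument as a single chain of equivalences rather than separating the two directions, but the logic is the same.
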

\begin{proof}
According to \cref{prop:GtoHconsistency} a pair $(R, \vec{c})$
can be obtained from some Hermitian matrix $a$ if and only if
$(R,\vec{c}) \in \varphi_{61}(\{(0,a): a \in M_{\textrm{sa}}(J,\mathbb{C})\})$.
According to \cref{thm:forward.inverse} this is equivalent
to
\begin{equation}
  H=0\quad\textrm{ where }\quad(H, a) = \varphi_{16}((R,\vec{c})).
\end{equation}
From \cref{lm:GctoHa},
this is equivalent to \cref{eq:GctoH}
evaluating to 0 when $R$ is used instead of $G$, i.e.
when \cref{eq:Rcond} holds.
\end{proof}

\begin{mycorollary}
    \label{cor:Rdim}
    The space of matrices $R$ which can be obtained
    from some Hermitian matrix $a$ is an $\mathbb{R}$-linear
    subspace of $M(J,\mathbb{R})$ of dimension $J^2-J$
    and includes the space of symmetric matrices.

    The space of antisymmetric matrices $R$, which can
    be obtained from some Hermitian matrix $a$ has
    dimension $J(J-3)/2$.
\end{mycorollary}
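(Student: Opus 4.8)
The plan is to read off the defining condition for the achievable $R$ from \cref{prop:Rcond} and then reduce the entire statement to a single rank computation. By \cref{prop:Rcond}, a matrix $R\in M(J,\mathbb{R})$ arises as $\varphi_{61}((0,a))$ for some Hermitian $a$ if and only if $\sum_{m,n=1}^{J}R_{mn}[F_n,F_m]=0$, and this condition is independent of $\vec c$; hence the set of achievable $R$ is exactly this $\mathbb{R}$-linear subspace, which I will call $\mathcal R$. First I would rewrite the constraint in terms of the structure constants: inserting \cref{eq:F-Lie} gives $\sum_{m,n}R_{mn}[F_n,F_m]=i\sum_{k}\big(\sum_{m,n}R_{mn}f_{nmk}\big)F_k$, so by linear independence of $\{F_k\}_{k=1}^{J}$ the constraint is equivalent to the $J$ scalar equations $\sum_{m,n}R_{mn}f_{nmk}=0$ for $k=1,\dots,J$. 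Collecting these into a single linear map $\Phi:M(J,\mathbb{R})\to\mathbb{R}^{J}$, $\Phi(R)_k=\sum_{m,n}R_{mn}f_{nmk}$, we have $\mathcal R=\ker\Phi$.

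For the first statement I would show that $\Phi$ has rank $J$ (is surjective), so that $\dim\mathcal R=J^{2}-J$ follows from rank--nullity. Surjectivity is equivalent to linear independence of the $J$ component functionals $R\mapsto\Phi(R)_k$: if $\sum_k\lambda_k f_{nmk}=0$ for all $m,n$, then contracting with $f_{nml}$, summing over $m,n$, and invoking the orthogonality relation \cref{eq:ff} yields $2d\,\lambda_l=0$, hence $\lambda_l=0$ for every $l$. That $\mathcal R$ contains all symmetric matrices is immediate (and also follows from \cref{lemma4}): for $R=R^{T}$ the double sum $\sum_{m,n}R_{mn}[F_n,F_m]$ is invariant under relabeling $m\leftrightarrow n$, while the commutator is antisymmetric under this swap, so the sum equals its own negative and vanishes.

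The second statement requires the extra observation that $\Phi$ annihilates the symmetric matrices, so $\Phi(R)$ depends only on the antisymmetric part $(R-R^{T})/2$; equivalently, $\Phi$ factors through the projection onto the antisymmetric subspace. Combined with the surjectivity established above, this forces the restriction $\Phi|_{\mathrm{Antisym}}$ to the space of antisymmetric matrices to be surjective onto $\mathbb{R}^{J}$ as well: any target value is already attained on the antisymmetric part of a preimage. Applying rank--nullity to $\Phi|_{\mathrm{Antisym}}$ then gives $\dim(\mathcal R\cap\mathrm{Antisym})=\tfrac{J(J-1)}{2}-J=\tfrac{J(J-3)}{2}$, as claimed.

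I expect the main obstacle to be the surjectivity/independence step for $\Phi$, i.e.\ verifying that the $J$ linear constraints $\sum_{m,n}R_{mn}f_{nmk}=0$ are genuinely independent rather than redundant. This is exactly where the algebraic structure of $\mathfrak{su}(d)$ enters, and it hinges on the orthogonality relation \cref{eq:ff}; without it one would obtain only the inequality $\dim\mathcal R\geq J^{2}-J$. Everything else is bookkeeping, and once surjectivity is secured, the observation that $\Phi$ depends only on the antisymmetric part makes the second dimension count essentially free.
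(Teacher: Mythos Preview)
Your proof is correct and takes a genuinely different route from the paper's. The paper argues by dimension counting through the bijection $\varphi_{16}$ established in \cref{thm:forward.inverse}: the image of $a\mapsto(R,\vec c)$ has dimension $J^2$ (injectivity), and since \cref{prop:Rcond} shows every $(0,\vec c)$ lies in that image, each fibre over a fixed $R$ is $J$-dimensional, giving $\dim\mathcal R=J^2-J$; the antisymmetric count then follows by subtracting off the symmetric subspace, which is contained in $\mathcal R$. You instead identify $\mathcal R=\ker\Phi$ for the explicit linear map $\Phi(R)_k=\sum_{m,n}R_{mn}f_{nmk}$ and establish surjectivity of $\Phi$ directly via the structure-constant orthogonality \cref{eq:ff}, then observe that $\Phi$ factors through the antisymmetric projection to get the second dimension by a second application of rank--nullity. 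Your approach is more elementary in that it avoids the heavy machinery of \cref{thm:forward.inverse} and isolates exactly the algebraic input (\cref{eq:ff}) that drives the result; the paper's approach has the virtue of reusing the global bijectivity already proved, so no new rank computation is needed. Both arguments ultimately rest on the same Lie-algebraic fact, packaged differently.
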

\begin{proof}
  As explained in the proof of \cref{prop:Rcond}, the space
  of possible $(R,c)$ is a bijective image of a $J^2$ dimensional space. According to \cref{prop:Rcond} it contains
  all possible pairs of the form $(0,\vec{c})$ ---- a $J$-dimensional subspace, hence the space of possible values for $R$ has dimension $J^2-J$.

  Furthermore, any symmetric matrix satisfies \cref{eq:Rcond}
  due to contraction with an antisymmetric commutator there.

  The final statement follows from the direct calculation:
  \begin{equation}
    J^2 - J - J(J+1)/2 = J(J-3)/2\ ,
  \end{equation}
  where $J(J+1)/2$ is the dimension of the space of symmetric matrices.
\end{proof}

\section{Examples}
\label{sec:examples}

This section provides several examples to illustrate the general theory developed above. Various supporting calculations can be found in Ref.~\cite{ODE-to-GKLS-supporting-calcs}.

\subsection{A qubit}

Consider a single qubit, i.e., $\mc{H}=\mathbb{C}^2$. As the nice operator basis, we choose the normalized Pauli matrices $\frac{1}{\sqrt{2}}\{I,\sigma^x,\sigma^y,\sigma^z\}$.

\subsubsection{A qubit with pure dephasing}

Consider the Lindblad equation for pure dephasing: $\dot{\rho} = \mathcal{L}_a(\rho) = \g(\sigma^z\rho \sigma^z-\rho)$, for which
\begin{align}
\label{eq:a-qubit-dephasing}
    a = \begin{bmatrix}
    0 &0 &0  \\ 
    0 &0 &0\\ 
    0& 0& 2\gamma \\ 
    \end{bmatrix}\ .
\end{align}
Here $Q=0$ (since $H=0$) and, by
\cref{prop:calL-Hermit,prop:a-symm-L-unital,prop:5},
$\vec{c}=0$, since $a$ is real-symmetric. Using \cref{eq:R} we find
\begin{align}
    G = R = \begin{bmatrix}
    -2\gamma &0 &0  \\ 
    0 &-2\gamma &0\\ 
    0& 0& 0 \\ 
    \end{bmatrix} \ .
\end{align}

The inverse problem retrieves $H$ from $G$ and $a$ from $(G,\vec{c})$. 
Using \cref{eq:GctoH,eq:Gctoa}, we indeed find $H=0$ and $a$ as given in \cref{eq:a-qubit-dephasing}.

\subsubsection{A qubit with amplitude damping}
\label{sss:amplitude}
Now consider the Lindblad equation for spontaneous emission: 
\bes
\begin{align}
    \dot{\rho} &= \mathcal{L}_H(\rho) + \mathcal{L}_a(\rho) \\
    \mathcal{L}_H(\rho) &= -i \o [\sigma^z,\rho ] \\
    \mathcal{L}_a(\rho) &= \g(\sigma^-\rho \sigma^+ -\frac{1}{2}\{\sigma^+\sigma^-,\rho\})\ ,
\end{align}
\ees
where $\sigma^\pm = \frac{1}{2}(\sigma^x\mp i \sigma^y)$.
This yields
\begin{align}
\label{eq:a-qubit-AD}
a = \left[
\begin{array}{ccc}
 \gamma  & -i \gamma  & 0 \\
 i \gamma  & \gamma  & 0 \\
 0 & 0 & 0\\
\end{array}
\right]\ ,
\end{align}
i.e., $a$ is not real-symmetric, and the Lindbladian is non-unital (in particular, $\vec{c}$ is non-zero). 

Then, using \cref{eq:QfromH,eq:R} we find
\bes
\label{amplitudeR}
\begin{align}
Q &= \left[
\begin{array}{ccc}
 0 & -2 \omega  & 0 \\
 2 \omega  & 0 & 0 \\
 0 & 0 & 0 \\
\end{array}
\right] \\
    R &= \begin{bmatrix}
 -\gamma & 0 & 0 \\
 0 & -\gamma & 0 \\
 0 & 0 & -2\gamma
    \end{bmatrix}\ ,
\end{align}
\ees
and using \cref{eq:cfroma}
\begin{align}
\label{amplitudec}
    \vec{c} = (0,0,\sqrt{2}\gamma)^T\ .
\end{align}
Using \cref{eq:GctoH,eq:Gctoa} with $G=Q+R$ we then indeed find $H=\o \sigma^z$ and and $a$ as given in \cref{eq:a-qubit-AD}.

\subsubsection{Example of \texorpdfstring{$G$}{G} giving rise to a non-CP map}
\label{sss:notpotpp}

Let $\vec{c}=\vec{0}$ and
\begin{align}
\label{eq:G-qubit-nonCP}
    G = \begin{bmatrix}
    0 &1 &0  \\ 
    0 &0 &0\\ 
    0& 0& 0\\ 
    \end{bmatrix} \ ,
\end{align}
Using \cref{eq:Gctoa} this yields:
\beq
a=\left[
\begin{array}{ccc}
 0 & \frac{1}{2} & 0 \\
 \frac{1}{2} & 0 & 0 \\
 0 & 0 & 0 \\
\end{array}
\right]\ ,
\eeq
whose eigenvalues are $\{-1/2,1/2,0\}$. I.e., $a$ is not
positive semidefinite, which means that the corresponding Markovian quantum
master equation does not generate a CP map. Hence, the pair $(G,\vec{c})$
with matrix $G$ given in
\cref{eq:G-qubit-nonCP} and $\vec{c}=0$ does not correspond to a Lindblad equation,
only to a Markovian quantum master equation.

\subsection{A qutrit}
\label{ss:su3-example}
As the nice operator basis for this $d=3$-dimensional case, we choose the $8$ normalized Gell-Mann matrices, including the normalized identity matrix:
\begin{align}
&F_0=\frac{1}{\sqrt{3}}I,\ \{F_m\}_{m=1}^d=\left\{
\left(
\begin{array}{ccc}
 0 & \frac{1}{\sqrt{2}} & 0 \\
 \frac{1}{\sqrt{2}} & 0 & 0 \\
 0 & 0 & 0 \\
\end{array}
\right),
\left(
\begin{array}{ccc}
 0 & 0 & \frac{1}{\sqrt{2}} \\
 0 & 0 & 0 \\
 \frac{1}{\sqrt{2}} & 0 & 0 \\
\end{array}
\right), \right. \notag  \\
&\left. \left(
\begin{array}{ccc}
 0 & 0 & 0 \\
 0 & 0 & \frac{1}{\sqrt{2}} \\
 0 & \frac{1}{\sqrt{2}} & 0 \\
\end{array}
\right),
\left(
\begin{array}{ccc}
 0 & -\frac{i}{\sqrt{2}} & 0 \\
 \frac{i}{\sqrt{2}} & 0 & 0 \\
 0 & 0 & 0 \\
\end{array}
\right),
\left(
\begin{array}{ccc}
 0 & 0 & -\frac{i}{\sqrt{2}} \\
 0 & 0 & 0 \\
 \frac{i}{\sqrt{2}} & 0 & 0 \\
\end{array}
\right),\right. \notag \\
&\left. \left(
\begin{array}{ccc}
 0 & 0 & 0 \\
 0 & 0 & -\frac{i}{\sqrt{2}} \\
 0 & \frac{i}{\sqrt{2}} & 0 \\
\end{array}
\right),\left(
\begin{array}{ccc}
 \frac{1}{\sqrt{2}} & 0 & 0 \\
 0 & -\frac{1}{\sqrt{2}} & 0 \\
 0 & 0 & 0 \\
\end{array}
\right),\left(
\begin{array}{ccc}
 \frac{1}{\sqrt{6}} & 0 & 0 \\
 0 & \frac{1}{\sqrt{6}} & 0 \\
 0 & 0 & -\sqrt{\frac{2}{3}} \\
\end{array}
\right) \right\}
\label{eq:su(3)-matrices}
\end{align}

\subsubsection{A qutrit with amplitude damping}
\label{sss:qutrit-amplitude-damping}
Consider the example of $a$ given in \cref{eq:a-su3}, i.e., a qutrit undergoing amplitude damping between its two lowest levels (similar to the qubit example given in \cref{sss:amplitude}). The corresponding $R$ matrix was given in \cref{eq:R-su3}; assuming $H=0$ we have $G=R$. In addition, we find 
\beq
\vec{c}=(0,0,0,0,0,\frac{\sqrt{2}}{3},0,0)\ .
\label{eq:c3}
\eeq
Computing $a$ using \cref{eq:Gctoa}, we find that it is indeed identical to \cref{eq:a-su3}.

\subsubsection{Illustration of \texorpdfstring{\cref{prop:Decomposing,prop:recoverHconsistency2}}{Prop. \ref*{prop:Decomposing}, \ref*{prop:recoverHconsistency2}}}
\label{sss:factor-su3}

Note that for symmetric $G$, the decomposition is trivial: $Q = 0$, $R = G$.
Therefore, to illustrate \cref{prop:Decomposing},
let us choose an arbitrary antisymmetric $G$; for example
\begin{align}
	\label{eq:Gforsu3}
    G = \left(
\begin{array}{cccccccc}
 0 & 0 & 0 & 0 & -\frac{1}{4} & 0 & 0 & 0 \\
 0 & 0 & 0 & \frac{1}{4} & 0 & 0 & 0 & 0 \\
 0 & 0 & 0 & 0 & 0 & 0 & 0 & 0 \\
 0 & -\frac{1}{4} & 0 & 0 & 0 & 0 & 0 & 0 \\
 \frac{1}{4} & 0 & 0 & 0 & 0 & 0 & 0 & 0 \\
 0 & 0 & 0 & 0 & 0 & 0 & \frac{1}{4} & -\frac{\sqrt{3}}{4} \\
 0 & 0 & 0 & 0 & 0 & -\frac{1}{4} & 0 & \frac{\sqrt{3}}{4} \\
 0 & 0 & 0 & 0 & 0 & \frac{\sqrt{3}}{4} & -\frac{\sqrt{3}}{4} & 0 \\
\end{array}
\right)\ .
\end{align}
Computing the Hamiltonian $H$ that arises from this $G$ using \cref{eq:HfromQ} [or, equivalently, \cref{eq:GctoH}] yields:
\beq
\label{eq:HfromQsu3}
H = \left(
\begin{array}{ccc}
	0 & 0 & 0 \\
	0 & 0 & \frac{1}{6} \\
	0 & \frac{1}{6} & 0 \\
\end{array}
\right)\ .
\eeq
When computing $Q$ from this Hamiltonian using \cref{eq:QfromH}, we observe
that 
$Q$ does not equal the antisymmetric part of $G$, as anticipated by \cref{cor:Rdim} and the comment after \cref{prop:Decomposing}. Instead, we find:
\begin{align}
\label{eq:Q-su3}
	Q = \left(
	\begin{array}{cccccccc}
		0 & 0 & 0 & 0 & \frac{1}{6} & 0 & 0 & 0 \\
		0 & 0 & 0 & \frac{1}{6} & 0 & 0 & 0 & 0 \\
		0 & 0 & 0 & 0 & 0 & 0 & 0 & 0 \\
		0 & -\frac{1}{6} & 0 & 0 & 0 & 0 & 0 & 0 \\
		-\frac{1}{6} & 0 & 0 & 0 & 0 & 0 & 0 & 0 \\
		0 & 0 & 0 & 0 & 0 & 0 & \frac{1}{6} & -\frac{1}{2 \sqrt{3}} \\
		0 & 0 & 0 & 0 & 0 & -\frac{1}{6} & 0 & 0 \\
		0 & 0 & 0 & 0 & 0 & \frac{1}{2 \sqrt{3}} & 0 & 0 \\
	\end{array}
	\right)\ .
\end{align}
As expected, if we then use \cref{eq:QfromG.F} to decompose $G$ into $Q$ and $R$, we get the same $Q$ as the one in \cref{eq:Q-su3}. Also, as guaranteed by \cref{prop:recoverHconsistency2}, if we use that $Q$ to compute $H$ again,
we will get the same $H$ as in \cref{eq:HfromQsu3}.

\section{The rarity of Lindbladians}
\label{sec:rarity}

As posed in the title, the question that motivated this work can be interpreted as the probability 
that a given pair $(G,\vec{c})$ will give rise to a Lindbladian (given a natural choice of a 
distribution over the pairs $(G,\vec{c})$). Alternatively, one can use a natural choice of a
distribution for $(H, a)$ in order to induce the distribution on the pairs $(G,\vec{c})$ and
compute a probability that a given pair $(G,\vec{c})$ with that distribution
gives rise to a Lindbladian. In the remainder of this section, we provide partial answers
to these questions and describe the difference between these two distributions.

\subsection{Natural distribution on the pairs of \texorpdfstring{$(G,\vec{c})$}{(G,c)}}
\label{ssec.Gc.rarity}
Suppose all elements of $G$ and $\sqrt{d}\vec{c}$ are picked 
independently from a standard normal distribution. Such a distribution 
is known as the Ginibre Orthogonal Ensemble (GinOE)~\cite{GinOE-GinSE}.
One might be interested
in the following question: ``How likely is it that these $G$ and $\vec{c}$
generate a Lindbladian, i.e. $a\geq 0$?''. Let us denote this probability
as $\tilde{P}^{\text{GinOE}}_J$. While we do not know the asymptotics
of $\tilde{P}^{\text{GinOE}}_J$, here we provide a non-rigorous attempt
at deriving the asymptotics of an upper bound. Recall that a necessary condition for $a\geq 0$ is
that all eigenvalues of $G$ have a non-positive real part.
Denoting the probability of the latter by $P^{\text{GinOE}}_J$, this implies
\begin{equation}
  \label{eq:tPleqP}
  \tilde{P}^{\text{GinOE}}_J \leq P^{\text{GinOE}}_J.
\end{equation}

The distribution of eigenvalues in the GinOE is known: see, e.g. \cite[Eq.~(1.7)]{GinOE-GinSE}. One can use either the methods of 
\cite{dean2008extreme} or some rigorous alternative to those methods
to estimate the asymptotics of $P^{\text{GinOE}}_J$, which would be an upper bound for $\tilde{P}^{\text{GinOE}}_J$ due to \cref{eq:tPleqP}.

More specifically, assuming the same (non-rigorous) argument as in Ref.~\cite{dean2008extreme} applies to the GinOE, we can state that
\begin{equation}
  P^{\text{GinOE}}_J = \exp(-\theta^{\text{GinOE}} J^2 + O(J)).
  \label{eq:P-GinOE}
\end{equation}
We discuss the estimation of the positive constant $\theta^{\text{GinOE}}$ in \cref{app:theta}. The important point about \cref{eq:P-GinOE} is that the probability decays rapidly in the Hilbert space dimension $d$ (recall that $J=d^2-1$).

\subsection{Natural distribution on \texorpdfstring{$a$}{a}}
\label{ssec:a.rarity}
In this subsection, we examine the prevalence of Lindbladians within the set of all Liouvillians.

Specifically, we attempt to answer the following question:
``Given a random $H$ and $a$, what is the probability that $\mathcal{L}_{H,a}$ is a Lindbladian?''
Since the condition $a\geq 0$ that guarantees a Liouvillian is a Lindbladian depends only on $a$ and not on $H$,
defining the distribution in the space of Hermitian matrices $a$ is sufficient to answer this question.
Additionally, the scale of $a$ is irrelevant; changing $a$ to $\alpha a$ (where $\alpha>0$ can be randomly selected from a distribution which may depend on $a$)
does not alter the answer to $a\geq 0$.

One natural choice for the distribution is the Gaussian Unitary Ensemble (GUE), also known as the $\beta=2$ Gaussian Ensemble.
In this ensemble, the real and imaginary parts of each component of a $J\times J$ matrix $A$ are independently selected from
a normal distribution with a mean of $0$
and\footnote{We choose the normalization to be consistent with Ref.~\cite{forrester2010log}. An alternative normalization used in other works is variance $=1$, i.e., the standard normal distribution.}
a variance of $1/\beta=1/2$.
The matrix $a$ is then computed as $a=(A+A^\dagger)/2$. The condition
$a\geq 0$ is equivalent to $\lambda_j\geq 0$ for $j=1,\dots,J$,
where $\lambda=\{\lambda_j\}_{j=1}^J$ is the vector of eigenvalues of $a$. Let us denote the
probability of this event as $P^{\text{GUE}}_J$.

The joint probability density function (PDF) of eigenvalues of a matrix from the GUE is well known to be:
\bes
\begin{align}
    p(\lambda) &=\frac{1}{G_{\beta,J}}e^{-\frac{\beta}{2}\sum_{j=1}^{J} \lambda_j^2}\prod_{1\leq j < k \leq J}\!\!\!\! \abs{\lambda_k - \lambda_j}^{\beta} \\
    &= \frac{1}{G_{\beta,J}}e^{-\beta E(\lambda)}\ ,
\end{align}
\ees
where $E(\lambda)=\frac{1}{2}\sum_{j=1}^J\lambda_j^2-\sum_{1\leq j<k\leq J}\ln\abs{\lambda_j-\lambda_k}$ and $G_{\beta,J}$ is the normalization factor (partition function);
see \cite[Eq.~(1.4)]{forrester2010log} for further details.
Thus, the distribution of $\lambda$ can be interpreted as the canonical ensemble of a two-dimensional gas (in $\mathbb{C}$)
with $J$ charged particles (each of unit charge with the same sign), constrained to the real line in a harmonic potential at temperature $1/\beta = 1/2$, also known as a log-gas.
The probability $P^{\text{GUE}}_J$ in this interpretation is the probability that, by random chance, all particles happen to be to the right of the origin.
$P^{\text{GUE}}_J$ is known as the probability of atypically large fluctuations of extreme value statistics of the GUE,
and was estimated in Ref.~\cite{dean2008extreme} as:
\begin{equation}
    \label{eq:NUL1.PN}
    P^{\text{GUE}}_J = 3^{-J^2/2 + O(J)}\ .
\end{equation}
The estimate uses non-rigorous techniques from statistical mechanics, including
the replacement of particle density (sum of delta functions) with a smooth function
and the use of functional integrals and functional Fourier transforms.
While rigorously proving this estimate is left to future research, \cref{eq:NUL1.PN}
shows that the probability of a randomly picked Liouvillian being a Lindbladian 
decays rapidly with $d$, just like \cref{eq:P-GinOE} for the GinOE. For example, already for $d=4$, the estimate gives
$P^{\text{GUE}}_J \sim 10^{-54}$, demonstrating that it is extremely unlikely to find a
Lindbladian for $d=4$ by picking a random $a$ from the GUE.

\subsection{Comparison between these distributions}
Given the distribution on the pairs $(G,\vec{c})$ described in \cref{ssec.Gc.rarity},
one can derive a distribution for $a$ or
a matrix related to the tensor $x$ from \cref{thm:forward.inverse}.
$\tilde{P}^{\text{GinOE}}$ is then equal to the probability that $a \geq 0$ (or, equivalently,
$x \geq 0$) given that distribution.

Due to the linearity of $\varphi_{16}$,
the components of $a$ also
follow a multivariate Gaussian distribution with mean $0$.
Using \cref{eq:Gctoa} one can compute the covariances:
\begin{equation}
  \label{eq:GinOE:Vara}
    \mathbb{E}(a_{mn}a_{m'n'}) = \delta_{mn'} \delta_{nm'} - \frac{1}{d} \Tr(F_{m'}F_{n'}F_{m}F_{n})\ .
\end{equation}
This distribution does not match the GUE investigated in 
\cref{ssec:a.rarity}: for comparison covariances in the GUE are given by
\begin{equation}
  \label{eq:GUE:Vara}
  \mathbb{E}(a_{mn}a_{m'n'}) = \frac12 \delta_{mn'}\delta_{nm'}.
\end{equation}
The normalization difference ($1$ \textit{vs} $1/2$) is due to a difference
in conventions and is irrelevant to the question of whether $a\geq 0$.
The presence of the second term in \cref{eq:GinOE:Vara}, however,
is significant and, for $d>2$, introduces a dependence of
the probability distribution given by \cref{eq:GinOE:Vara}
on the choice of the nice operator basis.

Nevertheless, the distribution of $x$ computed from $(H=0,a)$
is independent of both the nice operator
basis and the choice of the orthonormal basis in the Hilbert space $\mathcal{H}$:
\begin{align}
  \label{eq:GinOE:Varx}
  &\mathbb{E}(x_{i j k l} x_{i' j' k' l'}) =
  \delta_{j k'} \delta_{j' k} \delta_{l i'} \delta_{l' i} \notag\\
  &+ d^{-1} (
    - \delta_{j k'} \delta_{j' i'} \delta_{l k} \delta_{l' i}
    - \delta_{j k'} \delta_{j' k} \delta_{l i} \delta_{l' i'}
    - \delta_{j i} \delta_{j' k} \delta_{l i'} \delta_{l' k'}) \notag\\
  &+ d^{-2} (
    \delta_{j k'} \delta_{j' i} \delta_{l k} \delta_{l' i'}
    + \delta_{j i} \delta_{j' k} \delta_{l k'} \delta_{l' i'} \notag\\
  &\quad+ \delta_{j k'} \delta_{j' i'} \delta_{l i} \delta_{l' k}
    + \delta_{j i'} \delta_{j' k} \delta_{l i} \delta_{l' k'}) \notag\\
  &+ d^{-3} (
    - \delta_{j k'} \delta_{j' i'} \delta_{l k} \delta_{l' i}
    - \delta_{j i} \delta_{j' k'} \delta_{l k} \delta_{l' i'}
    - \delta_{j i} \delta_{j' i'} \delta_{l k'} \delta_{l' k} \notag\\
  &\quad- \delta_{j k} \delta_{j' i'} \delta_{l i} \delta_{l' k'}
    - \delta_{j i} \delta_{j' k} \delta_{l i'} \delta_{l' k'}
    - \delta_{j i'} \delta_{j' i} \delta_{l k} \delta_{l' k'}) \notag\\
  &+ 4d^{-4} \delta_{j i} \delta_{j' i'} \delta_{l k} \delta_{l' k'}
  + Jd^{-4} \delta_{j i} \delta_{j' i'} \delta_{l k} \delta_{l' k'}.
\end{align}
This $x$ can be interpreted as a matrix with indices $(ij)$ and $(lk)$:
$x_{(ij)(lk)} = x_{ijkl}$. With this interpretation, such $x$ is positive 
semidefinite if and only if $a$ is.

While beyond the scope of this work, one can use \cref{eq:GinOE:Vara} or 
\cref{eq:GinOE:Varx} to attempt to estimate $\tilde{P}^{\text{GinOE}}_J$. One approach might be to 
first compute the
joint distribution of eigenvalues of $a$ or of $x$ similarly to
how they are computed for the GUE (see, e.g., Ref.~\cite{forrester2010log}),
i.e. by integrating out unitary symmetry. One may attempt to integrate 
out the unitaries in $U(\mathcal{H})$ first, and then integrate over the 
quotient space $U(B_0(\mathcal{H}))/U(\mathcal{H})$, which consists of 
the unitaries in $B_0(\mathcal{H})$ modulo those in $U(\mathcal{H})$,
where $U(\mathcal{H})$ denotes the manifold of unitaries acting on a 
finite-dimensional Hilbert space $\mathcal{H}$. That second integration
appears to be non-trivial because while \cref{eq:GinOE:Varx} is symmetric
with respect to $U(\mathcal{H})$, it is no longer symmetric with
respect to $U(B_0(\mathcal{H}))/U(\mathcal{H})$. Once the joint distribution
of eigenvalues is computed (or estimated), one can attempt to use it to
estimate the probability that all of them are non-negative.

\section{Overlap with prior work}
\label{sec:prior}

After this work was completed, it came to our attention that our main results could be reconstructed by combining several earlier results, in particular using Refs.~\cite[Section 3.2.2]{Breuer:book},
\cite[Section 7.1.2]{wolf2012quantum}, and~\cite{riedel-checkable-2020}. An aspect that is common to these three works and distinguishes them from ours is that they do not represent $\mathcal{L}$ by its action on the basis elements $F_i$, i.e., they do not have $(G, \vec{c})$. In this sense, their motivation is different from ours, given that our starting point is the title question ``which (nonhomogeneous linear first-order) differential equations correspond to the Lindblad equation?", formulated in terms of $(G,\vec{c})$ [\cref{eq:3}].
Nevertheless, it is possible to use these earlier results to prove
\cref{thm:main-result}: one can show that the map $\varphi_{64}$ given by
\cref{eq:5to6} is a bijective linear map with inverse given by \cref{eq:6to4}.
With this established, one can substitute \cref{eq:6to4} instead of $\mathcal{L}$ to apply the above results
to obtain \cref{thm:main-result}.

The overlap is described in more detail below. We first summarize it
in the following list, mentioning the parts of the \cref{thm:main-result}
and the overlaps we have found.
\begin{enumerate}
  \item \label{item:biject} Bijectivity of the map $\varphi_{61}$.
    \begin{itemize}
      \item Surjectivity follows from \cite[Section 3.2.2]{Breuer:book}.
      \item Injectivity follows from \cite[Section 7.1.2]{wolf2012quantum}.
    \end{itemize}
  \item \label{item:explicit.inverse} Explicit formulas [\cref{eq:GctoHa}] for the inverse ($\varphi_{16}$).
    \begin{itemize}
      \item Follows from \cite[Section 3.2.2]{Breuer:book}.
    \end{itemize}
  \item \label{item:cp} Complete positivity condition \cref{eq:Gc.positive} in terms of $G$ and $\vec{c}$.
    \begin{itemize}
      \item Follows from \cite{riedel-checkable-2020}.
    \end{itemize}
\end{enumerate}

\subsection{Overlap with %
\texorpdfstring{\cite[Section 3.2.2]{Breuer:book}}{(Breuer, 2002)}}
\cref{item:explicit.inverse} and surjectivity in \cref{item:biject} follow from \cite[Section 3.2.2]{Breuer:book}. To see this, one needs to take $V(t)$ given by
\begin{equation}
  V(t)\rho = \rho + t \mathcal{L}(\rho) + O(t^2)
\end{equation}
with $\mathcal{L}$ given by \cref{eq:6to4}, and write it in the form
\cite[Eq.~(3.50)]{Breuer:book} (see, e.g., \cref{sec:supop.FAF} on how to do that). Then one can follow 
\cite[Eqs.~(3.54)--(3.63)]{Breuer:book} to recover $a$ and $H$ and write $\mathcal{L}$ in the form \cref{eq:LE}.
After simplification, this would provide explicit formulas \cref{eq:GctoHa} for $a$ and $H$ representing that
$\mathcal{L}$. Since the procedure works for any $\mathcal{L}$ from $\mathcal{V}_4$ (i.e. any $(G,\vec{c})$), this
shows the surjectivity of $\varphi_{61}$ in \cref{item:biject}.

\subsection{Overlap with %
\texorpdfstring{\cite[Section 7.1.2]{wolf2012quantum}}{(Wolf, 2012)}}
Reference~\cite[Theorem 7.1]{wolf2012quantum} asserts
that a linear map is a Lindbladian if and only if it can be represented in
any of the four forms described in \cite[Eqs.~(7.20)-(7.23)]{wolf2012quantum}.
\cite[Eq.~(7.23)]{wolf2012quantum}
matches \cref{eq:LE}, where \cite{wolf2012quantum}
uses the matrix $C = a^T/2$ referred to as the ``Kossakowski matrix''.

Reference~\cite[Proposition 7.4]{wolf2012quantum}
establishes the uniqueness of the decomposition of the Lindbladian $\mL$ into $\mL_{H}$ and $\mL_{a}$,
with $H$ being unique up to an additive constant. The uniqueness of $a$
can be deduced from the second part of this proposition.

The complete positivity condition, \cref{eq:Gc.positive}, can be obtained from \cite{wolf2012quantum} in two steps.
First, by comparing \cite[Eq.~(7.20)]{wolf2012quantum} with
\cite[Eq.~(7.14)]{wolf2012quantum}, one may note that that the condition we need is called
``Conditional complete positivity'' in \cite{wolf2012quantum}.
Second, one may transform \cite[Proposition 7.2, Eq.~(7.15)]{wolf2012quantum}
into \cref{eq:Gc.positive}.

It should be noted that the expression of the complete positivity condition in
Ref.~\cite{wolf2012quantum} differs from the expression used in our paper,
and thus, some work would be required to derive one condition from the other.

\subsection{Overlap with
\texorpdfstring{\cite{riedel-checkable-2020}}{(Riedel, 2020)}}
The blog post \cite{riedel-checkable-2020} provides a complete positivity condition
in a form more similar to the one used in our work.
\cite{riedel-checkable-2020} defines\footnote{``PT'' stands for ``partial
transpose''.} $\mLPT$ as follows.
First, introduce a matrix notation for a superoperator $S$ acting on
$M_d(\mathbb{C})$ where $S$ is described by a matrix with elements
$S_{(nn')(mm')}$ (matrix rows and columns indexed by elements of $\{1,\dots,d\}^2$) such that:
\begin{equation}
  (S(A))_{nn'} = \sum_{m,m'=1}^{d} S_{(nn')(mm')} A_{mm'}.
\end{equation}
Then, define $S^{\textrm{PT}}$ using
\begin{equation}
  \label{eq:SPT.def1}
  S^{\textrm{PT}}_{(nm)(n'm')} = S_{(nn')(mm')}.
\end{equation}
An alternative way to describe $S^{\textrm{PT}}$ is
\begin{equation}
  \label{eq:SPT.def2}
  \Tr(A S^{\textrm{PT}}(B)) = \sum_{j,k=1}^{d} \left(AS(\ketb{j}{k})B\right)_{jk}.
\end{equation}

With this notation, the complete positivity condition becomes\footnote{In the blog post, the condition used a ``superprojector'' removing the trace. The equivalent alternative used here is to require that $B$ is traceless.}
\begin{equation}
  \label{eq:mLPT.positive}
  \forall B \in M_0(d,\mathbb{C})\quad\Tr(B^\dagger \mLPT(B)) \geq 0\ .
\end{equation}
Using \cref{eq:SPT.def2}, this can be rewritten as
\begin{equation}
  \label{eq:mL.positive.bp}
  \forall B \in M_0(d,\mathbb{C})\quad
  \sum_{j,k=1}^{d} \left(B^\dagger \mL(\ketb{j}{k})B\right)_{jk} \geq 0.
\end{equation}
This is the same as requiring \cref{eq:Gc.positive} to hold for all traceless
$B \in \mathcal{B}(\mathcal{H})$, as in \cref{thm:main-result}.

\section{Summary and Outlook}
\label{sec:conc}

A standard approach to solving a (time-independent quantum) Markovian quantum
master equation is to vectorize it and solve the corresponding nonhomogeneous
first-order linear ODE for the coherence vector. Here we posed the inverse
problem: when does such
a 1ODE of the form $\dot{\vec{v}} = G\vec{v} +\vec{c}$ correspond to a
Markovian quantum master equation? When does it correspond to a completely positive Markovian quantum master equation, i.e., a Lindblad equation? What are the
parameters, i.e., $a$ and $H$, of such master equations in terms of the parameters $G$ and $\vec{c}$ of the 1ODE? Finally, how ubiquitous are Lindbladians? 

We have shown that the answer to the first question is ``always''. We also expressed
the parameters of such Markovian quantum master equations using an expansion in a nice
operator basis, which yields explicit expressions for the Hamiltonian and the matrix $a$
of coefficients of the dissipator in terms of the parameters $(G,\vec{c})$ of the 1ODE; see \cref{thm:main-result}. In essence, this means that every 1ODE of the form $\dot{\vec{v}} = G\vec{v} +\vec{c}$ is directly representable as a Markovian quantum master equation, \cref{eq:LE}. However, complete positivity (i.e., whether the result is a Lindblad equation) is not guaranteed and must be checked on a case-by-case basis. Toward this end, we have also formulated the complete positivity condition directly in terms of $(G,\vec{c})$; see \cref{eq:Gc.positive}. This condition is equivalent to the positive semidefiniteness of $a$, which is simpler to check in practice after $a$ has been computed from $G$ and $\vec{c}$ using \cref{eq:Gctoa}. 

Our work assumed the setting of finite-dimensional Hilbert spaces. We left open for future research the problem of connecting 1ODEs to quantum master equations in the infinite-dimensional setting. We also left open the complete answer to the question of the ubiquity of Lindbladians: namely, while we have argued that the condition of positivity of $a$ makes Lindbladians extremely rare when viewed from the perspective of random matrix theory, the answer to what is the probability that $a \geq 0$ for all real $G$ (sampled from the Ginibre Orthogonal Ensemble (GinOE)~\cite{GinOE-GinSE}) such that $G$ has non-positive eigenvalues, is still open. Answering this question will quantify the probability that a randomly selected 1ODE results in a Lindblad master equation.

\acknowledgments
We thank Dr. Evgeny Mozgunov for his valuable input and helpful discussions. We also extend our appreciation to Dr. Frederik vom Ende for pointing out that \cref{eq:Ph-gen-supop} was proven in \cite{Gorini:1976uq} and for asking us to provide a simple proof (see \cref{prop:supop.FAF}), and especially to Dr. Jess Riedel for pointing out the connections to, and overlap with \cite{wolf2012quantum,riedel-checkable-2020}, which led us to add \cref{sec:prior} to the final version of this work.

This research was sponsored by the Army Research Office and was
accomplished under Grant Number W911NF-20-1-0075. The views and conclusions 
contained in this document are those of the authors and should not be 
interpreted as representing the official policies, either expressed or implied, 
of the Army Research Office or the U.S. Government. The U.S. Government is 
authorized to reproduce and distribute reprints for Government purposes, notwithstanding any copyright notation herein.

\appendix

\section{Solution of \texorpdfstring{\cref{eq:3}}{(\ref*{eq:3})}
for the coherence vector}
\label{app}

\cref{eq:3} is a linear, first-order, nonhomogeneous differential equation (1ODE). We provide the solution in two parts, first for diagonalizable and invertible $G$, then for general $G$.

\subsection{Solution for diagonalizable and invertible
\texorpdfstring{$G$}{G}}
\label{ss:diag-inv-G}

Let us first assume that $G$ is diagonalizable over $\mathbb{R}^J$ and also invertible.
We look for a solution in the form
\beq
\vec{v}(t) = \vec{v}^{(0)}(t) + \vec{v}^{(\infty)} \ ,
\label{eq:8.7.1v}
\eeq
where $\vec{v}^{(0)}(t)$ is the homogeneous part and $\vec{v}^{(\infty)}$ is the nonhomogeneous (time-independent) part. Let $\vec{x}^{(k)}$ and $\lambda_k$ represent the eigenvectors and (possibly degenerate and complex) eigenvalues of $G$, i.e.,
\beq
G \vec{x}^{(k)} = \lambda_k \vec{x}^{(k)}\ , \qquad k=1,\dots,J \ .
\eeq
It is then straightforward to check by direct differentiation and substitution that 
\bes
\label{eq:8.7.3v}
\begin{align}
    \vec{v}^{(0)}(t) &= \sum_{k=1}^M s_k e^{\lambda_k t}  \vec{x}^{(k)} \\
    \vec{v}^{(\infty)} &= -G^{-1}\vec{c} 
\end{align}
\ees
in the solution in \cref{eq:3}. 
The coefficients $s_k$ are determined by the initial condition $\vec{v}(0)$:
\beq
\vec{v}^{(0)}(0) = \sum_{k=1}^M s_k  \vec{x}^{(k)} = X \vec{s} \ ,\qquad \text{col}_k(X) = \vec{x}^{(k)}\ ,
\eeq
i.e., $X$ is the matrix whose columns are the eigenvectors of $G$. Also, $\vec{v}^{(0)}(0) = \vec{v}(0) - \vec{v}^{(\infty)}$. Thus
\beq
\vec{s} = X^{-1} (\vec{v}(0)+G^{-1}\vec{c}) \ .
\eeq

The eigenvalues can be decomposed as 
$\lambda_k = \real(\lambda_k ) + i \imag(\lambda_k )$. The imaginary part describes a rotation of $\vec{v}$. The real part is constrained by complete positivity and trace preservation to be non-positive, or else the norm of the coherence vector would not be bounded [recall \cref{eq:vnorm}].

\subsection{Solution for general \texorpdfstring{$G$}{G}}
\label{ss:gen-G-sol}

The general case is where $G$ is not diagonalizable over $\mathbb{R}^J$ and may not be invertible. In this case, we can still use a similarity transformation $S$ to transform $G$ into Jordan canonical form:
\beq
G_J = SGS^{-1} = 
\left(
\begin{array}{ccc}
    J_1 &   &   \\
    & \ddots &   \\
    &   & J_q \\
\end{array}
\right)\ ,
\eeq
where the $q$ Jordan blocks have the form
\beq
J_j = 
\left(
\begin{array}{ccccc}
    \mu _j & 1 &   &   &   \\
    & \mu _j & \ddots &   &   \\
    &   & \ddots & \ddots &   \\
    &   &   & \mu _j & 1 \\
    &   &   &   & \mu _j \\
\end{array}
\right) = \m_j I + K_j \ .
\eeq
The $\m_j$'s are the (possibly degenerate, complex) eigenvalues, and $K_j$ are nilpotent matrices: $K_j^{d_j} = 0$, where $d_j$ is the dimension of $J_j$. When all $d_j=1$, $G$ is diagonalizable, and $G_J$ reduces to the diagonalized form of $G$. When one or more of the $d_j>1$, then $G$ is not diagonalizable, meaning that a similarity transformation does not exist that transforms $G$ into a diagonal matrix. The eigenvalues are the solutions of $\det{(G-\mu I)}=0$.

Applying $S$ from the left to \cref{eq:3} yields
\beq
S\dot{\vec{v}} = SGS^{-1}S\vec{v} + S\vec{c} \quad \implies \quad \dot{\vec{w}} = G_J \vec{w} + \vec{c}'\ ,
\label{eq:354}
\eeq
where $\vec{w} = S\vec{v}$ and we defined $\vec{c}' = S\vec{c}$.
We again look for a solution in the form
\beq \label{eq:gen-G-inh}
\vec{w}(t) = \vec{w}^{(0)}(t) + \vec{w}^{(\infty)} \ ,
\eeq
where now $\vec{w}^{(0)}(t)$ is the homogeneous part and $\vec{w}^{(\infty)}$ is the nonhomogeneous (time-independent) part. 

First, let us solve for the homogeneous part.
Since the Jordan blocks are decoupled, the general solution of the homogenous part is 
\beq
\vec{w}^{(0)}(t) = \bigoplus_{j=1}^q \vec{w}_{j}^{(0)}(t) \ ,
\eeq
where we use the direct sum notation to denote that the summands must be stacked into a single-column vector. Each of the $q$ summands satisfies a differential equation of the form
\beq
\dot{\vec{w}}_{j}^{(0)} = J_j {\vec{w}}_{j}^{(0)}\ .
\eeq

\begin{myclaim}
The solution for a general $d_j$ dimensional Jordan block is a vector 
\beq
\vec{w}_j^{(0)} = ({w}_{j,1}^{(0)},\dots,{w}_{j,k}^{(0)},\dots,{w}_{j,d_j}^{(0)})^T
\eeq 
with components:
\bes
\label{eq:jordan-sol}
\begin{align}
{w}_{j,k}^{(0)}(t) &= e^{\m_j t} p_{d_j-k}(t)\\
p_{d_j-k}(t) &= \sum_{n=k}^{d_j} {w}_{j,n}^{(0)}(0) \frac{t^{n-k}}{(n-k)!}\ , \quad k=1,\dots,d_j \ .
\end{align}
\ees
\end{myclaim}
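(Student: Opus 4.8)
The plan is to verify the proposed formula directly. Since $\dot{\vec{w}}_j^{(0)} = J_j \vec{w}_j^{(0)}$ is a homogeneous linear ODE with constant coefficients, its solution is uniquely determined by the initial data $\vec{w}_j^{(0)}(0)$; hence it suffices to check that the claimed expression (a) reduces to the correct initial condition at $t=0$, and (b) satisfies the differential equation componentwise. Writing $J_j = \mu_j I + K_j$, the system reads $\dot{w}_{j,k}^{(0)} = \mu_j w_{j,k}^{(0)} + w_{j,k+1}^{(0)}$ for $1 \le k \le d_j - 1$ and $\dot{w}_{j,d_j}^{(0)} = \mu_j w_{j,d_j}^{(0)}$, the extra coupling term coming from the superdiagonal of $K_j$.

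For the initial condition, I would note that at $t=0$ only the $n=k$ term of $p_{d_j-k}$ survives (every higher term carries a positive power of $t$), so that $p_{d_j-k}(0) = w_{j,k}^{(0)}(0)$, and since $e^{\mu_j \cdot 0}=1$ the formula reproduces $\vec{w}_j^{(0)}(0)$ as required. For the equation itself, differentiating $w_{j,k}^{(0)}(t) = e^{\mu_j t} p_{d_j-k}(t)$ by the product rule gives $\dot{w}_{j,k}^{(0)} = \mu_j w_{j,k}^{(0)} + e^{\mu_j t}\dot{p}_{d_j-k}(t)$, so everything reduces to identifying $e^{\mu_j t}\dot{p}_{d_j-k}(t)$. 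The key step is the observation that differentiation shifts the polynomial index by one: term-by-term differentiation annihilates the $n=k$ term and sends $t^{n-k}/(n-k)!$ to $t^{n-k-1}/(n-k-1)!$, which after relabeling is exactly $p_{d_j-(k+1)}(t)$. Hence $e^{\mu_j t}\dot{p}_{d_j-k}(t) = w_{j,k+1}^{(0)}(t)$ for $k < d_j$, while for $k = d_j$ the polynomial $p_0$ is the constant $w_{j,d_j}^{(0)}(0)$ and $\dot{p}_0 = 0$. This matches the Jordan system line for line.

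The computation is routine rather than obstructed; the only thing to get right is the index bookkeeping in the differentiation-shifts-index step, which is where any off-by-one error would enter. As an alternative that makes the structure transparent, one may instead compute $e^{J_j t} = e^{\mu_j t} e^{K_j t}$ (using that $\mu_j I$ and $K_j$ commute) and note that the nilpotency $K_j^{d_j}=0$ truncates the series to $e^{K_j t} = \sum_{m=0}^{d_j-1} (K_j t)^m / m!$, whose $(k,n)$ entry is $t^{n-k}/(n-k)!$ for $n \ge k$ and zero otherwise; applying this matrix to $\vec{w}_j^{(0)}(0)$ reproduces \cref{eq:jordan-sol} immediately. Either route establishes the claim, with the matrix-exponential version also explaining \emph{why} the polynomial degrees decrease with $k$, namely the truncation enforced by $K_j^{d_j}=0$.
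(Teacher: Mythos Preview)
Your proof is correct and follows essentially the same route as the paper: write the Jordan-block system as $\dot{w}_{j,k}^{(0)} = \mu_j w_{j,k}^{(0)} + w_{j,k+1}^{(0)}$ (with the last equation uncoupled) and verify the formula by direct differentiation, using that term-by-term differentiation of $p_{d_j-k}$ produces $p_{d_j-(k+1)}$. Your explicit check of the initial condition and the alternative matrix-exponential argument via $e^{J_j t}=e^{\mu_j t}\sum_{m=0}^{d_j-1}(K_j t)^m/m!$ are nice additions the paper omits, but the core verification is the same.
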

Here $p_{d_j-k}(t)$ denotes a polynomial of degree $d_j-k$. The highest degree of such polynomials is $d_j-1$, of the last component ${w}_{j,d_j}^{(0)}$.
\begin{proof}
From the structure of the $j$'th Jordan block $J_j$, it is clear that
\bes
\begin{align}
\dot{w}_{j,d_j}^{(0)} &= \mu_j {w}_{j,d_j}^{(0)} \\
\dot{w}_{j,k}^{(0)} &= \mu_j {w}_{j,k}^{(0)} + {w}_{j,k+1}^{(0)}\ ,\qquad k\in\{1,\dots d_j-1\} \ ,
\end{align}
\ees
and if we differentiate Eq.~\eqref{eq:jordan-sol} we obtain, first for $k=d_j$: 
\beq
\dot{w}_{j,d_j}^{(0)} = \mu_j e^{\mu_j t} p_0 = \mu_j {w}_{j,d_j}^{(0)}\ ,
\eeq
and second, for $k<d_j$:
\bes
\begin{align}
\dot{w}_{j,k}^{(0)} &= \mu_j {w}_{j,k}^{(0)} + \sum_{n=k+1}^{d_j} {w}_{j,n}^{(0)}(0) \frac{t^{n-k-1}}{(n-k-1)!} \\
&= \mu_j {w}_{j,k}^{(0)} + {w}_{j,k+1}^{(0)}\ ,\qquad k\in\{1,\dots d_j-1\}  \ .
\end{align}
\ees
\end{proof}

Note that we can be certain that for all $d_j>1$, the corresponding $\real(\m_j)< 0$, since a positive or zero real part would violate \cref{eq:vnorm}.

Next, let us solve for the nonhomogeneous part.
As before, it is determined by the solution form \eqref{eq:gen-G-inh} subject to the boundary conditions
\bes\label{eq:gen-G-bc}
\begin{align}
\vec{w}(0)& =  \vec{w}^{(0)}(0) + \vec{w}^{(\infty)} \\
\vec{w}(\infty)& = \vec{w}^{(\infty)}\,.
\end{align}
\ees
On the one hand, $\dot{\vec{w}} = \dot{\vec{w}}^{(0)}$, and on the other hand $\dot{\vec{w}} = G_J (\vec{w}^{(0)}(t) + \vec{w}^{(\infty)}) + \vec{c}' = G_J \vec{w}^{(0)}(t) + G_J\vec{w}^{(\infty)} + \vec{c}'$. Since only the homogeneous part is time-dependent, we must have $\dot{\vec{w}}^{(0)} = G_J \vec{w}^{(0)}(t)$, so that the particular solution satisfies
\beq
G_J\vec{w}^{(\infty)} = -\vec{c}' \ .
\label{eq:winfsol}
\eeq
This linear system has a solution for $\vec{w}^{(\infty)}$ iff $\text{rank}(G_J) = \text{rank}(G_J|\vec{c}')$ (the augmented matrix). 

The solution is unique only if, in addition, $G_J$ is full rank (equal to $J$); otherwise, the solution for $\vec{w}^{(\infty)}$ is under-determined. 
Since $G_J$ is in Jordan form, it is in full-rank row-echelon form unless there are Jordan blocks with $\mu_j=0$.
However, as a consequence of the norm upper bound \cref{eq:vnorm}, any such Jordan blocks must have a dimension $d_j=1$, since otherwise, the polynomials in \cref{eq:jordan-sol} would grow unboundedly.
Therefore, the kernel dimension $\dim(G_J) - \mathrm{rank}(G_J)$ must be the number of Jordan blocks with $\mu_j=0$  and $d_j=1$, and in these blocks, \cref{eq:354} becomes the trivial scalar differential equation $\dot{w}_j = c'_j$, whose solution is $w_j(t) = c'_j t + w_j(0)$. But this means that $c'_j=0$ since otherwise $w_j(t)$ is unbounded. Thus, in each block with $\mu_j=0$ we have $w_j(t) = w_j(0)$.
Such Jordan blocks represent frozen degrees of freedom that do not evolve in time.


\section{Injectivity of
the map \texorpdfstring{$a \mapsto (R,\vec{c})$}{a to (R,c)}}
\label{sec:lemmas12}

Here we provide an alternative proof of injectivity
of the linear map $\mathcal{F}\colon a \mapsto (R,\vec{c})$
introduced in \cref{ss:atoRc}, which repurposes the ideas of \cref{thm:forward.inverse} to focus on injectivity alone. But first, we need a Lemma separate from \cref{thm:forward.inverse}:

\begin{mylemma}
\label{lemma1}
If $\mathcal{F}$ maps matrices $a_1$ and $a_2$ to the same $(R,\vec{c})$, then 
$\mathcal{L}_{a_1} = \mathcal{L}_{a_2}$ where $\mathcal{L}_{a}$ is the 
dissipative Liouvillian of \cref{eq:L_a} acting on arbitrary $d\times d$ operators $X$, not just positive $\rho$ with $\Tr(\rho) = 1$.
\end{mylemma}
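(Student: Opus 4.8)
The plan is to show that the pair $(R,\vec{c})$ already encodes the \emph{full} matrix representation of $\mathcal{L}_a$ in the nice operator basis $\{F_j\}_{j=0}^{J}$, so that equal images under $\mathcal{F}$ force the two superoperators to coincide on all of $M(d,\mathbb{C})$. Since $\mathcal{L}_a$ is $\mathbb{C}$-linear and the $\{F_j\}_{j=0}^{J}$ span $M(d,\mathbb{C})$, the superoperator is completely determined by the $(J+1)\times(J+1)$ matrix with entries $\Tr[F_i\,\mathcal{L}_a(F_j)]$, $0\le i,j\le J$. It therefore suffices to prove that every one of these entries is a function of $(R,\vec{c})$ alone.

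First I would identify the blocks of this matrix, in direct analogy with the block form \eqref{eq:Lij-matrix-form}. The $J\times J$ block ($1\le i,j\le J$) is exactly $R$ by its definition \eqref{eq:LatoR}. The first column below the corner is fixed by $\vec{c}$: since $\mathcal{L}_a(F_0)=\tfrac{1}{\sqrt{d}}\mathcal{L}_a(I)$, we have $\Tr[F_i\,\mathcal{L}_a(F_0)]=\tfrac{1}{\sqrt{d}}\Tr[F_i\,\mathcal{L}_a(I)]=\sqrt{d}\,c_i$ for $1\le i\le J$, using \eqref{eq:c_j} (which one checks equals the defining expression \eqref{eq:cfroma}). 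Finally, the entire top row vanishes: $\Tr[F_0\,\mathcal{L}_a(F_j)]=\tfrac{1}{\sqrt{d}}\Tr[\mathcal{L}_a(F_j)]=0$ for every $j$ by \cref{prop:TrL0}. The decisive feature is that this last block is zero \emph{independently of $a$}, hence it is automatically common to $a_1$ and $a_2$.

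Putting the blocks together, the hypothesis $\mathcal{F}(a_1)=\mathcal{F}(a_2)$, i.e.\ $R_1=R_2$ and $\vec{c}_1=\vec{c}_2$, forces every entry of the two matrices to agree. Consequently $\mathcal{L}_{a_1}(F_j)=\mathcal{L}_{a_2}(F_j)$ for all $j=0,\dots,J$, and by linearity $\mathcal{L}_{a_1}=\mathcal{L}_{a_2}$ on every $X\in M(d,\mathbb{C})$, not merely on unit-trace density matrices. I do not expect any computational obstacle; the only conceptual point, which the lemma's phrasing deliberately flags, is that $(R,\vec{c})$ appears at first to record only the coherence-vector dynamics of physical states and thus to say nothing about the action on trace-free or otherwise non-physical operators. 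The resolution is precisely that \cref{prop:TrL0} pins the remaining degrees of freedom (the top row) to zero, so that no data beyond $(R,\vec{c})$ is ever required; I regard this as the crux of the argument.
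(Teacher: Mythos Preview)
Your argument is correct, but it is not the route the paper takes. The paper's proof of \cref{lemma1} is dynamical: it observes that if $\mathcal{F}(a_1)=\mathcal{F}(a_2)$ then the two coherence-vector ODEs $\dot{\vec v}=R\vec v+\vec c$ coincide, invokes uniqueness of solutions of linear ODEs to conclude that the trajectories $X_1(t)$ and $X_2(t)$ agree for arbitrary (not necessarily physical) initial data, and then reads off $\mathcal{L}_{a_1}[X(0)]=\dot X(t)|_{t=0}=\mathcal{L}_{a_2}[X(0)]$. Your proof is purely algebraic: you note that $(R,\vec c)$ together with the universal vanishing of the top row (from \cref{prop:TrL0}) already pins down the full $(J{+}1)\times(J{+}1)$ matrix of $\mathcal{L}_a$ in the basis $\{F_j\}_{j=0}^{J}$, so equality of $(R,\vec c)$ forces equality of the superoperators on all of $M(d,\mathbb{C})$. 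Your approach is shorter and avoids the detour through ODE existence/uniqueness; it is essentially the observation behind \cref{eq:Lij-matrix-form} specialized to $\mathcal{L}_a$. The paper's approach, by contrast, makes explicit why the phrasing ``not just positive $\rho$ with $\Tr\rho=1$'' is genuinely a nontrivial extension from the physical setting in which $(R,\vec c)$ was introduced, whereas in your argument that extension is handled in one stroke by the top-row-vanishing observation you correctly flag as the crux.
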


\begin{proof}
\cref{prop:4} trivially holds for arbitrary $d\times d$ operators $X$ and vectors $\vec{v}$ with complex-valued components computed using \cref{eq:v_j}, with $\r$ replaced by $X$. It shows that after coordinatization the 
Markovian quantum master equation is equivalent to a linear ordinary differential equation for such vectors, so the solution to 
$\dot{X} = \mathcal{L}_a(X)$ is unique up to initial condition for any $X \in \mc{B}(\mc{H})$.
Suppose $a_1$ and $a_2$ both map to the same $\vec{c}$ and $R$ by \cref{eq:cfroma,eq:R}. Let $X_1(t)$ be the solution to $\dot{X}_1(t) = \mc{L}_{a_1}[X_1(t)]$ and $X_2(t)$ be the solution to $\dot{X}_2(t) = \mc{L}_{a_2}[X_2(t)]$.
Furthermore, let them have the same initial conditions: $X(0) = X_1(0) = X_2(0)$. Let $\vec{v}_1(t)$ and $\vec{v}_2(t)$ be the vectors associated with $X_1(t)$ and $X_2(t)$ via \cref{eq:v_j}. Then we have
\bes
\begin{align}
    \dot{\vec{v}}_1(t) &= R\vec{v}_1(t) + \vec{c}\\
    \dot{\vec{v}}_2(t) &= R\vec{v}_2(t) + \vec{c}\ .
\end{align} 
\ees
By uniqueness of the solution to $\dot{\vec{v}} = R\vec{v}+\vec{c}$ and that $\vec{v}_1(0) = \vec{v}_2(0)$ since $X_1(0) = X_2(0)$, we have $\vec{v}_1(t) = \vec{v}_2(t)$ for all $t$, which implies $X_1(t) = X_2(t)$ for all $t$. Thus $\dot{X} = \mathcal{L}_{a_1}(X)$ and $\dot{X} = \mathcal{L}_{a_2}(X)$ have the same unique solution up to initial conditions. 

For any operator $X(0)$, let $X(t)$ be the unique solution to both $\dot{X} = \mathcal{L}_{a_1}(X)$ and $\dot{X} = \mathcal{L}_{a_2}(X)$ with initial condition $X(0)$. Then $\dot{X}(t)|_0 = \mathcal{L}_{a_1}[X(0)] = \mathcal{L}_{a_2}[X(0)]$ for arbitrary $X(0)$. Thus $\mathcal{L}_{a_1}=\mathcal{L}_{a_2}$.
\end{proof}

\begin{mylemma}
\label{lemma2}
$\mathcal{F}$ is injective (one-to-one).
\end{mylemma}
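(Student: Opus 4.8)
The plan is to factor $\mathcal{F}$ through the map $a \mapsto \mathcal{L}_a$ and then invert the latter explicitly. Suppose $\mathcal{F}(a_1) = \mathcal{F}(a_2)$, i.e.\ $a_1$ and $a_2$ yield the same pair $(R,\vec{c})$. By \cref{lemma1} this already forces $\mathcal{L}_{a_1} = \mathcal{L}_{a_2}$, so it suffices to prove that the dissipator $\mathcal{L}_a$ determines $a$. Since $a \mapsto \mathcal{L}_a$ is $\mathbb{R}$-linear, this is equivalent to the statement $\mathcal{L}_a = 0 \implies a = 0$.

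To recover $a$ from $\mathcal{L}_a$, I would pass to the $F\!\cdot\! F$ representation \cref{eq:Ph-gen-supop} and write $\mathcal{L}_a(X) = \sum_{i,j=0}^{J} c_{ij} F_i X F_j$. The key structural observation is that the dissipator \cref{eq:L_a} is almost already in this form. Its gain term $\sum_{i,j=1}^{J} a_{ij} F_i X F_j$ contributes $c_{ij} = a_{ij}$ for all $i,j \in \{1,\dots,J\}$. Each anticommutator term $-\tfrac12 a_{ij}(F_j F_i X + X F_j F_i)$ can be brought into $F\!\cdot\!F$ form by expanding $F_j F_i = \sum_{k=0}^{J}\Tr(F_k F_j F_i) F_k$ and inserting $I = \sqrt{d}\,F_0$ on the free side; because this always deposits an $F_0$ into one of the two slots, these terms contribute only to the coefficients $c_{k0}$ and $c_{0k}$. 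Consequently the lower-right $J \times J$ block of $c$ coincides exactly with $a$, and reading off this block recovers $a$ from the collection $\{c_{ij}\}$.

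What remains, and what I expect to be the crux, is to argue that the coefficients $c_{ij}$ are themselves uniquely determined by the superoperator $\mathcal{L}_a$, i.e.\ that the $F\!\cdot\!F$ representation is unique. I would establish this by vectorization: the superoperator $X \mapsto F_i X F_j$ is represented on $\mathbb{C}^{d^2}$ by the matrix $F_j^{T} \otimes F_i$. Since $\{F_i\}_{i=0}^{J}$ is a basis of $M(d,\mathbb{C})$, so is $\{F_j^{T}\}_{j=0}^{J}$, and hence the $(J+1)^2 = d^4$ tensor products $F_j^{T} \otimes F_i$ form a basis of $M(d^2,\mathbb{C})$. This shows the $d^4$ maps $X \mapsto F_i X F_j$ are linearly independent, hence a basis of $\mathcal{B}[\mathcal{B}(\mc{H})]$, so the expansion coefficients $c_{ij}$ are uniquely fixed by $\mathcal{L}_a$. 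In particular $a_{ij} = c_{ij}$ for $i,j \geq 1$ is determined by $\mathcal{L}_a$.

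Putting the pieces together: from $\mathcal{L}_{a_1} = \mathcal{L}_{a_2}$ and uniqueness of the $F\!\cdot\!F$ coefficients we get $a_1 = a_2$, which establishes injectivity of $\mathcal{F}$. I anticipate that the only genuinely delicate points are the uniqueness of the $F\!\cdot\!F$ representation (which I would justify via the tensor-product basis argument above, or cite as a standard fact) and the bookkeeping verifying that the anticommutator corrections never contribute to coefficients with both indices nonzero; everything else is a direct reading-off of coefficients.
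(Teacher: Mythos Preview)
Your proof is correct and takes a genuinely different route from the paper. Both arguments reduce, via \cref{lemma1} and linearity, to showing $\mathcal{L}_a = 0 \implies a = 0$. From there the paper proceeds by a direct matrix-element computation: evaluating $\sum_k \bra{i}\mathcal{L}_a(\ketb{j}{k})\ket{k}$ first yields $\Tr(a)=0$ and $\sum_{m,n} a_{mn} F_n F_m = 0$, which kills the anticommutator terms; a second explicit contraction then extracts each $a_{mn}$ individually as zero. Your approach is more structural: you observe that in the expansion $\mathcal{L}_a = \sum_{i,j} c_{ij} F_i \bullet F_j$ the anticommutator pieces land entirely in the $c_{k0}$ and $c_{0k}$ slots (because one factor is always $I=\sqrt{d}F_0$), so the $J\times J$ block with both indices nonzero is exactly $a$; then you invoke uniqueness of the $c_{ij}$ via the tensor-basis argument $\{F_j^T\otimes F_i\}$. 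Your route is cleaner and makes the mechanism (block structure plus linear independence of the elementary superoperators) transparent, while the paper's computation is self-contained and avoids appealing to the vectorization picture. Both the bookkeeping on the anticommutator and the linear-independence claim are correct as you stated them.
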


\begin{proof}
To prove injectivity we need to show that $\mathcal{F}(a_1)=\mathcal{F}(a_2) \implies a_1=a_2$, but since $\mathcal{F}$ is linear it suffices to show that $\mathcal{F}(a)=(0,\vec{0}) \implies a=0$.

Let us take a matrix $a$ such that $\mathcal{F}(a) = (0,\vec{0})$. From \cref{lemma1}:
\begin{equation}
  \label{eq:mLa=0}
  \mathcal{L}_a = \mathcal{L}_0 = 0,
\end{equation}
where $\mathcal{L}_a$ is the dissipative Liouvillian in \cref{eq:LE}. Let $\ket{j}$ be the vector with the $j$-th component equal to $1$ and all other components equal to $0$. Then, from \cref{eq:mLa=0}, $\forall j,k=1,\dots,d$ we have $\mathcal{L}_{a}(\ketb{j}{k}) = 0$. In particular, using \cref{eq:L_a}:
 \bes
 \begin{align}
0 &= \sum_{k=1}^{d} \bra{i}\mathcal{L}_{a}(\ketb{j}{k})\ket{k} \\
&= \,\sum_{{m,n=1}}^{J}\, a_{mn}\big[ \bra{i}F_m\ket{j}\sum_{k=1}^{d} \bra{k}F_n\ket{k} -\frac{1}{2}\bra{i}F_n F_m\ket{j}\sum_{k=1}^{d} \brak{k}{k} \notag \\
&\qquad  - \frac{1}{2}\brak{i}{j}\sum_{k=1}^{d} \bra{k}F_n F_m\ket{k} \big] \\
& = \sum_{m,n=1}^{J} a_{mn}\big[ \bra{i}F_m\ket{j} \Tr(F_n) -\frac{d}{2} \bra{i}F_n F_m\ket{j} \notag \\
&\qquad  - \frac12 \d_{ij} \Tr(F_n F_m) \big]\\
& =  -\frac12 \delta_{ij} \Tr(a) - \frac{d}{2} \sum_{m,n=1}^{J} a_{mn} \bra{i}F_n F_m\ket{j}  \ ,
\label{eq:B2d}
 \end{align}
 \ees
 where we used $\Tr(F_n)=0$ and $\Tr(F_n F_m)=\d_{nm}$ (\cref{def:nice-operator-basis}). For the same reason, $\sum_{i=1}^{d} \bra{i}F_n F_m\ket{i} = \Tr(F_n F_m) = \d_{nm}$, so that after summing \cref{eq:B2d} over $i = j = 1,\ldots,d$ we obtain
\beq
0 = -\frac12 \sum_{i=1}^{d} \Tr(a) - \frac{d}{2} \sum_{m,n=1}^{J} a_{mn} \d_{nm} = -d\Tr(a)\ ,
\eeq 
i.e., $\Tr(a) = 0$. From \cref{eq:B2d} it follows that $\bra{i}A\ket{j} = 0$ $\forall i,j$, where  $A\equiv \sum_{m,n=1}^{J} a_{mn} F_n F_m$, i.e., $A=0$.
Hence, we have shown that $\forall X$:
  \begin{equation}
    \label{eq:La.simple}
    \mathcal{L}_{a}(X) = \sum_{m,n=1}^{J} a_{mn} F_n X F_n\ .
  \end{equation}
We can now express the matrix elements $a_{mn}$ in terms of $\mathcal{L}_{a}$, as follows:
 \bes
 \begin{align}
 a_{mn} &= \sum_{m',n'=1}^{J} a_{m'n'} \Tr(F_n F_{n'})\Tr(F_{m'} F_{m}) \\
 & = \sum_{j,k=1}^{d} \bra{j}F_n\big[\sum_{m',n'=1}^{J}a_{m'n'} F_{n'}\ketb{j}{k}F_{m'}\big]F_m\ket{k}\\
 & = \sum_{j,k=1}^{J} \bra{j}F_n  \mathcal{L}_{a}(\ketb{j}{k}) F_m\ket{k} = 0\ ,
 \end{align}
 \ees
since $\mathcal{L}_{a} = 0$. That is, $a = 0$.
\end{proof}

\section{Facts from convex geometry}
\label{sec:convex}
In this section, we summarize the facts from convex geometry
needed for this work.

Let $M_{{+}}(J,\mathbb{C})$ be the set
of positive semidefinite $J\times J$ matrices with complex coefficients:
\begin{equation}
  \label{eq:Mplus.def}
  M_{{+}}(J,\mathbb{C}) = \left\{a \in M_{\textrm{sa}}(J, \mathbb{C}):
  \forall b \in \mathbb{C}^J\; b^\dag a b \geq 0 \right\}.
\end{equation}
This equation describes $M_{{+}}(J,\mathbb{C})$ as the intersection
of infinitely many half-spaces in $M_{\textrm{sa}}(J,\mathbb{C})$,
whose boundaries all pass through
the origin $a=0$. Therefore, $M_{{+}}(J,\mathbb{C})$ is a closed convex
cone in $M_{\textrm{sa}}(J,\mathbb{C})$.

Let
\begin{equation}
  \label{eq:Mplus1.def}
  M_{{+},1}(J,\mathbb{C}) = \{a \in M_{{+}}(J,\mathbb{C})\colon\; \Tr(a) = 1\},
\end{equation}
i.e., the set of unit-trace, positive-semidefinite matrices.

\begin{mylemma}
  \label{lm:Mplus.1}
  $M_{{+}}(J,\mathbb{C})$ is a closed convex cone with a compact base, which 
  can be chosen to be $M_{{+},1}(J,\mathbb{C})$.
\end{mylemma}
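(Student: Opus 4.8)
The plan is to take the base to be $B = M_{+,1}(J,\mathbb{C})$ and to verify the two nontrivial claims separately: that $B$ is indeed a base of the cone, and that it is compact. The fact that $M_{+}(J,\mathbb{C})$ is a closed convex cone is already established in the discussion preceding the lemma, so I would only need to address the base and its compactness.

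To see that $B$ is a base, recall that a convex subset $B$ of a cone is a base precisely when every nonzero element of the cone admits a unique representation $a = \lambda b$ with $\lambda > 0$ and $b \in B$. The key observation is that the trace is strictly positive on the nonzero elements of the cone: if $a \in M_{+}(J,\mathbb{C})$ is nonzero, its eigenvalues are nonnegative and not all zero, so $\Tr(a) = \sum_j \lambda_j > 0$. Hence I can write $a = \Tr(a)\,\bigl(a/\Tr(a)\bigr)$, where $a/\Tr(a) \in M_{+,1}(J,\mathbb{C}) = B$; uniqueness of $\lambda = \Tr(a)$ follows by applying the (linear) trace functional to both sides and using $\Tr(b)=1$ for $b\in B$. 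Convexity of $B$ is immediate from the convexity of $M_{+}(J,\mathbb{C})$ together with the linearity of the trace (a convex combination of unit-trace matrices again has unit trace).

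It then remains to show that $B$ is compact. Since we work in the finite-dimensional real vector space $M_{\textrm{sa}}(J,\mathbb{C})$, it suffices to check closedness and boundedness. Closedness is clear because $B = M_{+}(J,\mathbb{C}) \cap \{a : \Tr(a) = 1\}$ is the intersection of the closed cone with a closed affine hyperplane. For boundedness, note that for $a \in B$ each eigenvalue satisfies $0 \le \lambda_j \le 1$ (nonnegative and summing to one), so $\|a\|^2 = \Tr(a^2) = \sum_j \lambda_j^2 \le \bigl(\sum_j \lambda_j\bigr)^2 = 1$, giving a uniform bound in the Hilbert-Schmidt norm.

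I would not expect any genuine obstacle here: the only two facts doing real work are the strict positivity of the trace on nonzero positive-semidefinite matrices (which rules out $0 \in B$ and supplies the scaling) and the eigenvalue estimate yielding boundedness, both of which are elementary. The one point meriting care is to state the definition of a base explicitly, so that the \emph{uniqueness} of the decomposition, and not merely its existence, is verified.
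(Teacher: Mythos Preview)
Your proof is correct and follows essentially the same route as the paper's: both use $\Tr(a)>0$ for nonzero $a\in M_+(J,\mathbb{C})$ to scale onto the unit-trace slice, and both obtain compactness via closedness plus the eigenvalue bound $\sum_j \lambda_j^2 \le 1$. You are slightly more careful in explicitly stating the definition of a base and checking uniqueness and convexity of $B$, which the paper leaves implicit.
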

\begin{proof}
  As noted above, $M_{{+}}(J,\mathbb{C})$ is a closed convex cone.
  By definition, $M_{{+},1}(J,\mathbb{C})$ is an intersection of that
  cone and a hyperplane not passing through the origin.
  Thus, it remains to prove that
  \begin{enumerate}
    \item any of the rays in $M_{{+}}(J,\mathbb{C})$ passes through
    $M_{{+},1}(J,\mathbb{C})$, i.e., every non-zero element of $M_{{+}}(J,\mathbb{C})$ 
    is proportional to an element of $M_{{+},1}(J,\mathbb{C})$ with a positive 
    coefficient;
    \item $M_{{+},1}(J,\mathbb{C})$ is compact.
  \end{enumerate}
  To prove the first statement, let $a \neq 0, a \in M_{{+}}(J,\mathbb{C})$. 
  Since $a$ is Hermitian
  it can be diagonalized. \cref{eq:Mplus.def} implies that all its eigenvalues
  are non-negative, hence $\Tr(a) \geq 0$. In fact, at least one of the
  eigenvalues has to be positive, otherwise $a = 0$, hence $\Tr(a) > 0$.
  Therefore $a = \Tr(a) (a / \Tr(a))$ where the coefficient $\Tr(a) > 0$ and
  the element $a / \Tr(a) \in M_{{+},1}(J,\mathbb{C})$.

  Since $M_{{+},1}(J,\mathbb{C})$ is an intersection of closed sets,
  it is closed. Thus, it remains to prove it is bounded, e.g.:
  \begin{equation}
    \label{eq:Mplus1.a.bounded}
    \forall a \in M_{{+},1}(J,\mathbb{C}) \quad
    \sum_{i,j=1}^{J} \abs{a_{ij}}^2 \leq 1\ .
  \end{equation}
  Let $\{\lambda_i\}_{i=1}^{J}$ be the eigenvalues of such an $a$. Then
  $\sum_{i=1}^J \lambda_i = 1$ and $0 \leq \lambda_i \leq 1$. Therefore, indeed,
  \begin{equation}
    \sum_{i,j=1}^{J} \abs{a_{ij}}^2 = \Tr(a^\dag a) = \sum_{i=1}^J \lambda_i^2
    \leq \sum_{i=1}^J \lambda_i = 1\ .
  \end{equation}
\end{proof}

\begin{mylemma}
  \label{lm:Mplus.2}
  The extreme rays of $M_{{+}}(J,\mathbb{C})$ are the rays generated by rank 1 
  matrices.
\end{mylemma}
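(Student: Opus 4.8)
The plan is to prove the two inclusions separately: every rank-one element of $M_{{+}}(J,\mathbb{C})$ generates an extreme ray, and conversely every extreme ray is generated by a rank-one matrix. Recall that, since $M_{{+}}(J,\mathbb{C})$ is a closed convex cone (\cref{lm:Mplus.1}), a ray $\{\lambda a:\lambda\ge 0\}$ with $0\neq a\in M_{{+}}(J,\mathbb{C})$ is \emph{extreme} iff whenever $a=a_1+a_2$ with $a_1,a_2\in M_{{+}}(J,\mathbb{C})$, both summands are nonnegative multiples of $a$. The technical engine I would establish first is the elementary fact that if $a=a_1+a_2$ with all three matrices in $M_{{+}}(J,\mathbb{C})$, then $\ker(a)\subseteq\ker(a_1)\cap\ker(a_2)$, equivalently $\operatorname{range}(a_i)\subseteq\operatorname{range}(a)$. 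Indeed, for $x\in\ker(a)$ one has $x^\dagger a x=0$, so $x^\dagger a_1 x+x^\dagger a_2 x=0$; as both terms are nonnegative they vanish individually, and $x^\dagger a_i x=0$ for a positive semidefinite $a_i$ forces $a_i x=0$.

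For the first inclusion, let $a=vv^\dagger$ be rank one and suppose $a=a_1+a_2$ with $a_i\in M_{{+}}(J,\mathbb{C})$. By the engine above, $\operatorname{range}(a_1)\subseteq\operatorname{range}(a)=\operatorname{span}(v)$, so $a_1$ has rank at most one with range contained in $\operatorname{span}(v)$; positivity then forces $a_1=\lambda\,vv^\dagger$ with $\lambda\ge 0$, and likewise for $a_2$. Hence both summands lie on the ray through $a$, so the ray is extreme.

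For the converse, suppose the ray through $a\neq 0$ is extreme and, for contradiction, that $\rank(a)\ge 2$. Diagonalizing the Hermitian matrix $a$, write $a=\sum_k \lambda_k u_k u_k^\dagger$ with at least two strictly positive eigenvalues. Split $a=\lambda_1 u_1 u_1^\dagger+(a-\lambda_1 u_1 u_1^\dagger)$; both summands are nonzero elements of $M_{{+}}(J,\mathbb{C})$, but the first has rank one while $a$ has rank $\ge 2$, so neither summand is a scalar multiple of $a$. This contradicts extremality, forcing $\rank(a)=1$.

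The only subtle point — and hence the main obstacle — is the positive-semidefinite summand fact underlying the engine, namely the implication $x^\dagger a_i x=0\Rightarrow a_i x=0$; everything else is bookkeeping via the spectral decomposition. I would dispatch it cleanly using a factorization $a_i=b^\dagger b$ (square-root or Cholesky), so that $x^\dagger a_i x=\|bx\|^2=0$ immediately gives $bx=0$ and therefore $a_i x=0$.
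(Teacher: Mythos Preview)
Your proof is correct. The converse direction (extreme $\Rightarrow$ rank one) is essentially identical to the paper's: diagonalize and split off a rank-one piece to contradict extremality when $\rank(a)\ge 2$.

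For the forward direction (rank one $\Rightarrow$ extreme), however, you take a genuinely different and cleaner route. The paper argues by contradiction: it writes $a=\alpha bb^\dagger$, decomposes each summand as $a_i/\alpha=\beta_i bb^\dagger+\tilde a_i$ with $b^\dagger\tilde a_i b=0$, and then works to show $\tilde a_1=0$ by splitting an arbitrary vector $c$ into components along $b$ and $c_\perp$, exploiting positivity of both $a_1$ and $a_2$ to force $c_\perp^\dagger\tilde a_1 c_\perp=0$, and finally using the quadratic-in-$\delta$ inequality $(b+\delta c_\perp)^\dagger a_1(b+\delta c_\perp)\ge 0$ to kill the cross term. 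Your ``engine'' --- the observation that $a=a_1+a_2$ in the PSD cone forces $\operatorname{range}(a_i)\subseteq\operatorname{range}(a)$, proved via $x^\dagger a_i x=0\Rightarrow a_i x=0$ through a square-root factorization --- packages all of this into a single standard fact and immediately gives $a_1=\lambda vv^\dagger$. This is the textbook argument for faces of the PSD cone and is both shorter and more transparent than the paper's hands-on computation; the paper's version has the minor advantage of being entirely self-contained (it does not invoke square roots), but your appeal to $a_i=b^\dagger b$ is elementary enough that this is not a real cost.
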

Note: this is \cite[Exercise 7.5.15]{Horn:book}. Here we provide a proof for
completeness.
\begin{proof}
  Let $\{\alpha a: \alpha \in \mathbb{R}_+\}$ be an extreme ray in
  $M_{{+}}(J,\mathbb{C})$. Here $a \in M_{{+}}(J,\mathbb{C}) \setminus \{0\}$.
  Since $a$ is Hermitian, we can diagonalize it by a unitary,
  hence write it in the form
  \begin{equation}
    \label{eq:Mplus.abb}
    a = \sum_{i=1}^k b_i b_i^\dag
  \end{equation}
  for some $b_i \in \mathbb{C}^J$ orthogonal to each other, where
  $k = \rank(a)$. Since $a \neq 0$, $k \geq 1$. If $k \geq 2$ then
  \cref{eq:Mplus.abb} represents $a$ as a sum of non-proportional
  elements of $M_{{+}}(J,\mathbb{C})$, contradicting the assumption
  that $a$ lies on an extreme ray in $M_{{+}}(J,\mathbb{C})$. Thus, $k=1$.

  Vice-versa, if $\rank(a) = 1$, $a\in M_{{+}}(J,\mathbb{C})$, we want to
  prove that $a$ lies on an extreme ray of $M_{{+}}(J,\mathbb{C})$.
  Assume to the contrary that
  \begin{equation}
    \label{eq:Mplus.adecomp}
    a = a_1 + a_2
  \end{equation}
  where $a_1,a_2 \in M_{{+}}(J,\mathbb{C})$ and $a_1$ is not proportional to 
  $a$. We can write $a = \alpha bb^\dag$ for some
  $\alpha \in \mathbb{R}_+$, $b\in \mathbb{C}^J$ with 
  $\norm{b}=1$, $\alpha > 0$. Then for $i=1,2$, we can decompose
  \begin{equation}
    a_i / \alpha = \beta_i bb^\dag + \tilde{a}_i
  \end{equation}
  where $\beta_i = b^\dag a_i b / \alpha$, $b^\dag\tilde{a}_ib = 0$.
  \cref{eq:Mplus.adecomp} implies that $\beta_1 + \beta_2 = 1$,
  $\tilde{a}_1 + \tilde{a}_2 = 0$. By our assumption $\tilde{a}_1 \neq 0$ 
  (otherwise $a_1$ would be proportional to $a$).
  Therefore, there is a vector $c$ s.t. $c^\dag\tilde{a}_1c \neq 0$.
  Let $c = \gamma b + c_{\perp}$, where $b^\dag c_{\perp} = 0$.
  Note that
  $c_\perp^\dag \tilde{a}_1 c_\perp = c_\perp^\dag a_1 c_\perp / \alpha$,
  hence $c_\perp^\dag \tilde{a}_1 c_\perp \geq 0$. Similarly,
  $c_\perp^\dag \tilde{a}_1 c_\perp = -c_\perp^\dag a_2 c_\perp / \alpha$,
  hence $c_\perp^\dag \tilde{a}_1 c_\perp \leq 0$. Thus,
  $c_\perp^\dag \tilde{a}_1 c_\perp = 0$.

  For any $\delta \in \mathbb{C}$ we have
  \begin{equation}
    \label{eq:Mplus.quadratic}
    (b + \delta c_{\perp})^\dag a_1 (b + \delta c_{\perp}) \geq 0\ ,
  \end{equation}
and we have just shown that the quadratic term in this expression is zero. Since 
  the inequality \cref{eq:Mplus.quadratic} has to hold for all $\delta$, the 
  linear term is zero too, i.e., $b^\dag a_1 c_{\perp} = 0$.
  Thus, $c^\dag \tilde{a}_1 c = 0$, contradicting the assumption.
\end{proof}

\begin{mylemma}
  \label{lm:Mplus.3}
  $M_{{+}}(J,\mathbb{C})$ is the convex hull of its extreme rays.
\end{mylemma}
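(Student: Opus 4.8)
The plan is to prove the two inclusions separately, using the spectral decomposition of positive semidefinite matrices together with \cref{lm:Mplus.2}, which already identifies the extreme rays of $M_{{+}}(J,\mathbb{C})$ as those generated by rank-$1$ matrices.

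First I would dispatch the easy inclusion. Since each extreme ray is by definition contained in $M_{{+}}(J,\mathbb{C})$, and \cref{lm:Mplus.1} shows that $M_{{+}}(J,\mathbb{C})$ is a convex cone, any convex combination of points lying on extreme rays again lies in $M_{{+}}(J,\mathbb{C})$. Hence the convex hull of the extreme rays is contained in $M_{{+}}(J,\mathbb{C})$.

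For the reverse inclusion, let $a \in M_{{+}}(J,\mathbb{C})$. Because $a$ is Hermitian and positive semidefinite, I would diagonalize it to write
\[
  a = \sum_{i=1}^{k} \lambda_i\, v_i v_i^\dagger,\qquad \lambda_i > 0,\ k = \rank(a),
\]
with the $v_i$ orthonormal. Each summand $\lambda_i v_i v_i^\dagger$ is a nonzero rank-$1$ positive semidefinite matrix (when $a\neq 0$), so by \cref{lm:Mplus.2} it lies on an extreme ray of $M_{{+}}(J,\mathbb{C})$. Setting $\Lambda = \sum_{i=1}^{k}\lambda_i > 0$, I would then rewrite
\[
  a = \sum_{i=1}^{k} \frac{\lambda_i}{\Lambda}\,\bigl(\Lambda\, v_i v_i^\dagger\bigr),
\]
which exhibits $a$ as a convex combination — the coefficients $\lambda_i/\Lambda$ are nonnegative and sum to $1$ — of the points $\Lambda\, v_i v_i^\dagger$, each lying on an extreme ray. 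The degenerate case $a=0$ is immediate, since $0$ lies on every extreme ray. Thus $a$ belongs to the convex hull of the extreme rays, completing the reverse inclusion.

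The only conceptual point to watch is the interplay between ``convex hull'' and the conical structure: because the set of extreme rays is invariant under positive scaling, a nonnegative combination of rank-$1$ matrices can always be renormalized into a genuine convex combination of rescaled rank-$1$ matrices, so no separate appeal to a Krein--Milman-type theorem for cones is needed. I do not anticipate a serious obstacle here, as the compact-base property from \cref{lm:Mplus.1} guarantees the construction is well posed and the spectral theorem supplies the required decomposition directly.
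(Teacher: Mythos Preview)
Your argument is correct and matches the paper's first proof, which likewise uses spectral decomposition together with \cref{lm:Mplus.2} to write any $a\in M_{+}(J,\mathbb{C})$ as a (conical, hence convex after rescaling) combination of rank-$1$ matrices. The paper also records a second, independent proof via the Krein--Milman/Minkowski theorem applied to the compact base from \cref{lm:Mplus.1}, which you explicitly noted is unnecessary for this direct approach.
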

Here we present two alternative ways to see this.
\begin{proof}[Proof 1]
  Any positive semidefinite matrix can be diagonalized. Such diagonalization
  results \cite[Theorem 7.5.2]{Horn:book} in a decomposition of the matrix into 
  rank-$1$ positive semidefinite matrices [as in \cref{eq:Mplus.abb}], which -- according to \cref{lm:Mplus.2} --
  lie on the extreme rays of $M_{{+}}(J,\mathbb{C})$.
\end{proof}
\begin{proof}[Proof 2]
  In the context of finite dimensional vector spaces,
  Krein-Milman (also known as Minkowski's theorem, see, e.g.,
  \cite[Corollary 18.5.1]{Rockafellar:book}) states that
  any compact convex set is the convex hull of its extreme points.
  As a corollary any convex cone with a compact base is the 
  convex hull of its extreme rays. As follows from \cref{lm:Mplus.1},
  $M_{{+}}(J,\mathbb{C})$ is such a cone.
\end{proof}

\section{Representation of a superoperator using a nice operator basis}
\label{sec:supop.FAF}

\begin{myproposition}
  \label{prop:supop.FAF}
  Any superoperator 
  $\mc{E}\in \mathcal{B}[\mathcal{B}(\mc{H})]$ can always be represented as
  \begin{equation}
    \label{eq:Ph-gen-supop2}
    \mc{E}(A) = \sum_{{i,j=0}}^{J} c_{ij} F_i \bullet F_{j}\ .
  \end{equation}
\end{myproposition}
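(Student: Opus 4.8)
The plan is to prove the representation by a dimension count. The space $\mc{B}[\mc{B}(\mc{H})]$ of all superoperators has dimension $(\dim \mc{B}(\mc{H}))^2 = (d^2)^2 = (J+1)^2$, and the family of superoperators $\Phi_{ij}\colon A \mapsto F_i A F_j$ with $0 \le i,j \le J$ contains exactly $(J+1)^2$ members. Since the number of maps $\Phi_{ij}$ equals the dimension of the space they live in, it suffices to show they are linearly independent; they then form a basis, and every $\mc{E}$ is a (unique) combination $\sum_{ij} c_{ij}\,F_i \bullet F_j$.

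First I would vectorize. Under the isomorphism $A \mapsto \mathrm{vec}(A)$ identifying $\mc{B}(\mc{H})$ with $\mathbb{C}^{d^2}$ and $\mc{B}[\mc{B}(\mc{H})]$ with $M(d^2,\mathbb{C})$, the standard identity $\mathrm{vec}(F_i A F_j) = (F_j^T \otimes F_i)\,\mathrm{vec}(A)$ shows that $\Phi_{ij}$ is represented by the matrix $F_j^T \otimes F_i$. Now $\{F_i\}_{i=0}^{J}$ is a basis of $M(d,\mathbb{C})$: it consists of $d^2$ elements that are linearly independent because they are orthonormal under the Hilbert–Schmidt inner product [\cref{eq:F-reqs}]. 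Since transposition is a linear isomorphism, $\{F_j^T\}_{j=0}^{J}$ is a basis as well, and tensor products of bases form a basis, so $\{F_j^T \otimes F_i\}_{i,j=0}^{J}$ is a basis of $M(d,\mathbb{C}) \otimes M(d,\mathbb{C}) = M(d^2,\mathbb{C})$. Transporting this back through the vectorization isomorphism shows the $\{\Phi_{ij}\}$ are linearly independent and hence span, giving the claimed representation.

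There is no serious obstacle here; the only points requiring care are fixing the column-stacking convention so that the Kronecker factor comes out as $F_j^T \otimes F_i$ (rather than $F_i \otimes F_j^T$), and noting that $\{F_j^T\}$ is genuinely a basis, which is immediate from transposition being invertible.

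As an alternative that avoids tensor-product machinery and instead leverages the completeness relation \cref{eq:coordinatization4} established earlier, I could exhibit the coefficients explicitly. Writing $[\mc{E}(A)]_{pq} = \sum_{s,t} E_{pq,st}\,A_{st}$ for the defining tensor of $\mc{E}$, I would set $c_{mn} = \sum_{p,q,s,t} (F_m)_{sp}\, E_{pq,st}\,(F_n)_{qt}$ and verify that $\sum_{ij} c_{ij}\,(F_i A F_j) = \mc{E}(A)$ by substituting this definition into the matrix elements and collapsing the sums over the basis indices with $\sum_m (F_m)_{sp}(F_m)_{s'p'} = \delta_{ss'}\delta_{pp'}$ [the content of \cref{eq:coordinatization4}]. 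In this route the computational heart is the repeated use of the orthonormality/completeness of $\{F_i\}$, and it has the bonus of producing an explicit inversion formula for the $c_{ij}$.
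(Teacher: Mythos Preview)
Your proposal is correct. In fact your \emph{alternative} route---writing $[\mc{E}(A)]_{pq}=\sum_{s,t}E_{pq,st}A_{st}$, defining $c_{mn}=\sum_{p,q,s,t}(F_m)_{sp}E_{pq,st}(F_n)_{qt}$, and collapsing with the completeness relation \cref{eq:coordinatization4}---is exactly the paper's proof, with only notational relabeling (the paper writes $\mathcal{E}_{klmn}=[\mc{E}(\ketb{l}{m})]_{kn}$ and arrives at $c_{ij}=\sum (F_i)_{l'k'}(F_j)_{n'm'}\mathcal{E}_{k'l'm'n'}$).

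Your \emph{primary} route, however, is genuinely different: a dimension count via vectorization, using $\mathrm{vec}(F_iAF_j)=(F_j^T\otimes F_i)\mathrm{vec}(A)$ and the fact that $\{F_j^T\otimes F_i\}$ is a basis of $M(d^2,\mathbb{C})$. This is arguably cleaner and has the bonus of immediately giving \emph{uniqueness} of the coefficients $c_{ij}$ (since you exhibit a basis, not merely a spanning set), whereas the paper's constructive argument yields the explicit inversion formula for $c_{ij}$ but does not by itself assert uniqueness. Either approach is fully adequate for the proposition as stated; the paper's choice dovetails with its later need for explicit coefficient formulas, while yours is the more structural argument.
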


This is \cite[Lemma 2.2]{Gorini:1976uq}.

It can also be seen directly by noting that any linear operator can be
represented by a matrix and, thus,
$\mathcal{E}(A)_{kn} = \sum_{l,m=1}^{d} \mathcal{E}_{klmn} A_{lm}$ for some
tensor $\mathcal{E}_{klmn}$. The tensor $\mathcal{E}_{klmn}$ can be seen as a
matrix in 2 ways: first, as a function of indices $k$ and $l$, and second, as a
function of indices $m$ and $n$. Applying ``coordinatization"
[\cref{eq:coordinatization1,eq:coordinatization2}] twice to $\mathcal{E}_{klmn}$ we get $c_{ij}$.

\begin{proof}
  Let $\mathcal{E}$ be any superoperator in $\mathcal{B}[\mathcal{B}(\mc{H})]$.
  Denoting $\mathcal{E}_{klmn} = \mathcal{E}(\ketb{l}{m})_{kn}$ and
  applying \cref{eq:coordinatization4} to 
  $\delta_{kk'} \delta_{ll'}$ and $\delta_{mm'} \delta_{nn'}$ for any $A \in \mathcal{B}(\mathcal{H})$ we get
  \bes
  \begin{align}
    &\mathcal{E}(A)_{kn} = \sum_{l,m=1}^{d}\mathcal{E}_{klmn}A_{lm} \\
    &\quad = \sum_{k',l',m',n',l,m=1}^{d}
      \delta_{kk'} \delta_{ll'} \delta_{mm'} \delta_{nn'}
      \mathcal{E}_{k'l'm'n'} A_{lm}  \\
    &\quad = \sum_{k',l',m',n',l,m=1}^{d}
     \sum_{i,j=0}^{J} F_{i\,kl} F_{i\,l'k'} F_{j\,mn} F_{j\,n'm'}
      \mathcal{E}_{k'l'm'n'} A_{lm} \\
    &\quad =  \sum_{i,j=0}^{J} c_{ij} (F_i A F_j)_{kn},
  \end{align}
  \ees
  where
  \begin{equation}
    c_{ij} = \sum_{k',l',m',n'=1}^{d} F_{i\,l'k'} F_{j\,n'm'} \mathcal{E}_{k'l'm'n'}.
  \end{equation}
\end{proof}

\section{Computation of the constant \texorpdfstring{$\theta^{\text{GinOE}}$}{θᴳⁱⁿᴼᴱ}}
\label{app:theta}

The constant $\theta^{\text{GinOE}}$ in \cref{eq:P-GinOE} can be computed as
\begin{equation}
    \theta^{\text{GinOE}} = \inf_{\rho\in \mathcal{S}_{{+}}} E(\rho) - \inf_{\rho\in \mathcal{S}} E(\rho),
\end{equation}
where
\bes
\begin{align}
    E(\rho) &= \frac12 \int_{\mathbb{C}} \left(
        \abs{z}^2 - \int_{\mathbb{C}} \ln\abs{z-w}\rho(w) d^2 w
      \right)\rho(z)d^2z,\\
    \mathcal{S} &= \Bigl\{\rho \in C_0(\mathbb{C} \to \mathbb{R}_{+}):
      \int_{\mathbb{C}}\rho(z) d^2z=1,\notag\\
      &\qquad \forall z\;\rho(z)=\rho(z^*)\Bigr\},\\
    \mathcal{S}_{{+}} &= \left\{\rho \in \mathcal{S}:
      \forall z\; \real(z) < 0 \Rightarrow \rho(z) = 0\right\},\\
    d^2 z &= d\real(z) d \imag(z) = \frac{i}{2} dzdz^*,
\end{align}
\ees
where $C_0(X\to Y)$ is the set of continuous functions from $X$ to $Y$ 
with compact support. One can then compute that
$\inf_{\rho\in \mathcal{S}} E(\rho) = 5/8 - \ln(2)/4$ and is achieved 
near $\rho(z) = \frac{1}{2\pi} \mathbb{I}(\abs{z}^2 < 2)$, where $\mathbb{I}$ is the indicator function. One can then try to argue that the $\inf_{\rho\in\mathcal{S}_{{+}}}E(\rho)$
is achieved for $\rho$ of the following form:
\begin{equation}
  \label{eq:rho.anzats}
  \rho(x+iy) = g(y)\delta(x) + \frac{1}{2\pi}\mathbb{I}(0\leq x \leq f(y)).
\end{equation}
Note that while the function \cref{eq:rho.anzats} is not continuous, it can be approximated by
a continuous function in $\mathcal{S}_{{+}}$ as soon as $f$ and $g$ are non-negative continuous functions with compact support and
\begin{equation}
    \int_{\mathbb{R}}\left(g(y) + \frac{f(y)}{2\pi}\right)dy=1.
\end{equation}
For such $\rho$ we have
\begin{multline}
  E(\rho) = \int_{\mathbb{R}} dy_1 \Biggl(\frac{f(y_1)^3}{12\pi} + \frac{y_1^2 f(y_1)}{4\pi} + g(y_1)\frac{y_1^2}{2}-\\
  \int_{\mathbb{R}} dy_2 \biggl(g(y_1)g(y_2)\ln\abs{y_1-y_2}/2 +\\
  \frac{1}{2\pi}g(y_2)I_1(f(y_1),\abs{y_2-y_1})+\\
  \frac{1}{8\pi^2} I_2(f(y_1),f(y_2),y_2-y_1) \biggr)
  \Biggr),
\end{multline}
where
\bes
\begin{align}
    I_1(x,y) &= \int_{0}^{x} dx_1 \ln\abs{x_1+iy},\\
    I_2(x_1,x_2,y) &= \int_{0}^{x_1} dx_3 \int_{0}^{x_2} dx_4 \ln\abs{x_2-x_4+iy}.
\end{align}
\ees
One can then solve this optimization problem numerically (or attempt to find an analytical solution) to find $\theta^{\text{GinOE}}$.

\end{document}